\definecolor{mylinkcolor}{rgb}{0,0,0.8} 
\newcommand{\comment}[1]{}
\newcommand{\ket}[1]{| #1 \rangle}
\newcommand{\bra}[1]{\langle #1 |}
\newcommand{\tr}{{\rm tr}}
\newcommand{\cP}{\mathcal{P}}
\newcommand{\Conv}{\mathrm{Conv}}
\newcommand{\chsh}{\mathrm{CHSH}}
\newcommand{\PR}{\mathrm{PR}}
\newcommand{\Cl}{\mathrm{C}}
\newcommand{\BC}{\mathrm{BC}}
\newcommand{\QM}{\mathrm{QM}}
\newcommand{\CG}{\mathrm{CG}}
\newcommand{\RL}{\mathrm{R}}
\newcommand{\E}{\mathrm{E}}
\newcommand{\iso}{\mathrm{iso}}
\newcommand{\mix}{\mathrm{mix}}
\newcommand{\noise}{\mathrm{noise}}
\newcommand{\NL}{\mathrm{NL}}
\newcommand{\Loc}{\mathrm{L}}
\DeclareMathOperator{\Tr}{Tr}
\theoremstyle{plain}
\newtheorem{theorem}{Theorem}
\newtheorem{conjecture}{Conjecture}
\newtheorem{lemma}[theorem]{Lemma}
\newtheorem{corollary}[theorem]{Corollary}
\newtheorem{proposition}{Proposition}
\theoremstyle{definition}
\newtheorem{definition}{Definition}
\newtheorem{remark}{Remark}
\begin{document}

\title{On the insufficiency of entropic inequalities for detecting non-classicality in the Bell causal structure}
\author{V. Vilasini}
\email{vv577@york.ac.uk}
\affiliation{Department of Mathematics, University of York, Heslington, York YO10 5DD.}

\author{Roger Colbeck}
\email{roger.colbeck@york.ac.uk}
\affiliation{Department of Mathematics, University of York, Heslington, York YO10 5DD.}
\date{$17^{\text{th}}$ July 2020}

\begin{abstract}
  Classical and quantum physics impose different constraints on the joint probability distributions of observed variables in a causal structure. These differences mean that certain correlations can be certified as non-classical, which has both foundational and practical importance.  Rather than working with the probability distribution itself, it can instead be convenient to work with the entropies of the observed variables. In the Bell causal structure with two inputs and outputs per party, a technique that uses entropic inequalities is known that can always identify non-classical correlations. Here we consider the analogue of this technique in the generalization of this scenario to more outcomes. We identify a family of non-classical correlations in the Bell scenario with two inputs and three outputs per party whose non-classicality cannot be detected through the direct analogue of the previous technique.  We also show that use of Tsallis entropy instead of Shannon entropy does not help in this case.  Furthermore, we give evidence that natural extensions of the technique also do not help.  More precisely, our evidence suggests that even if we allow the observed correlations to be post-processed according to a natural class of non-classicality non-generating operations, entropic inequalities for either the Shannon or Tsallis entropies cannot detect the non-classicality, and hence that entropic inequalities are generally not sufficient to detect non-classicality in the Bell causal structure.

  In addition, for the bipartite Bell scenario with two inputs and three outputs we find the vertex description of the polytope of non-signalling distributions that satisfy all of the CHSH-type inequalities, which is one of the main regions of investigation in this work.
\end{abstract}

\maketitle

\section{Introduction}
\label{sec:introduction}
Causal structures are a useful tool for understanding correlations between observed events.  Such correlations may be mediated by an influence travelling from one to the other, or come about due to common causes, which may not be observed.  The nature of any unobserved causes depends on the theory being considered.  For instance they may be classical, quantum or from a generalized probabilistic theory (GPT)~\cite{Barrett07}, and the kinds of observed correlations that are possible in general depends on this.  At the foundational level, studying the differences gives us insight into how the notion of causality differs between theories, while, on a practical level, these differences are crucial for applications in device-independent cryptography~\cite{Ekert91,MayersYao,BHK,RogerThesis,CK,Pironio2009}.

One way to establish a difference is to violate a \emph{Bell inequality}~\cite{Bell}, where we use the term to mean a necessary condition on the observed correlations when any unobserved systems are classical. Bell inequalities are often introduced using the (bipartite) Bell causal structure (see Figure~\ref{fig: Bell}(a)). Here there are four observed variables: $A$ and $B$ corresponding to the inputs of each party, and $X$ and $Y$ corresponding to the outputs.  In the case that the numbers of possible inputs are $i_A$ and $i_B$ and likewise the number of possible outputs are $o_A$ and $o_B$, we call the scenario the $(i_A,i_B,o_A,o_B)$ Bell scenario.  For the $(2,2,2,2)$ case, the CHSH inequalities~\cite{CHSH} are known to be the only class of Bell inequalities required to completely characterize the scenario (i.e., all extremal $2$-setting, $2$-outcome Bell inequalities are equivalent to the CHSH inequalities up to symmetry).  In the $(2,2,3,3)$ scenario, which will be the main focus of this paper, there is only one new class of Bell inequalities inequivalent to CHSH, the $I_{2233}$ class~\cite{Kaszlikowski2002, CGLMP02}. In other words, given a no-signalling distribution for the $(2,2,3,3)$ scenario, the distribution is local if and only if all the CHSH and $I_{2233}$ inequalities hold.

As $i_A$, $i_B$, $o_A$ and $o_B$ increase, many new classes of extremal Bell inequalities are found and these scenarios quickly become difficult to fully characterize~\cite{Pitowski89, Masanes2002, Bancal2010, Cope2019}. One attempt at avoiding this difficulty is to move away from probability space to instead consider inequalities expressed in terms of the entropies of the variables involved. There are two ways that these can be used: either directly using the causal structure under consideration, or by using the post-selection technique in which the original causal structure is first modified (more details can be found later in this paper). Braunstein and Caves~\cite{BraunsteinCaves88} were the first to derive an entropic Bell inequality. They considered the post-selected version of the Bell causal structure shown in Figure~\ref{fig: Bell}(b) and found entropic inequalities that hold for all classical distributions. These can be violated when one or more of the unobserved nodes are quantum, and hence behave like entropic versions of Bell inequalities.

Although entropic inequalities can be useful as a way to detect non-classicality, the entropic approach has several disadvantages. For instance, in the bipartite Bell causal structure without post-selection the set of achievable Shannon entropies over the observed variables for classical and quantum causes coincide~\cite{Weilenmann16}, so without post-selecting, non-classicality cannot be detected through entropic Bell inequalities in this case. The use of other entropic measures such as Tsallis entropies to analyse this problem in the absence of post-selection has also been shown to have limitations~\cite{Vilasini2019} and no quantum violations are known for the entropic Bell inequalities derived in~\cite{Vilasini2019}.  Because of this, we focus on post-selected causal structures in this paper.

\begin{figure}[t!]
\centering
	\subfloat[]{\includegraphics[scale=1.0]{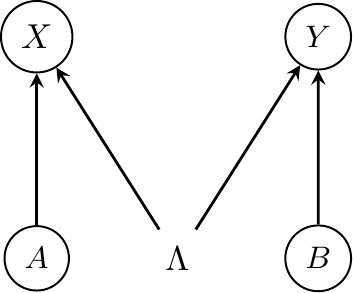}}\qquad\qquad\qquad\qquad\qquad\qquad
\subfloat[]{\includegraphics[scale=1.0]{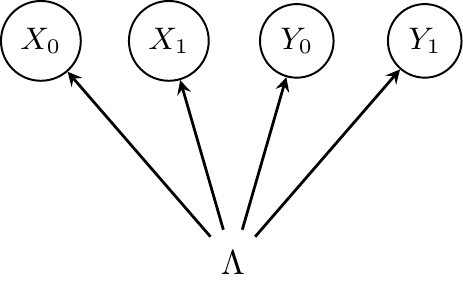}}
\caption{(a) The bipartite Bell causal structure. The nodes $A$ and
  $B$ represent the random variables corresponding to independently
  chosen inputs, while $X$ and $Y$ represent the random
  variables corresponding to the outputs. $\Lambda$ is an unobserved
  node representing the common cause of $X$ and $Y$. (b) The
  post-selected Bell causal structure for two parties. The observed
  nodes $X_a$ represent the outputs when the input is $a\in\{0,1\}$
  and likewise for $Y_b$. Note that $X_0$ and $X_1$ are never
  simultaneously observed and likewise $Y_0$ and $Y_1$.}\label{fig: Bell}
\end{figure}

It is natural to ask whether the non-classicality of a distribution can always be detected through post-selected entropic inequalities. For the $(d,d,2,2)$ Bell scenarios with $d\geq 2$, this is known to be the case~\cite{Chaves13} in the following sense.  For every non-classical distribution in the $(d,d,2,2)$ Bell scenario, there is a transformation that does not make any non-classical distribution classical, and such that the resulting distribution violates one of the BC entropic inequalities. The main purpose of this work is to investigate whether a similar result holds for non-binary outcomes. To do so, we need to specify a class of post-processing operations. The most general operations that we could consider are the \emph{non-classicality non-generating (NCNG)} operations, i.e., those that do not map any classical distribution to a non-classical one.  An interesting subset of these is the class of post-processings achievable through \emph{local operations and shared randomness (LOSR)}, which are physical in the sense that two separated parties with shared randomness could perform them\footnote{In general NCNG operations used on the correlations prior to evaluating an entropic inequality need not be physical in this sense.}. Because of the difficulty of dealing with arbitrary NCNG operations, for the majority of our analysis we consider LOSR supplemented with the additional (NCNG) operation where the parties are exchanged (and convex combinations). We use LOSR+E to refer to this supplemented set.

We study the $(2,2,3,3)$ Bell scenario with LOSR+E post-processing operations, to see whether when applied to any non-classical distribution the result violates an entropic Bell inequality. We investigate this using both Shannon and Tsallis entropies. Our motivation for considering Tsallis entropies is that they are known to provide an advantage over the Shannon entropy in detecting non-classicality in the absence of post-processing~\cite{Wajs15} in the sense that there are non-classical distributions that violate Tsallis entropic inequalities but not the analogous Shannon-entropic inequality. In the $(2,2,2,2)$ case however, due to the result of \cite{Chaves13}, this advantage is less apparent when post-processings are considered. It is unclear whether or not this is also the case for the $(2,2,3,3)$ scenario, and hence we consider Tsallis entropies in this work.

The result of~\cite{Chaves13} that Shannon inequalities can always be used in the case that CHSH is violated readily extends to the $(2,2,3,3)$ scenario (cf.\ Corollary~\ref{coroll:chaves}). Thus, of most interest to us is the region containing non-classical distributions that satisfy all the CHSH-type inequalities.  After finding the vertex description of this region, we show that some distributions in this region violate entropic inequalities and conjecture that there are others that cannot violate either Shannon or Tsallis entropic inequalities after processing with LOSR+E. For our conjecture, we consider a class of isotropic non-classical distributions in the $(2,2,3,3)$ scenario and give numerical evidence that arbitrary Shannon entropic inequalities and a class of Tsallis entropic inequalities cannot detect the non-classicality of these distributions under any processing with LOSR+E. If correct, our conjecture implies that the method of~\cite{Chaves13} for the $(2,2,2,2)$ scenario does not work for all non-classical distributions in the $(2,2,3,3)$ scenario. That said, it remains a useful technique in many cases.

The structure of the remainder of the paper is as follows. After introducing our notation and reviewing some existing work in Section~\ref{sec: prelim}, we proceed to investigate the $(2,2,3,3)$ scenario. We present new results for this scenario in probability space (Section~\ref{sec: 2233prob}) as well as in entropy space (Section~\ref{sec: 2233ent}). Finally, in Section~\ref{sec: disc&conc}, we conclude and discuss some open questions.  These results, along with those of~\cite{Weilenmann16, Vilasini2019}, highlight some of the limitations of the entropic approach to analysing causal structures.

\section{Preliminaries}
\label{sec: prelim}
\subsection{Probability distributions and entropy}
\label{ssec: prob_ent}
We begin with some notation.  Given a conditional probability
distribution $p_{XY|AB}$ where $A$, $B$, $X$ and $Y$ have
cardinalities $i_A$, $i_B$, $o_A$ and $o_B$ respectively, we can
express the distribution using a matrix.  For instance, in the case
where all the variables take values in $\{0,1\}$ and using $p(xy|ab)$
as an abbreviation for $p_{XY|AB}(xy|ab)$, this is done as
\begin{equation}
\label{eq: dist}
    p_{XY|AB}=
\begin{array}{ |c|c|} 
 \hline
 p(00|00) \quad p(01|00) & p(00|01) \quad p(01|01)\\ 
 p(10|00) \quad p(11|00) & p(10|01) \quad p(11|01)\\ 
 \hline
 p(00|10) \quad p(01|10) & p(00|11) \quad p(01|11)\\ 
 p(10|10) \quad p(11|10) & p(10|11) \quad p(11|11)\\ 
 \hline
\end{array}
\end{equation}
and the generalisation to larger alphabets is analogous (see,
e.g.,~\cite{Cirelson93}).  This format is convenient because it makes
it easy to check whether a distribution is no-signalling, i.e., to check
that $p_{X|AB}$ is independent of $B$ and that $p_{Y|AB}$ is
independent of $A$.

In a given scenario we will be interested in the set of distributions that are non-signalling, and the set of classical (local) distributions. These sets form convex polytopes that are highly symmetric. In particular, such polytopes are invariant under local relabellings and/or relabelling parties. By \emph{local relabellings} we mean combinations of relabelling the inputs (e.g., $A\mapsto A\oplus 1$) and outputs conditioned on the local input (e.g., $X\mapsto X\oplus\alpha A\oplus\beta$ where $\alpha, \beta \in \{0,1\}$ and $\oplus$ denotes modulo-2 addition). One might also think about more general 
\emph{global relabellings} that depend on both inputs (for instance maps of the form $X\mapsto X\oplus\alpha A\oplus\beta B\oplus\gamma$ with $\alpha, \beta, \gamma \in \{0,1\}$), but these do not preserve the no-signalling set in general so will not be considered here. The only global relabelling we consider is exchange of the two parties, which corresponds to transposing the distribution in the matrix notation of Equation~\eqref{eq: dist} and always preserves no-signalling.

Given a random variable $X$, distributed according to the discrete probability distribution $p_x:=p_X(x)$ the Shannon entropy is defined by\footnote{Note that we choose to use
  the natural logarithm in this work, in contrast to the usual choice
  of using base 2, in order to make the relation to the Tsallis
  entropy more straightforward.}
$H(X)=-\sum_{\{x:p_x>0\}} p_x\ln p_x$.  Given two random variables, $X$ and
$Y$, the conditional Shannon entropy is defined by
$H(X|Y)=-\sum_{\{xy:p_{xy}>0\}}p_{xy}\ln p_{x|y}$. The following
properties hold:
\begin{itemize}
    \item \textbf{P1} Monotonicity: $H(X)\leq H(XY)$.
    \item \textbf{P2} Strong-subadditivity: $H(XY)+H(YZ)\geq H(XYZ)+H(Y)$.
    \item \textbf{P3} Chain rule: $H(X|Y)=H(XY)-H(Y)$.
\end{itemize}

The order $q$ Tsallis entropy of $X$ for a real parameter $q$ is defined as~\cite{Tsallis1988}
\begin{equation}
\label{eq: tsallis1}
S_q(X)=\begin{cases}-\sum_{\{x:p_x>0\}}p_x^q\ln_q p_x &q\neq 1
  \\H(X)&q=1\end{cases}\,.
\end{equation}
In this expression we have used the $q$-logarithm function $\ln_q p_x=\frac{p_x^{1-q}-1}{1-q}$, which converges to the natural logarithm function as $q\to 1$.  This means that $\lim\limits_{q\rightarrow 1}S_q(X)=H(X)$ and hence that $S_q(X)$ is continuous in $q$. We henceforth write $\sum_x$ instead of $\sum_{\{x:p_x>0\}}$ with the implicit understanding that probability zero events are excluded from the sum.

The Tsallis entropies for $q\geq1$ satisfy many of the same properties as the Shannon entropy.  In particular, monotonicity, strong subadditivity and chain rule all hold for Tsallis entropies for all $q\geq1$~\cite{Daroczy1970, Furuichi04}, making these polymatroids like the Shannon entropy.

\subsection{Causal structures}\label{ssec: PS}
The relationships between different variables of interest can be conveniently expressed as a causal structure.  This is a \emph{directed acyclic graph} (DAG) where the observed variables are nodes, and there may be additional nodes representing unobserved systems. Given such a causal structure, we distinguish the cases where the hidden systems are classical, quantum or are from some generalized probabilistic theory. For every classical causal structure that has at least one parentless observed node, a post-selected causal structure can be defined. The general technique for doing this can be found in~\cite{Weilenmann20170483} (for example).

In this work we will only consider the Bell causal structure with two inputs per party and the post-selected version thereof (see Figure~\ref{fig: Bell}). The post-selected causal structure is obtained by removing the parentless observed nodes $A$ and $B$ in the original causal structure~\ref{fig: Bell}(a) and replacing the descendants $X$ and $Y$ with two copies of each i.e., $X_{A=0}$, $X_{A=1}$, $Y_{B=0}$, $Y_{B=0}$ such that the original causal relations are preserved and there is no mixing between the copies (this is shown in Figure~\ref{fig: Bell}(b)).  It makes sense to do this in the classical case because classical information can be copied, so we can simultaneously consider the outcome $X$ given $A=0$ and that given $A=1$.  By contrast, in the quantum case the values of $A$ correspond to different measurements that are used to generate $X$, and the associated variables $X_{A=0}$ and $X_{A=1}$ may not co-exist. It hence does not make sense to consider a joint distribution over $X_{A=0}$ and $X_{A=1}$ in this case.  We therefore only consider the subsets of the observed variables that co-exist
\begin{equation}
\label{eq: coexisting}
 \mathcal{S}:=\{X_0,\: X_1,\: Y_0,\: Y_1,\: X_0Y_0,\: X_0Y_1,\: X_1Y_0,\: X_1Y_1\} ,  
\end{equation}
where we use the short form $X_0Y_0$ for the set $\{X_0,Y_0\}$ etc.
Any non-trivial inequalities derived for the co-existing sets in the classical case can admit quantum or GPT violations.

\subsection{The (2,2,2,2) Bell scenario in probability space}
\label{ssec: 2222prob}

For the bipartite Bell causal structure of Figure~\ref{fig: Bell}(a), the set of all observed distributions $p_{XY|AB}$ that can arise when $\Lambda$ is classical corresponds to the set of correlations that admit a local hidden variable model, i.e., the set of distributions that have the form
 \begin{equation}
 \label{eq: loccaus}
 p_{XY|AB}=\int_{\Lambda}\mathrm{d}\Lambda\, p_{\Lambda}\, p_{A}\, p_{B}\, p_{X|A\Lambda}\, p_{Y|B\Lambda}.
 \end{equation}
 In this work we will refer to such correlations either as \emph{local} or as \emph{classical} and denote the set of all such distributions $\mathcal{L}$. We also use $\mathcal{L}^{(2,2,2,2)}$ to denote the local distributions in the $(2,2,2,2)$ case (and analogously for other cases).
 
 The set of local correlations form a convex polytope, which can be specified in terms of a finite set of Bell inequalities, each a necessary condition for classicality.  In the $(2,2,2,2)$ case, there are eight extremal Bell inequalities (facets of the local polytope).  One has the form
\begin{equation}
\label{eq: CHSH}
    I_{\chsh}:=p(X=Y|A=0,B=0)+p(X=Y|A=0,B=1)+p(X=Y|A=1,B=0)+p(X\neq
    Y|A=1,B=1)\leq 3
\end{equation}
and the other seven are equivalent under local
relabellings~\cite{Cirelson93}. We denote these by $I_{\chsh}^k$ for
$k\in [8]$, where $I_{\chsh}^1:=I_{\chsh}$ and $[n]$ stands for the
set $\{1,2,....,n\}$ where $n$ is a positive integer. This provides
the facet description of the $(2,2,2,2)$ local polytope.  One can also express $I_{\chsh}$ in matrix form using
\begin{equation}\label{eq:mat1}
    M_{\chsh}=\begin{array}{|cc|cc|}\hline 1&0&1&0\\0&1&0&1\\\hline1&0&0&1\\0&1&1&0\\\hline\end{array}\,,
\end{equation}
so that the Bell inequality can be written $\tr\left(M_{\chsh}^T P\right)\leq3$, where $P$ is the matrix form of the distribution and $T$ denotes the transpose.

In the vertex picture, the $(2,2,2,2)$ local polytope has 16 local deterministic vertices and the $(2,2,2,2)$ non-signalling polytope shares the vertices of the local polytope and has eight more: the Popescu-Rohrlich (PR) box and seven distinct local relabellings~\cite{Cirelson93,PR}. The PR box distribution satisfies $X\oplus Y=A.B$ and has the form
\begin{equation}
\label{eq: PR}
    p_{\PR}=
\begin{array}{ |c|c|} 
 \hline
 \frac{1}{2} \quad 0 & \frac{1}{2} \quad 0\\ 
 0 \quad \vphantom{\frac{1}{f}}\frac{1}{2} & 0 \quad \vphantom{\frac{1}{f}}\frac{1}{2}\\
 \hline
 \frac{1}{2} \quad 0 & 0 \quad \frac{1}{2}\\ 
 0 \quad \vphantom{\frac{1}{f}}\frac{1}{2} & \vphantom{\frac{1}{f}}\frac{1}{2} \quad 0\\ 
 \hline
\end{array}\,.
\end{equation}
We denote the eight extremal non-signalling vertices equivalent under local relabellings to $p_{\PR}$ by $p_{\PR}^k$, $k\in [8]$ where $p_{\PR}^1:=p_{\PR}$. Note that the 8 CHSH inequalities $\{I_{\chsh}^k\}$ are in one-to-one correspondence with these 8 extremal non-signalling points i.e., each $p_{\PR}^k$ violates exactly one CHSH inequality and each CHSH inequality is violated by exactly one $p_{\PR}^k$.

\subsection{Entropic inequalities and post-selection}
\label{ssec: entineq}
In \cite{BraunsteinCaves88}, Braunstein and Caves derived a set of
constraints on the post-selected causal structure of Figure~\ref{fig:
  Bell}(b) and showed that these constraints can be violated by
quantum correlations.  To discuss these we introduce the notion
of \emph{entropic classicality}. For every distribution $p_{XY|AB}$ in the Bell causal structure (Figure~\ref{fig: Bell}(a)), we can associate an entropy vector $v\in\mathbb{R}^{8}$ in the post-selected causal structure (Figure~\ref{fig: Bell}(b)) whose components are the entropies of each element of the set $\mathcal{S}$ (Equation~\eqref{eq: coexisting}) distributed according to $p_{X_aY_b}:=p_{XY|A=a,B=b}$. Let $\mathbf{H}$ be
the map that takes the observed distribution to
its corresponding entropy vector in the post-selected causal structure. 

\begin{definition}[Entropic classicality]
\label{definition: entclass}
An entropy vector $v\in\mathbb{R}^8$ is \emph{classical} with respect to the bipartite Bell causal structure (Figure~\ref{fig: Bell}(a)) if there exists a classical distribution $p_{XY|AB}\in\mathcal{L}$ such that $\mathbf{H}(p_{XY|AB})=v$. Further, a distribution $p_{XY|AB}$ is \emph{entropically classical} if there exists a classical distribution with the same entropy vector, i.e., if there exists a classical entropy vector $v$ such that $\mathbf{H}(p_{XY|AB})=v$.
\end{definition}
 
The set of all classical entropy vectors forms a convex cone. The distribution $p_{\PR}$ (Equation~\eqref{eq: PR}) is an example of a nonclassical distribution that is entropically classical (see Section~\ref{ssec: 2222}).

We now review how the Braunstein Caves (BC) Inequalities are derived for the case when the observed parentless nodes $A$ and $B$ are binary. In this case, the post-selected causal structure~\ref{fig: Bell}(b) imposes no additional constraints on the distribution (or entropies) of the observed nodes $X_0$, $X_1$, $Y_0$ and $Y_1$ because they share a common parent and thus any joint distribution over $X_0$, $X_1$, $Y_0$ and $Y_1$ can be realised in the causal structure~\ref{fig: Bell}(b). By contrast, any correlations in the original causal structure~\ref{fig: Bell}(a) must obey the no-signalling constraints over the observed nodes $A$, $B$, $X$ and $Y$ since $A$ does not influence $Y$ and $B$ does not influence $X$ in this causal structure. The inequalities derived by Braunstein and Caves follow by applying Properties~\textbf{P1}-\textbf{3} to the variables $\{X_0,X_1,Y_0,Y_1\}$. The derived relations hold for the classical causal structure (and not necessarily for the quantum and GPT cases) because only in the classical case does it make sense to consider a joint distribution over these four variables that in the quantum and GPT cases do not co-exist (cf.\ Section~\ref{ssec: PS}). It is worth remarking that without post-selection, no quantum-violatable entropic constraints exist for this causal structure~\cite{Weilenmann16}. The BC inequalities are entropic Bell inequalities i.e., they hold for every classical entropy vector in the post-selected causal structure~\ref{fig: Bell}(b).  There are four BC inequalities
\begin{widetext}
\begin{equation}
\label{eq: BCineqs}
    \begin{split}
        I_{\BC}^1:=H(X_0Y_0)+H(X_1)+H(Y_1)-H(X_0Y_1)-H(X_1Y_0)-H(X_1Y_1)\leq 0\\
        I_{\BC}^2:=H(X_0Y_1)+H(X_1)+H(Y_0)-H(X_0Y_0)-H(X_1Y_0)-H(X_1Y_1)\leq 0\\
        I_{\BC}^3:=H(X_1Y_0)+H(X_0)+H(Y_1)-H(X_0Y_0)-H(X_0Y_1)-H(X_1Y_1)\leq 0\\
        I_{\BC}^4:=H(X_1Y_1)+H(X_0)+H(Y_0)-H(X_0Y_0)-H(X_0Y_1)-H(X_1Y_0)\leq 0\\
    \end{split}
\end{equation}
\end{widetext}
It has been shown in~\cite{Fritz13} that these four
inequalities are complete in the following sense (the lemma below is
implied by Corollary~V.3 in~\cite{Fritz13}).
\begin{lemma}\label{lem:BCcomplete}
  A distribution in the postselected Bell scenario with binary $A$ and $B$ is entropically classical if and only if it satisfies the four BC inequalities~\eqref{eq: BCineqs}.
\end{lemma}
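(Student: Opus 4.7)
The lemma combines a forward implication (entropic classicality implies the four BC inequalities) which is the original Braunstein--Caves derivation, with a converse (BC implies entropic classicality) which is a completeness statement. I would treat the two directions separately.

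For necessity, let $p_{XY|AB}$ be entropically classical, so that some $p'\in\mathcal{L}$ has the same entropy vector. By the hidden-variable form~\eqref{eq: loccaus}, for each value of $\Lambda$ one may independently generate all four outputs $X_0,X_1,Y_0,Y_1$, and marginalising over $\Lambda$ yields an honest joint distribution on the four variables on which every Shannon entropy satisfies P1--P3. From P1--P3 one first derives the triangle inequality $H(U|W)\leq H(U|V)+H(V|W)$, and applying it twice around the 4-cycle $X_a\to Y_{b'}\to X_{a'}\to Y_b\to X_a$ gives $H(X_a|Y_b)\leq H(X_a|Y_{b'})+H(Y_{b'}|X_{a'})+H(X_{a'}|Y_b)$; expanding each conditional entropy via P3 rearranges to $H(X_aY_b)+H(X_{a'})+H(Y_{b'})\leq H(X_{a'}Y_b)+H(X_{a'}Y_{b'})+H(X_aY_{b'})$, which is $I_{\BC}^k$ for the appropriate $k$. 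The four choices of $(a,b,a',b')$ with $a\neq a'$, $b\neq b'$ deliver exactly the four BC inequalities. Since only the entropy vector enters, the inequalities transfer from $p'$ to $p$.

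For sufficiency the plan is a two-step cone argument. The post-selected causal structure of Figure~\ref{fig: Bell}(b) with binary inputs places no constraint among $X_0,X_1,Y_0,Y_1$ in the classical case beyond the existence of some joint distribution, so the classical entropic cone in $\mathbb{R}^8$ is exactly the image of the four-variable entropy cone under the coordinate projection $\pi:\mathbb{R}^{15}\to\mathbb{R}^8$ that keeps only the entropies indexed by $\mathcal{S}$. I would bound this image from above by $\pi(\Gamma_4)$, where $\Gamma_4$ is the Shannon polymatroid cone on four variables (cut out by monotonicity and strong subadditivity), and compute the facet description of $\pi(\Gamma_4)$ by Fourier--Motzkin elimination of the coordinates indexed by the non-coexisting subsets $X_0X_1$, $Y_0Y_1$, $X_0X_1Y_0$, etc. I expect the elimination to produce precisely the trivial polymatroid inequalities among the eight $\mathcal{S}$-entropies together with the four BC inequalities, so that any $v\in\mathbb{R}^8$ satisfying the BC inequalities lies in $\pi(\Gamma_4)$. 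Second, I would realise every extreme ray of the projected cone by an explicit classical Bell distribution of small support on $(X_0,X_1,Y_0,Y_1)$; by conic combinations (taking independent copies and mixtures) every such $v$ is then attained by some $p\in\mathcal{L}$, establishing entropic classicality.

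The hard part is the second step. For four random variables, $\Gamma_4$ strictly overestimates the closure of the true entropy cone (Zhang--Yeung non-Shannon inequalities), so a priori $\pi(\Gamma_4)$ could be strictly larger than the classical entropic cone and the Fourier--Motzkin calculation alone would not suffice. The completeness statement therefore relies on the non-trivial fact that no non-Shannon inequality in four variables survives projection onto the $\mathcal{S}$-coordinates, which I would certify by classifying the finitely many extreme rays of $\pi(\Gamma_4)$ (they come in a symmetric family under input/party relabellings) and exhibiting a small-support classical realisation of each; closure under conic combination then finishes the proof.
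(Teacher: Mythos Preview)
The paper does not give its own proof of this lemma at all: it is stated as a consequence of Corollary~V.3 of Fritz and Chaves~\cite{Fritz13}, so there is no in-paper argument to compare against. What you have sketched is essentially the Fritz--Chaves proof---Fourier--Motzkin elimination of the non-coexisting coordinates from the four-variable Shannon cone to obtain the BC inequalities as the only nontrivial facets of $\pi(\Gamma_4)$, followed by explicit entropic realisation of the extreme rays of that projection---so your outline is on target and matches the cited source.

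One imprecision is worth flagging. In your step~(c) you say ``by conic combinations (taking independent copies and mixtures) every such $v$ is then attained.'' Independent products do give addition of entropy vectors, but \emph{mixtures} of distributions do not give scalar multiples of entropy vectors, so that is not how one fills out the rays. The correct mechanism is that each extreme ray of $\pi(\Gamma_4)$ corresponds to a simple functional dependency pattern among $X_0,X_1,Y_0,Y_1$ (e.g.\ all four equal, or two equal pairs), and within that pattern one can continuously vary the marginal distribution to hit every point on the ray; combining this with additivity under products then covers the whole cone exactly, not merely its closure. This is also what resolves your stated worry about non-Shannon inequalities: one shows directly that every extreme ray of $\pi(\Gamma_4)$ is entropically realised, so the non-Shannon constraints on $\Gamma_4^*$ become irrelevant after projection to the $\mathcal{S}$-coordinates.
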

It turns out that in the $(2,2,2,2)$ Bell scenario, non-classical distributions that do not violate the BC inequalities can be made to do so
with some additional post-processing, as shown
in~\cite{Chaves13}. We review this result below before analysing the
same question in the $(2,2,3,3)$ scenario.

\subsection{Detecting non-classicality in the (2,2,2,2) Bell scenario in entropy space}\label{ssec: 2222}
 
The current section summarises the relevant results of~\cite{Chaves13}
regarding the sufficiency of entropic inequalities in the $(2,2,2,2)$
scenario. As previously mentioned, it is possible for a non-classical
distribution to have the same entropy vector as a classical one and hence to
be entropically classical. For example, the maximally non-classical
distribution in probability space, $p_{\PR}$ (Equation~\eqref{eq: PR})
is entropically classical since it has the same entropy vector as the
classical distribution
\begin{equation}
    \label{eq: Pc}
     p_{\Cl}=
\begin{array}{ |c|c|} 
 \hline
 \frac{1}{2} \quad 0 & \frac{1}{2} \quad 0\\ 
 0 \quad \vphantom{\frac{1}{f}}\frac{1}{2} & 0 \quad \vphantom{\frac{1}{f}}\frac{1}{2}\\
 \hline
 \frac{1}{2} \quad 0 & \frac{1}{2} \quad 0\\ 
 0 \quad \vphantom{\frac{1}{f}}\frac{1}{2} & 0 \quad \vphantom{\frac{1}{f}}\frac{1}{2}\\
 \hline
\end{array}
\end{equation}
and hence cannot violate any of the BC inequalities\footnote{$p_{\PR}$ and $p_{\Cl}$
  are related by a permutation of the entries in the bottom right
  $2\times 2$ block and entropies are invariant under such
  permutations.}. However, the distribution
$\frac{1}{2}p_{\PR}+\frac{1}{2}p_{\Cl}$ maximally violates $I_{\BC}^4\leq0$
attaining a value of $\ln 2$. That convex mixtures of non-violating distributions can lead to a violation is due to the
fact that entropic inequalities are non-linear in the underlying
probabilities (in contrast to the facet Bell inequalities in
probability space).

In~\cite{Chaves13}, it was shown that such a procedure is possible for every non-classical distribution in the $(d,d,2,2)$ Bell scenario with $d\geq 2$ i.e., for every such distribution, there exists an LOSR transformation such that the resultant distribution violates a Shannon entropic BC inequality~(\ref{eq: BCineqs}).  Thus, non-classicality can be detected in this scenario by processing the observed correlations (in a way that cannot generate non-classicality) before using a BC entropic inequality on the result.  In this sense the BC entropic inequalities provide a necessary and sufficient test for non-classicality in these scenarios.

In more detail, for the $(2,2,2,2)$ case this works as follows.  First, one defines a special class of distributions, \emph{isotropic distributions}, as follows for some $k\in [8]$ and $\epsilon \in [0,1]$.
\begin{equation}
   p_{\iso}^k=\epsilon p_{\PR}^k+(1-\epsilon)p_{\noise},
\end{equation}
where $p_{\noise}$ is white noise i.e., the distribution with all entries equal to $1/4$. In the $(2,2,2,2)$ Bell scenario the isotropic distribution $p_{\iso}^k$ is non-classical if and only if $\epsilon>1/2$. The LOSR transformation used in~\cite{Chaves13} involves first transforming the observed distribution into an isotropic distribution through a local depolarisation procedure that cannot generate non-classicality. Second, it is shown that for any non-classical isotropic distribution i.e., a $p_{\iso}^k$ with $\epsilon>1/2$, there exists a classical distribution $p_{\Cl}^k$ such that the distribution $p_{\E,v}^k=vp_{\iso}^k+(1-v)p_{\Cl}^k$ violates one of the BC entropic inequalities for sufficiently small $v>0$. In particular, the value of $I_{\BC}^k$ for $p_{\E,v}^k$ can be expanded for small $v$ as
\begin{equation}
\label{eq: BCexpand2}
    I_{\BC}^k \approx \frac{v}{\ln 4}[f(\epsilon)-(4\epsilon-2)\ln v],
\end{equation}
where $f(\epsilon)$ is a function of $\epsilon$, independent of $v$
(see~\cite{Chaves13} for details). Thus for any $\epsilon>1/2$, the
corresponding isotropic distributions are non-classical and taking $v$
arbitrarily small can make $I_{\BC}^k$ positive which is a violation
of the entropic inequality. We summarise the main result
of~\cite{Chaves13} for $(2,2,2,2)$ Bell scenarios in the following
Theorem (which is implicit in~\cite{Chaves13}).
\begin{theorem}
\label{theorem: chaves}
For every non-classical distribution, $p_{XY|AB}$ in the $(2,2,2,2)$ Bell scenario, there exists an LOSR transformation $\mathcal{T}$, such that $\mathcal{T}(p_{XY|AB})$ violates one of the BC entropic inequalities~\eqref{eq: BCineqs}.
\end{theorem}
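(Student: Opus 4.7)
The plan is to construct the desired LOSR transformation $\mathcal{T}$ as a composition of two stages and then verify an entropic expansion that makes the resulting BC violation explicit. Since $p_{XY|AB}$ is non-classical, it violates at least one of the eight CHSH-type facet inequalities $I_{\chsh}^k \leq 3$; fix such a $k$. By the bijection between these inequalities and the extremal non-signalling vertices $p_{\PR}^k$ recorded in Section~\ref{ssec: 2222prob}, it suffices to treat one value of $k$, the others following by composition with local relabellings.

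The first stage of $\mathcal{T}$ is a local depolarisation (a ``twirl'' in probability space) that maps $p_{XY|AB}$ to the isotropic distribution $p_{\iso}^k = \epsilon\, p_{\PR}^k + (1-\epsilon)\, p_{\noise}$ with the same CHSH value. Concretely, one averages over a shared-randomness convex combination of local input/output relabellings lying in the stabiliser group of $p_{\PR}^k$; each individual relabelling is a local classical operation, so the overall map is LOSR. Because the CHSH functional is linear and invariant under this averaging, and since $I_{\chsh}^k(p_{\iso}^k) = 2 + 2\epsilon$, the assumption $I_{\chsh}^k(p_{XY|AB}) > 3$ forces $\epsilon > 1/2$.

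The second stage mixes the result with the classical distribution $p_{\Cl}^k$ (the relabelled version of~\eqref{eq: Pc}) to obtain $p_{\E,v}^k = v\, p_{\iso}^k + (1-v)\, p_{\Cl}^k$, where $v \in (0,1]$ is a parameter to be tuned small. This mixing is again LOSR, implemented by a shared coin that selects between the two component strategies. The composite $\mathcal{T}$ is therefore LOSR; it remains to compute $I_{\BC}^k(p_{\E,v}^k)$ and to choose $v$ small enough to guarantee a violation.

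The main obstacle is establishing the entropic expansion~\eqref{eq: BCexpand2}. Each joint probability entry of $p_{\E,v}^k$ has the form $(1-v)\, r + v\, q$, where $r$ is an entry of $p_{\Cl}^k$ (either $0$ or $1/2$) and $q>0$ is the corresponding entry of $p_{\iso}^k$. Entries with $r = 1/2$ depend analytically on $v$ near $v = 0$ and contribute an $O(v)$ term to each entropy, while entries with $r = 0$ equal $vq$ and contribute $-vq\ln(vq) = -vq\ln v - vq\ln q$, producing a $-v\ln v$ singularity. Summing these contributions across the six entropies in $I_{\BC}^k$ with the appropriate signs, and using that $p_{\Cl}^k$ and $p_{\PR}^k$ agree on three of the four input blocks and differ only in the one block singled out by $I_{\chsh}^k$, one obtains the claimed expansion with $\ln v$-coefficient $-(4\epsilon-2)/\ln 4$. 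Since this coefficient is strictly negative for $\epsilon > 1/2$ and $-\ln v \to +\infty$ as $v \to 0^+$, the bracket $f(\epsilon) - (4\epsilon-2)\ln v$ is positive for all sufficiently small $v > 0$, whence $I_{\BC}^k(p_{\E,v}^k) > 0$ and $\mathcal{T}$ has the required property.
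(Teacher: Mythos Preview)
Your proposal is correct and follows essentially the same two-stage approach the paper summarises from~\cite{Chaves13}: first depolarise to an isotropic distribution $p_{\iso}^k$ (preserving the CHSH value, hence $\epsilon>1/2$), then mix with $p_{\Cl}^k$ and invoke the small-$v$ expansion~\eqref{eq: BCexpand2} to exhibit a BC violation. Your sketch of how the $-(4\epsilon-2)\ln v$ term arises from the $r=0$ entries of $p_{\Cl}^k$ is exactly the mechanism behind that expansion.
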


One of the aims of the present paper is to study whether this result
extends to the case where the number of outcomes per party is more
than two. In general, the $(2,2,d,d)$ Bell polytope for $d>2$ has new,
distinct classes of Bell inequalities and extremal non-signalling
vertices other than the CHSH inequalities and the PR boxes. In the
following, we analyse this problem for the $d=3$ case, for which it is
helpful to first describe the $(2,2,3,3)$ scenario in probability space.

\subsection{The (2,2,3,3) Bell scenario in probability space}
\label{ssec: 2233rev}
In the $(2,2,3,3)$ Bell scenario, there are two classes of Bell inequalities that completely characterize the local polytope: the CHSH inequalities and the $I_{2233}$ inequalities~\cite{Kaszlikowski2002, CGLMP02} of which a representative example is
\begin{widetext}
\begin{equation}
\label{eq: CGLMP}
\begin{split}
    I_{2233}:=\ &[p(X=Y|A=0,B=1)+p(X=Y-1|A=1,B=1)+p(X=Y|A=1,B=0)+p(X=Y|A=0,B=0)]\\
    &-[p(X=Y-1|X=0,B=1)+p(X=Y|A=1,B=1)+p(X=Y-1|A=1,B=0)\\
    &\ \ \ +p(X=Y+1|A=0,B=0)]\leq 2\,,
\end{split}
\end{equation}
\end{widetext}
where all the random variables take values in $\{0,1,2\}$ and all additions and subtractions of the random variables are modulo 3.  In matrix form a representative CHSH-type inequality and $I_{2233}$ are
\begin{equation}\label{eq:mat2}
    M_{\chsh}^{(2,2,3,3)}=\begin{array}{|ccc|ccc|}\hline 1&0&0&1&0&0\\0&1&1&0&1&1\\0&1&1&0&1&1\\\hline1&0&0&0&1&1\\0&1&1&1&0&0\\0&1&1&1&0&0\\\hline\end{array}\quad\text{and}\quad
    M_{I_{2233}}=\begin{array}{|ccc|ccc|}\hline 1&0&-1&1&-1&0\\-1&1&0&0&1&-1\\0&-1&1&-1&0&1\\\hline1&-1&0&-1&1&0\\0&1&-1&0&-1&1\\-1&0&1&1&0&-1\\\hline\end{array}\,.
\end{equation}

The $(2,2,3,3)$ local polytope has a total of 1116 facets, $36$ of which correspond to positivity constraints, $648$ to CHSH facets (these are equivalent to first coarse-graining two of the outputs into one (for each party and each input) and then applying one of the eight (2,2,2,2) CHSH inequalities\footnote{For instance, evaluating the CHSH-type inequality represented by $M_{\chsh}^{(2,2,3,3)}$ is equivalent to coarse graining the distribution by always mapping outcomes 1 and 2 to 1 and then evaluating $M_{\chsh}$~\eqref{eq:mat1}.}), and the remaining $432$ are $I_{2233}$-type~\cite{CollinsGisin04} (we label these $I_{2233}^i$ for $i\in\{1,2,\ldots,432\}$ with $I_{2233}^1=I_{2233}$).

The facets of the no-signalling polytope correspond to positivity constraints. Converting this facet description to the vertex description (e.g., using the {\sc Porta} software~\cite{porta}) one can obtain all the vertices of the $(2,2,3,3)$ non-signalling polytope.  This comprises $81$ local deterministic vertices, $648$ PR-box type vertices and $432$ extremal non-signalling vertices (for each of the $I_{2233}$ inequalities there is one of the latter that gives maximal violation).  We call these new vertices the \emph{$I_{2233}$-vertices}.  The specific vertex that maximally violates~\eqref{eq: CGLMP} is
\begin{equation}
\label{eq: NL3}
p_{\NL}:=
    \begin{array}{ |c|c|} 
 \hline
 \frac{1}{3} \quad 0 \quad 0 & \frac{1}{3} \quad 0 \quad 0\\ 
 0 \quad \frac{1}{3} \quad 0 & 0 \quad \frac{1}{3} \quad 0\\
 0 \quad 0 \quad \vphantom{\frac{1}{f}}\frac{1}{3} & 0 \quad 0 \quad \vphantom{\frac{1}{f}}\frac{1}{3}\\
 \hline
 \frac{1}{3} \quad 0 \quad 0 & 0 \quad \frac{1}{3} \quad 0\\ 
 0 \quad \frac{1}{3} \quad 0 & 0 \quad 0 \quad \frac{1}{3}\\
 0 \quad 0 \quad \vphantom{\frac{1}{f}}\frac{1}{3} & \vphantom{\frac{1}{f}}\frac{1}{3} \quad 0 \quad 0\\
 \hline
\end{array}
\end{equation}

The $432$ $I_{2233}$ vertices of the $(2,2,3,3)$ non-signalling polytope are related to each other through local relabellings\footnote{In general, equivalent points of the non-signalling polytope may be related by local relabellings or exchange of the two parties (which is a global operation). In the $(2,2,3,3)$ scenario there are $2\times(2\times 6^2)^2=10368$ such operations, twice the number of local relabellings. To count these, note that for each party there are $2$ ways to permute the inputs, and $6$ ways to permute the outputs for each of the $2$ inputs. All 432 extremal points of the $I_{2233}$ type (those which maximally violate a $I_{2233}$ inequality) can be generated using only local relabellings of $p_{\NL}$, and similarly all 648 extremal points of the CHSH type can be generated through local relabellings of $p_{\PR}$ embedded in the $(2,2,3,3)$ scenario (by adding zero probabilities to the third outcome).}.

\section{Results 1: The \texorpdfstring{$(2,2,3,3)$}{(2,2,3,3)} scenario in probability space}\label{sec: 2233prob}
In this section we compute the vertex description of the \emph{CHSH-classical polytope}, $\Pi_{\chsh}^{(2,2,3,3)}$, i.e., the polytope whose facets are the 648 CHSH inequalities and the positivity constraints.  As previously mentioned, this will be the main region of interest in the remainder of this work since the non-classicality of distributions not belonging to this region can always be certified using Shannon-entropic inequalities (Corollary~\ref{coroll:chaves}). 
The following result allows us to significantly speed up the vertex enumeration problem.

\begin{proposition}
\label{proposition: probspace}
Every non-classical distribution in $\Pi_{\chsh}^{(2,2,3,3)}$ violates only one $I_{2233}$ inequality. 
\end{proposition}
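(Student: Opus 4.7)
The plan is to prove the contrapositive: no distribution $p\in\Pi_{\chsh}^{(2,2,3,3)}$ simultaneously violates two distinct $I_{2233}$ inequalities. Equivalently, for every pair of indices $i\neq j\in\{1,\ldots,432\}$, the maximum of $\min(I_{2233}^i(p),I_{2233}^j(p))$ over the CHSH-classical polytope should be at most $2$.

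My first step is to reduce the number of pairs to consider using symmetry. The group $G$ generated by local relabellings of inputs and outputs together with party exchange has order $10368$ (cf.\ the footnote in Section~\ref{ssec: 2233rev}) and acts transitively on the $432$ $I_{2233}$ facets, since local relabellings alone act transitively on the associated extremal $I_{2233}$-vertices, which are in one-to-one correspondence with the facets. Therefore I may fix $i=1$ without loss of generality, and within the remaining $431$ inequalities I only need one representative from each orbit under the stabiliser $G_1\subset G$ of $I_{2233}^1$. This reduces the problem to a small finite list of representative pairs.

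For each representative pair $(I_{2233}^1,I_{2233}^j)$ I would solve the linear program
\begin{equation*}
\max\ I_{2233}^j(p)\quad\text{s.t.}\quad p\geq 0,\ p\ \text{no-signalling},\ I_{\chsh}^k(p)\leq 3\ \forall k\in[648],\ I_{2233}^1(p)\geq 2,
\end{equation*}
and verify that the optimum is at most $2$. Equivalently, one may look for a Farkas-type dual certificate: nonnegative multipliers on the CHSH slacks $3-I_{\chsh}^k(p)$ and on the positivity constraints, together with real multipliers on the no-signalling equalities, expressing $I_{2233}^1(p)+I_{2233}^j(p)-4$ as a nonpositive affine function of $p$. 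Such a certificate would give a short human-checkable proof; otherwise a direct LP check at each orbit representative suffices.

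The main obstacle is the combinatorial bookkeeping: identifying the $G_1$-orbits of the $431$ remaining $I_{2233}$ facets and processing one representative from each. I do not expect a single uniform algebraic identity to handle all pairs at once, because the $I_{2233}$ coefficients are signed and the coupling between two distinct $I_{2233}$ inequalities depends on their relative alignment under the action of $G$. Consequently some case analysis (or a small computer-assisted LP verification) appears unavoidable, but the symmetry reduction should keep the number of cases manageable and in particular the resulting bound, once established, immediately justifies the speed-up in the vertex enumeration of $\Pi_{\chsh}^{(2,2,3,3)}$ announced at the start of the section.
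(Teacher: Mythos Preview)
Your plan is essentially the paper's own proof: fix one inequality by the transitive action on the $432$ $I_{2233}$ facets, then run a linear program for each remaining index to check that the two cannot be simultaneously violated inside $\Pi_{\chsh}^{(2,2,3,3)}$. The paper does not bother with the further stabiliser reduction and simply loops over all $431$ indices $i\in\{2,\ldots,432\}$, but your orbit reduction and the Farkas-certificate idea are natural refinements of the same computer-assisted argument.

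One technical point is worth flagging. The specific LP you write, $\max I_{2233}^j(p)$ subject to $I_{2233}^1(p)\geq 2$, is only a one-sided test for the quantity you actually stated, namely $\max_p\min(I_{2233}^1(p),I_{2233}^j(p))$. The paper reports that for some pairs the two inequalities can be jointly \emph{saturated} (their LP returns $\epsilon=0$), and in that situation it is a priori possible to have a point with $I_{2233}^1(p)=2$ exactly and $I_{2233}^j(p)>2$, which would make your one-sided LP return a value above $2$ even though no simultaneous strict violation exists. The paper's formulation, $\max\{\epsilon\geq 0:I_{2233}^1\geq 2+\epsilon,\ I_{2233}^j\geq 2+\epsilon\}$, is precisely your max--min and avoids this ambiguity; you should use that LP rather than the asymmetric one.
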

\begin{proof}
 Let $i\in\{2,3,\ldots,432\}$ and consider the linear program that maximises the value of $\epsilon \geq 0$ subject to there existing a no-signalling distribution that
 \begin{itemize}
     \item violates $I_{2233}^1\leq2$ and $I_{2233}^i\leq2$ by at least $\epsilon$, i.e., $I_{2233}^1-\epsilon \geq 2$ and $I_{2233}^i-\epsilon \geq 2$;
     \item satisfies all CHSH-type inequalities.
 \end{itemize}
 We run over $i\in\{2,3,\ldots,432\}$ and check that in all cases either the output of this linear program is $\epsilon=0$ (meaning that the two $I_{2233}$ inequalities can be jointly saturated but not violated) or that the program is infeasible (the two $I_{2233}$ inequalities cannot even be jointly saturated). By symmetry it follows that no pair of $I_{2233}$ inequalities can be simultaneously violated when all the CHSH-type inequalities are satisfied.
\end{proof}

Note that in the $(2,2,3,3)$ scenario there exist extremal non-signalling distributions that violate multiple Bell inequalities. For example, the distributions $p_{\NL}$ (Equation~\eqref{eq: NL3}) and $p_{\NL}^*$ (Equation~\eqref{eq: NL*}) violate $I_{2233}\leq2$ (cf.~\eqref{eq: CGLMP}) although only $p_{\NL}$ violates it maximally. By symmetry, $p_{\NL}$ also violates another $I_{2233}$ inequality. 
In addition, $p_{\NL}$ violates the CHSH-type inequality whose evaluation is equivalent to applying the output coarse-graining $0\mapsto0$, $1\mapsto1$ and $2\mapsto1$ for each party and then evaluating~\eqref{eq: CHSH}. This is in contrast to the $(2,2,2,2)$ scenario where there is a one-to-one correspondence between the extremal non-signalling vertices and the CHSH inequalities in the sense that each such vertex violates exactly one CHSH inequality.\footnote{Note that this correspondence breaks down in the $(2,2,3,3)$ scenario where it is possible for a CHSH-type vertex to violate multiple CHSH-type inequalities (these correspond to the same 2-outcome CHSH inequality after coarse-graining).}

Due to the symmetries, all the vertices of $\Pi_{\chsh}^{(2,2,3,3)}$ can be enumerated by first finding all the vertices for which the $I_{2233}$ inequality of Equation~\eqref{eq: CGLMP} is saturated or violated i.e., $I_{2233}\geq 2$, and taking the orbit of these vertices under local relabellings and exchange of the parties. The vertex enumeration for this case yields 47 extremal points of which 30 are the local deterministic points that saturate $I_{2233}\geq2$ and 17 are non-classical points that violate only this inequality. These are listed in Table~\ref{tab:vertices}. By applying all symmetries and removing duplicate vertices we find that $\Pi_{\chsh}^{(2,2,3,3)}$ has 7425 vertices (including the 81 local deterministic vertices).

\section{Results 2: The \texorpdfstring{$(2,2,3,3)$}{(2,2,3,3)} scenario in entropy space}
\label{sec: 2233ent}

We now investigate whether entropic inequalities are necessary and sufficient for non-classicality in the $(2,2,3,3)$ Bell scenario. In $(d,d,2,2)$ scenarios with $d\geq 2$, only $2d$ Shannon entropic inequalities are required for the Shannon entropic characterisation of the scenario~\cite{Chaves2012, Fritz13}; in the $(2,2,2,2)$, these are the four inequalities of~\eqref{eq: BCineqs}. It may at first seem surprising that these can always be used to decide whether a distribution is classical because the number of extremal Bell inequalities grows very rapidly in $d$ in the $(d,d,2,2)$ scenario~\cite{Cope2019}, and deciding whether a distribution is classical is NP-complete~\cite{Avis04}. The reduction in the number of inequalities in entropy space is compensated by the need to identify a suitable post-processing operation (of which there are uncountably many possibilities) in order to detect violations. 

The first observation is a corollary of Theorem~\ref{theorem: chaves}.
\begin{corollary}\label{coroll:chaves}
Let $p_{XY|AB}$ be a distribution in the $(2,2,3,3)$ Bell scenario that violates at least one CHSH-type inequality. Then there exists an LOSR transformation $\mathcal{T}$, such that $\mathcal{T}(p_{XY|AB})$ violates one of the BC entropic inequalities~\eqref{eq: BCineqs}.
\end{corollary}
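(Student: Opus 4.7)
The plan is to reduce the statement to Theorem~\ref{theorem: chaves} in the $(2,2,2,2)$ scenario by exploiting the structural observation (already noted in the text when describing the 648 CHSH-type facets of the $(2,2,3,3)$ local polytope) that every CHSH-type inequality in $(2,2,3,3)$ is the composition of a local output coarse-graining with an ordinary CHSH inequality. So the proof will be a two-step post-processing: first turn the $(2,2,3,3)$ distribution into a $(2,2,2,2)$ distribution that violates CHSH, and then apply Theorem~\ref{theorem: chaves} to that distribution.

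First I would fix a CHSH-type inequality violated by $p_{XY|AB}$, and identify the associated pair of deterministic local maps $f_A: \{0,1,2\}\to\{0,1\}$ (applied to $X$) and $f_B:\{0,1,2\}\to\{0,1\}$ (applied to $Y$), each of which depends only on its party's output, such that evaluating the CHSH-type inequality on $p_{XY|AB}$ is equivalent to evaluating the corresponding $(2,2,2,2)$ CHSH inequality on the coarse-grained distribution $p_{f_A(X)f_B(Y)|AB}$. Because $f_A$ and $f_B$ act on a single party each and do not touch the inputs, the coarse-graining map $\mathcal{T}_1 : p_{XY|AB}\mapsto p_{f_A(X)f_B(Y)|AB}$ is an LOSR transformation (in fact a purely local deterministic post-processing, requiring no shared randomness). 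By construction $\mathcal{T}_1(p_{XY|AB})$ is a distribution in the $(2,2,2,2)$ scenario that violates a CHSH inequality, and in particular is non-classical.

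Second, I would invoke Theorem~\ref{theorem: chaves} applied to the $(2,2,2,2)$ distribution $\mathcal{T}_1(p_{XY|AB})$: there exists an LOSR transformation $\mathcal{T}_2$ such that $\mathcal{T}_2(\mathcal{T}_1(p_{XY|AB}))$ violates one of the four BC inequalities in~\eqref{eq: BCineqs}. Setting $\mathcal{T} := \mathcal{T}_2 \circ \mathcal{T}_1$, the result is immediate provided one checks that the class of LOSR transformations is closed under composition; this is clear because local operations compose to local operations and two sources of shared randomness can always be combined into a single one. Thus $\mathcal{T}(p_{XY|AB})$ violates a BC inequality, as required.

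The only conceptual step to double-check is the equivalence between the CHSH-type facets of $\Pi_{\chsh}^{(2,2,3,3)}$ and coarse-grained CHSH facets of $\Pi^{(2,2,2,2)}$, which is precisely the characterisation stated in Section~\ref{ssec: 2233rev} and illustrated by the matrices in~\eqref{eq:mat2} (with the accompanying footnote). Beyond that, there is no real obstacle: the corollary is essentially a wrapper around Theorem~\ref{theorem: chaves}, and the main content is recognising that local output coarse-graining is an admissible LOSR operation that turns a CHSH-type violation in $(2,2,3,3)$ into an ordinary CHSH violation in $(2,2,2,2)$.
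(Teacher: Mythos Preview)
Your proposal is correct and follows essentially the same two-step argument as the paper: locally coarse-grain to a $(2,2,2,2)$ distribution that violates CHSH, then apply Theorem~\ref{theorem: chaves}. One minor correction: the coarse-grainings $f_A,f_B$ may in general depend on the party's local \emph{input} as well as on the output (this is what ``for each party and each input'' in Section~\ref{ssec: 2233rev} means, and is needed to account for all $648=3^4\cdot 8$ CHSH-type facets); since each party knows their own input this is still a local deterministic operation, so your argument goes through unchanged.
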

\begin{proof}
For each CHSH-type inequality in the $(2,2,3,3)$ scenario, there exists a coarse-graining in which two of the outcomes are mapped to one (for each party and each input) such that for any initial distribution in the $(2,2,3,3)$ scenario that violates the CHSH-type inequality the coarse-grained distribution violates one of the CHSH-inequalities in the $(2,2,2,2)$ scenario. Hence, for the given $p_{XY|AB}$, after applying the corresponding coarse-graining for the violated CHSH-type inequality, followed by the LOSR operation from Theorem~\ref{theorem: chaves} we violate one of the BC entropic inequalities.
\end{proof}
This corollary means that we can limit our analysis to $\Pi^{(2,2,3,3)}_{\chsh}$, the polytope in which all the CHSH inequalities are satisfied, and, in particular, the non-classical region of this. This is the region in which one of the $I_{2233}$ inequalities is violated.

In going from the $(2,2,2,2)$ to $(2,2,3,3)$ scenario, a new class of inequalities (the $I_{2233}$ inequalities) become relevant in probability space but the entropic characterisation remains unchanged, since entropic inequalities do not depend on the number of measurement outcomes.  It is natural to ask whether all non-classical distributions in the $(2,2,3,3)$ scenario that satisfy all the CHSH inequalities cannot be certified entropically. However, this is not the case as shown by the following proposition.

\begin{proposition}
\label{proposition: countereg}
The polytope $\Pi_{\chsh}^{(2,2,3,3)}$ is not entropically classical.
\end{proposition}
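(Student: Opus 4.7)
The plan is to use Lemma~\ref{lem:BCcomplete}: a post-selected distribution with binary inputs is entropically classical if and only if it satisfies all four BC inequalities~\eqref{eq: BCineqs}. Hence, to show that $\Pi_{\chsh}^{(2,2,3,3)}$ is not entropically classical it suffices to exhibit a single distribution $p \in \Pi_{\chsh}^{(2,2,3,3)}$ whose entropy vector strictly violates at least one $I_{\BC}^k$. Such a $p$ is then not entropically classical, and since $p$ lies in the polytope the claim follows.

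I would proceed in two stages. First, I would test each of the 17 non-classical extremal vertices of $\Pi_{\chsh}^{(2,2,3,3)}$ catalogued in Table~\ref{tab:vertices}. For any such vertex $v$, the eight required entropies $H(X_a)$, $H(Y_b)$, $H(X_aY_b)$ associated with the co-existing sets $\mathcal{S}$ are a short direct computation, and the four values $I_{\BC}^k(v)$ follow by linear combination. If any vertex yields $I_{\BC}^k(v)>0$, the proposition is immediate.

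If instead all 17 vertices turn out to be entropically classical on their own---the analogue of the situation for $p_{\PR}$ in the $(2,2,2,2)$ scenario---I would move to convex mixtures. Following the template of $\tfrac{1}{2}p_{\PR}+\tfrac{1}{2}p_{\Cl}$, which achieves $I_{\BC}^4=\ln 2$ in the $(2,2,2,2)$ case, for each non-classical vertex $v$ I would construct a local distribution $p_{\Cl}$ sharing the single-party marginals of $v$ (for instance, one obtained from $v$ by permuting entries within a single block of the matrix representation) and then search for $\lambda\in(0,1)$ such that $\lambda v + (1-\lambda)p_{\Cl}$ violates some $I_{\BC}^k$. Convexity of $\Pi_{\chsh}^{(2,2,3,3)}$ guarantees that the mixture remains in the polytope, while the strict concavity of $H$ as a function of the underlying probabilities is what allows a convex combination of two entropically classical distributions to cease being entropically classical.

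The principal obstacle will be identifying a suitable classical partner $p_{\Cl}$ for a chosen vertex: in the $(2,2,3,3)$ scenario the new $I_{2233}$-type vertices have a richer block structure than $p_{\PR}$, so the marginal-preserving classical partner is not as canonical as the one in Equation~\eqref{eq: Pc}. However, the search is finite---17 orbit representatives of non-classical vertices, four BC inequalities, and a one-parameter optimisation over $\lambda$---so it can be carried out numerically and then certified by closed-form evaluation of $I_{\BC}^k$ at a specific witness mixture, giving an explicit distribution in $\Pi_{\chsh}^{(2,2,3,3)}$ that is not entropically classical.
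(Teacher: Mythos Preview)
Your plan is essentially the paper's approach: exhibit a single distribution in $\Pi_{\chsh}^{(2,2,3,3)}$ that violates a BC inequality, built as a convex combination of one of the $17$ non-classical vertices of Table~\ref{tab:vertices} with local distributions. The paper simply skips the search and writes down the witness: vertex~8 mixed with the three local deterministic vertices $18$, $26$, $47$ with weights $\tfrac{1}{10},\tfrac{3}{10},\tfrac{1}{5},\tfrac{2}{5}$, giving $I_{\BC}^4\approx 0.01997>0$.

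Two points worth flagging. First, your second-stage ansatz---a single marginal-matching local partner obtained by permuting one block of the vertex, then a one-parameter scan in $\lambda$---is narrower than what the paper actually uses: the working witness blends \emph{several} local deterministic points and places only $1/10$ of the weight on the non-local vertex, and the resulting local partner does not share the single-party marginals of vertex~8. So the block-permutation heuristic that works for $p_{\PR}$ may well fail here, and you should be prepared to widen the search to arbitrary convex combinations of the $30$ local deterministic vertices that saturate $I_{2233}=2$ (this keeps the mixture inside the relevant face and still leaves a finite-dimensional numerical problem). Second, as you yourself note, the argument is only a proof once an explicit witness is produced and its $I_{\BC}^k$ value is verified; the paper's contribution is precisely to supply that witness.
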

\begin{proof}
  Consider the distribution
\begin{equation}
\label{eq: countereg}
p_e:=\frac{1}{50}\, 
    \begin{array}{ |ccc|ccc|} 
 \hline
 21 & 0 & 0 & 21 & 0 & 0\\ 
 0 & 2 & 0 & 1 & 1 & 0\\
11 & 0 & \vphantom{\frac{1}{f}}16 & 0 & 1 & \vphantom{\frac{1}{f}}26\\
 \hline
 31 & 0 & 0 & 20 & 1 & 10\\ 
 1 & 1 & 0 & 1 & 0 & 1\\
 0 & 1 & \vphantom{\frac{1}{f}}16 & \vphantom{\frac{1}{f}}1 & 1 & 15\\
 \hline
\end{array}\,.
\end{equation}
This is formed by mixing the non-local extremal point number~8 of $\Pi_{\chsh}^{(2,2,3,3)}$ (see Table~\ref{tab:vertices}) with the three local deterministic points 18, 26 and 47 with respective weights $1/10$, $3/10$, $1/5$ and $2/5$, and hence is in $\Pi_{\chsh}^{(2,2,3,3)}$.  It achieves a $I_{\BC}^4$ value of $0.0199733$, in violation of $I_{\BC}^4\leq0$, so is not entropically classical.
\end{proof}

We remark that by mixing with more local deterministic distributions and varying the weights, larger violations of $I_{\BC}^4\leq0$ can be found; the distribution $p_e$ used in the previous proposition was chosen for its relative simplicity.

Interestingly, we find that the Shannon entropic BC inequalities appear 
to give the largest violation among the Tsallis entropic inequalities 
for $q\geq 1$ when applied to $p_e$. This can be seen in Figure~\ref{fig: countereg_dist}.

In light of Proposition~\ref{proposition: countereg}, it is natural to ask whether the non-classicality of all distributions in $\Pi_{\chsh}^{(2,2,3,3)}$ can be detected through entropic inequalities. We find numerical evidence that suggests the contrary, i.e., that there are non-classical distributions in $\Pi_{\chsh}^{(2,2,3,3)}$ whose non-classicality cannot be detected through entropic inequalities using a general class of post-processing operations, and hence that these entropic inequalities are not sufficient for detecting non-classicality in the $(2,2,3,3)$ scenario. Before presenting these results, we briefly overview the post-processing operations considered in this work.

\subsection{Post-processing operations}\label{sec:pp}
In this paper we study whether entropic inequalities can always detect non-classicality in the $(2,2,3,3)$ Bell scenario. In order to do so we could in principle consider applying any NCNG operation to the distribution prior to evaluating the entropic inequality. However, due to the difficulty in dealing with arbitrary NCNG operations, we consider the subset of these corresponding to LOSR+E instead.  In~\cite{Wolfe2019} it was shown that all LOSR operations can be generated by convex combinations of local deterministic operations.  These can be thought of in the following way.  Each party first does a deterministic function on their input, uses the result as the input to their device, then does a deterministic function on their input and the output of their device to form the final output.  All such operations correspond to local relabellings and local coarse-grainings. Note that deterministic classical distributions can be formed as a special case of coarse-graining (a local deterministic distribution is formed when each party coarse-grains all of their outputs to one output for each of their inputs). For the distributions we consider for our main conjectures, it turns out that all the coarse-grainings give rise to local distributions (cf.\ Proposition~\ref{proposition: CG}), so, by considering mixing with deterministic classical distributions, local relabelling and exchange of parties we can cover all LOSR+E operations. We hence start by separately considering mixing with classical distributions, and then consider relabelling and exchange of parties.

\subsection{Mixing with classical distributions}
\label{ssec: locmix}

Analogously to the $(2,2,2,2)$ case, we can define a family of distributions $p_{\iso, \epsilon}^{(2,2,3,3)}=\epsilon p_{\NL}+(1-\epsilon)p_{\noise}^{(2,2,3,3)}$, where $p_{\noise}^{(2,2,3,3)}$ is the uniform distribution with all entries equal to $1/9$ and $\epsilon\in[0,1]$. This class of distributions is isotropic in the sense that the marginal distributions are uniform for each input of each party. In order to show the insufficiency of entropic inequalities, one needs to identify at least one non-classical distribution whose non-classicality cannot be detected through entropic inequalities. We will discuss this for the class $p_{\iso, \epsilon}^{(2,2,3,3)}$ and only consider distributions of this form in the rest of the paper. Further, without loss of generality, we consider only the BC inequality $I_{\BC}^4\leq 0$ in what follows (by symmetry all the arguments will also hold for isotropic distributions corresponding to relabelled versions of $p_{\NL}$ and the corresponding BC inequalities).

\subsubsection{Using Shannon entropy}
\label{sssec: 2233shan}
In the entropic picture of the $(2,2,3,3)$ scenario, the 4 BC Inequalities~\eqref{eq: BCineqs} still hold (these are valid independently of the cardinality of the random variables). Again, analogously to the $(2,2,2,2)$ case, the maximally non-local distribution, $p_{\NL}$ has the same entropy vector as the classical distribution
\begin{equation}
    \label{eq: Pc2233}
    p_{\Cl}^{(2,2,3,3)}=
    \begin{array}{ |c|c|} 
 \hline
 \frac{1}{3} \quad 0 \quad 0 & \frac{1}{3} \quad 0 \quad 0\\ 
 0 \quad \frac{1}{3} \quad 0 & 0 \quad \frac{1}{3} \quad 0\\
 0 \quad 0 \quad \vphantom{\frac{1}{f}}\frac{1}{3} & 0 \quad 0 \quad \vphantom{\frac{1}{f}}\frac{1}{3}\\
 \hline
 \frac{1}{3} \quad 0 \quad 0 & \frac{1}{3} \quad 0 \quad 0\\ 
 0 \quad \frac{1}{3} \quad 0 & 0 \quad \frac{1}{3} \quad 0\\
 0 \quad 0 \quad \vphantom{\frac{1}{f}}\frac{1}{3} & 0 \quad 0 \quad \vphantom{\frac{1}{f}}\frac{1}{3}\\
 \hline
\end{array}
\end{equation}
(amongst others).
The distribution $p_{\NL}$ is hence entropically classical. However, in contrast to the $(2,2,2,2)$ case, we have evidence suggesting that there are values of $\epsilon$ for which $p_{\iso, \epsilon}^{(2,2,3,3)}$ is non-classical, but such that the mixture $vp_{\iso, \epsilon}^{(2,2,3,3)}+(1-v)p_{\Loc}$ is entropically classical for all classical distributions $p_{\Loc}$ and all $v\in [0,1]$, i.e., there exist non-classical distributions in the $(2,2,3,3)$ scenario for which mixing with classical distributions never gives rise to a non-classical entropy vector.

We begin by considering mixing $p_{\iso, \epsilon}^{(2,2,3,3)}$ with $p_{\Cl}^{(2,2,3,3)}$ in analogy with the treatment of the $(2,2,2,2)$ case. Although we have not fully proven this, from our numerics, this mixing appears to be optimal in the sense that when it does not allow for entropic violations, no other mixing can either. This allows us to identify a range of $\epsilon$ for which the mixture $p_{\iso, \epsilon}^{(2,2,3,3)}$ is non-classical, yet appears to remain entropically classical even when mixed with arbitrary classical distributions.  We begin with two propositions whose proofs can be found in Appendix~\ref{appendix: proofs}.

\begin{restatable}{proposition}{Propnonloc} \label{prop:nonloc} $p_{\iso, \epsilon}^{(2,2,3,3)}$ is non-classical if and only if $\epsilon>1/2$. Further, for $\epsilon\leq4/7$, $p_{\iso, \epsilon}^{(2,2,3,3)}$ satisfies all the CHSH-type inequalities, while for $\epsilon>4/7$ it violates at least one CHSH-type inequality.
\end{restatable}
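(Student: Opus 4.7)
The plan exploits linearity: every Bell functional $I$ satisfies $I(p_{\iso,\epsilon}^{(2,2,3,3)}) = \epsilon\, I(p_{\NL}) + (1-\epsilon)\, I(p_{\noise}^{(2,2,3,3)})$, so each claim reduces to evaluating $I$ on the two extremes and reading off the critical $\epsilon$. I would prove the CHSH-type half of the proposition first, since it is needed as input for the classicality direction of the non-classicality half.

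For the CHSH-type statement, the main computation is to identify one CHSH-type inequality that is violated exactly when $\epsilon>4/7$: using the coarse-graining $\{0\}\mapsto 0$, $\{1,2\}\mapsto 1$ for every party and every input, the supports of $p_{\NL}$ (namely $X=Y$ for $(a,b)\ne(1,1)$ and $Y=X+1\bmod 3$ for $(a,b)=(1,1)$) yield $I_{\chsh}(p_{\NL}) = 11/3$ and, by the product structure of noise under the same coarse-graining, $I_{\chsh}(p_{\noise}^{(2,2,3,3)}) = 19/9$, so linearity gives threshold $4/7$. For the converse I would argue that these values are worst case over all $648$ CHSH-type facets: $I_{\chsh}(p_{\NL})\le 11/3$ follows from a short combinatorial argument (saturating the three ``$\bar X=\bar Y$'' terms forces the four local coarse-grainings $\{0,1,2\}\to\{0,1\}$ to coincide, after which a pigeonhole argument bounds the remaining ``$\bar X\ne\bar Y$'' term by $2/3$, since no $f\colon\{0,1,2\}\to\{0,1\}$ can satisfy $f(x)\ne f(x{+}1)$ for all $x$); $I_{\chsh}(p_{\noise}^{(2,2,3,3)})\le 19/9$ follows because the coarse-grained marginals of each party on noise lie in $\{1/3,2/3\}$, making $\bar X,\bar Y$ independent with a CHSH value whose maximum over sign patterns reduces to a short case check. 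Combining yields the linear inequality $4\,I_{\chsh}(p_{\NL}) + 3\,I_{\chsh}(p_{\noise}^{(2,2,3,3)})\le 21$, which rearranges to the threshold bound $\epsilon^{\ast}\ge 4/7$ for every CHSH-type inequality that can be violated at all.

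For the non-classicality statement, the forward direction is immediate by evaluating Eq.~\eqref{eq: CGLMP}: $I_{2233}(p_{\NL})=4$ and $I_{2233}(p_{\noise}^{(2,2,3,3)})=0$ (its positive and negative terms cancel by uniformity), so $I_{2233}(p_{\iso,\epsilon}^{(2,2,3,3)})=4\epsilon$ exceeds the classical bound $2$ precisely when $\epsilon>1/2$. For the converse I would invoke the facet description of the $(2,2,3,3)$ local polytope from Section~\ref{ssec: 2233rev}: a no-signalling distribution is classical if and only if it satisfies positivity and every CHSH-type and $I_{2233}$-type facet. Positivity is automatic for $\epsilon\in[0,1]$; the CHSH-type facets are satisfied for $\epsilon\le 4/7$ (and hence for $\epsilon\le 1/2$) by the part proved above; and each $I_{2233}^k$ has non-signalling maximum $4$ (attained at an extremal non-signalling vertex by the remarks of Section~\ref{ssec: 2233rev}), giving $I_{2233}^k(p_{\iso,\epsilon}^{(2,2,3,3)}) = \epsilon\, I_{2233}^k(p_{\NL}) \le 4\epsilon\le 2$. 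The main obstacle I anticipate is the CHSH-type converse, namely combining the per-facet bounds on $I_{\chsh}(p_{\NL})$ and $I_{\chsh}(p_{\noise}^{(2,2,3,3)})$ in a way that handles all $648$ CHSH-type facets uniformly without enumerating them individually.
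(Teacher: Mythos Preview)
Your approach is correct and genuinely different from the paper's. The paper proves both parts computationally: for non-classicality it computes the local weight of $p_{\iso,\epsilon}^{(2,2,3,3)}$ as a piecewise-linear function of $\epsilon$ via a parametric linear program (using the \textsc{LPAssumptions} package), obtaining $l=1$ for $\epsilon\le1/2$ and $l=2(1-\epsilon)$ for $\epsilon>1/2$; for the CHSH claim it simply evaluates each of the $648$ CHSH-type functionals on $p_{\iso,\epsilon}^{(2,2,3,3)}$ and reads off the saturating $\epsilon$. Your route is fully analytical, at the cost of the two combinatorial bounds you flagged. Those bounds are indeed provable by hand in the generality you need: for $I_{\chsh}(p_{\noise}^{(2,2,3,3)})\le 19/9$, the parity constraint $\bigoplus_{ab}c_{ab}=1$ that characterises the eight CHSH forms is exactly what rules out all four blocks contributing $5/9$ simultaneously; for $I_{\chsh}(p_{\NL})\le 11/3$, your ``cannot reach $12$'' argument works for every one of the eight forms once you note that saturating any three terms forces $f_0,f_1,g_0,g_1$ to agree up to fixed bit-flips, after which saturating the fourth reduces to the same impossibility $f(x)\ne f(x{+}1)$ for all $x\in\{0,1,2\}$. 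The separate uniform bounds then combine exactly as you wrote, since $4\cdot\tfrac{11}{3}+3\cdot\tfrac{19}{9}=21$.

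For classicality at $\epsilon\le1/2$ your facet-based argument also differs from the paper's local-weight computation: relabelling invariance of noise gives $I_{2233}^k(p_{\noise}^{(2,2,3,3)})=0$ for every $k$, and the no-signalling maximum $4$ of each $I_{2233}^k$ then gives $I_{2233}^k(p_{\iso,\epsilon}^{(2,2,3,3)})\le 4\epsilon$, which together with your CHSH result and positivity certifies classicality via the facet description of $\cL^{(2,2,3,3)}$ from Section~\ref{ssec: 2233rev}. What your approach buys is self-containment (no computer verification); what the paper's approach buys is immediacy and, as a by-product, the exact local weight $2(1-\epsilon)$ for $\epsilon>1/2$, which is more information than non-classicality alone.
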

By analogy with the $(2,2,2,2)$ case, we consider the violation of
$I_{\BC}^4$ attainable by mixing $p_{\iso, \epsilon}^{(2,2,3,3)}$ with
$p_{\Cl}^{(2,2,3,3)}$.  We find that for $\epsilon \in (1/2,4/7]$,
$p_{\iso, \epsilon}^{(2,2,3,3)}$ is non-classical but does not violate any
of the BC inequalities. As shown in the above proposition, these distributions are in the CHSH-classical polytope $\Pi_{\chsh}^{(2,2,3,3)}$ and hence lie in our region of interest.

\begin{restatable}{proposition}{PropBCshan}
\label{proposition: BCshan2233}
For $\epsilon \leq 4/7$,  $p_{\E,\epsilon,v}^{(2,2,3,3)}=vp_{\iso,
  \epsilon}^{(2,2,3,3)}+(1-v)p_{\Cl}^{(2,2,3,3)}$ does not violate any
of the BC entropic inequalities~\eqref{eq: BCineqs} for any $v \in [0,1]$. However, for all $\epsilon > 4/7$, there exists a $v\in [0,1]$ such that the entropic inequality $I_{\BC}^4\leq 0$ is violated by $p_{\E,\epsilon,v}^{(2,2,3,3)}$.  
\end{restatable}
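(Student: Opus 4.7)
The plan is to reduce the statement to a single Braunstein--Caves expression involving only two distinct joint entropies, and then analyse the resulting one-variable function via its Taylor expansion around $v=0$, which exposes $\epsilon=4/7$ as the threshold.

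Both $p_{\iso,\epsilon}^{(2,2,3,3)}$ and $p_{\Cl}^{(2,2,3,3)}$ have uniform marginals, so $H(X_a)=H(Y_b)=\ln 3$ for all $a,b$. Moreover, in the mixture $p_{\E,\epsilon,v}^{(2,2,3,3)}$ the blocks $(0,0),(0,1),(1,0)$ share an identical noisy-diagonal law with probability $(3-2s)/9$ on the diagonal and $s/9$ off-diagonal, where $s:=v(1-\epsilon)$, giving a common entropy $h_1(s)$. The $(1,1)$ block carries a distinct entropy $h_2(v,\epsilon)$ arising from three probability values $v(1+2\epsilon)/9$, $(3-v(2+\epsilon))/9$ and $v(1-\epsilon)/9$, each with multiplicity three. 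Substituting into~\eqref{eq: BCineqs} gives $I_{\BC}^k=2\ln 3-h_1-h_2\le 0$ automatically for $k\in\{1,2,3\}$ (using $h_1,h_2\ge\ln 3$ by monotonicity of the Shannon entropy), while $I_{\BC}^4=h_2+2\ln 3-3h_1$ is the only quantity whose sign remains to be controlled.

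For the violation claim when $\epsilon>4/7$, I would Taylor-expand $I_{\BC}^4(v,\epsilon)$ about $v=0$ by applying $x\ln x$ expansions to each of the six entries in the $h_1$-blocks and to the nine entries in the $(1,1)$ block. The constant and regular linear pieces aggregate into a bounded $vK(\epsilon)$, while the singular contributions collect into a single $v\ln v$ term with coefficient $[6(1-\epsilon)-(2+\epsilon)]/3=(4-7\epsilon)/3$, yielding
\[
I_{\BC}^4(v,\epsilon)=\tfrac{1}{3}(4-7\epsilon)\,v\ln v + v\,K(\epsilon) + O(v^2\ln v).
\]
Since $v\ln v<0$ on $(0,1)$ and $4-7\epsilon<0$ when $\epsilon>4/7$, the leading term is strictly positive and dominates $vK(\epsilon)$ for sufficiently small $v>0$, providing the desired violation.

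For the non-violation claim when $\epsilon\le 4/7$, the obstruction is that one must bound $I_{\BC}^4$ on the whole of $[0,1]$, not merely near $0$. The endpoints are easy: $I_{\BC}^4(0,\epsilon)=0$; and at $v=1$ a short check shows that the $(1,1)$ block and the noisy-diagonal block share the same probability multiset (three entries equal to $(1+2\epsilon)/9$ and six equal to $(1-\epsilon)/9$), so $h_2(1,\epsilon)=h_1(1-\epsilon)$ and hence $I_{\BC}^4(1,\epsilon)=2\ln 3-2h_1(1-\epsilon)\le 0$. The expansion above also shows $\partial_v I_{\BC}^4\to-\infty$ as $v\to 0^+$ whenever $\epsilon<4/7$, so $I_{\BC}^4$ drops strictly below $0$ immediately. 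What remains is to rule out an interior maximum exceeding $0$. My plan is to compute $\partial_v I_{\BC}^4$ in closed form as a sum of four logarithmic terms and to show it has at most one sign change on $(0,1)$ when $\epsilon\le 4/7$, so that $v\mapsto I_{\BC}^4(v,\epsilon)$ is unimodal with a single interior minimum and is therefore bounded above by $\max\{I_{\BC}^4(0,\epsilon),I_{\BC}^4(1,\epsilon)\}=0$. The main obstacle is precisely this monotonicity claim: the four logarithmic arguments are only weakly related, and no obvious elementary manipulation collapses them. A natural workaround is to first establish $\partial_\epsilon I_{\BC}^4\ge 0$ uniformly in $v\in(0,1)$, reducing the task to the boundary case $\epsilon=4/7$ where the $v\ln v$ term vanishes and $g_{4/7}(v):=I_{\BC}^4(v,4/7)$ becomes a $C^1$ function amenable to a direct one-variable analysis.
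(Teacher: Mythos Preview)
Your treatment of $I_{\BC}^k$ for $k\in\{1,2,3\}$ and of the violation claim for $\epsilon>4/7$ is correct and matches the paper's argument: the symmetry reduces the first three inequalities to monotonicity, and the $(4-7\epsilon)v\ln v$ leading term gives the violation for small $v$.

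The gap is in the non-violation claim. You correctly identify that the difficulty is controlling $I_{\BC}^4$ on all of $[0,1]$ and you propose a unimodality argument, which you then concede you cannot close, retreating to a further unproven monotonicity-in-$\epsilon$ reduction. Neither step is completed, so as written the proof of the first half of the proposition is missing its core.

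The paper avoids both detours by observing that the closed-form derivative you already plan to write down can be bounded term-by-term. Up to a positive constant,
\[
\partial_v I_{\BC}^4 = (4-7\epsilon)\ln v + 5(1-\epsilon)\ln(1-\epsilon) - (1+2\epsilon)\ln(1+2\epsilon) - 6(1-\epsilon)\ln[3-2(1-\epsilon)v] + (2+\epsilon)\ln[3-(2+\epsilon)v].
\]
The second and third terms are manifestly $\leq 0$; for the last two, note that for $\epsilon\leq 4/7$ one has both $6(1-\epsilon)\geq 2+\epsilon$ and $3-2(1-\epsilon)v\geq 3-(2+\epsilon)v\geq 1$, so their sum is also $\leq 0$. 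Hence $\partial_v I_{\BC}^4 \leq (4-7\epsilon)\ln v < 0$ on $(0,1)$, and $I_{\BC}^4$ is monotone decreasing from $0$. This is exactly the ``elementary manipulation'' you were looking for: once you isolate the $(4-7\epsilon)\ln v$ piece, the remaining four logarithms pair off with the correct signs precisely when $\epsilon\leq 4/7$, making the unimodality and $\partial_\epsilon$ arguments unnecessary.
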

The second part of this proposition already follows from Corollary~\ref{coroll:chaves} and Proposition~\ref{prop:nonloc}.

\begin{corollary}\label{cor:entcl}
For $\epsilon \leq 4/7$,  $p_{\E,\epsilon,v}^{(2,2,3,3)}=vp_{\iso,
  \epsilon}^{(2,2,3,3)}+(1-v)p_{\Cl}^{(2,2,3,3)}$ is entropically
classical for all $v\in[0,1]$.
\end{corollary}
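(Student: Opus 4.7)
The plan is essentially to combine the first half of Proposition~\ref{proposition: BCshan2233} with Lemma~\ref{lem:BCcomplete}. First, I would invoke Proposition~\ref{proposition: BCshan2233} to observe that whenever $\epsilon \le 4/7$ and $v \in [0,1]$, the distribution $p_{\E,\epsilon,v}^{(2,2,3,3)}$ does not violate any of the four BC entropic inequalities in~\eqref{eq: BCineqs}; equivalently, it simultaneously satisfies all four of them.

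Next, I would invoke Lemma~\ref{lem:BCcomplete}, which states that a distribution in the postselected Bell scenario with binary parentless inputs is entropically classical if and only if it satisfies those four BC inequalities. Since the $(2,2,3,3)$ scenario has binary $A$ and $B$, the lemma applies directly to $p_{\E,\epsilon,v}^{(2,2,3,3)}$, and chaining the two statements yields that there exists a classical distribution with the same entropy vector as $p_{\E,\epsilon,v}^{(2,2,3,3)}$, i.e., $p_{\E,\epsilon,v}^{(2,2,3,3)}$ is entropically classical in the sense of Definition~\ref{definition: entclass}.

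The only points worth double-checking are that the two results refer to the same set of four inequalities (they do, both being~\eqref{eq: BCineqs}) and that Lemma~\ref{lem:BCcomplete} imposes no restriction on the output alphabet (it does not; only the inputs are required to be binary, which is satisfied here). There is hence no real obstacle: the corollary is essentially a reformulation of the first half of Proposition~\ref{proposition: BCshan2233} in the language of entropic classicality, with Lemma~\ref{lem:BCcomplete} supplying the dictionary between BC-inequality satisfaction and the existence of a classical distribution realising the same entropy vector.
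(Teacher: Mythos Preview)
Your proposal is correct and matches the paper's own proof exactly: the paper simply states that the corollary follows from Proposition~\ref{proposition: BCshan2233} and Lemma~\ref{lem:BCcomplete}, which is precisely the chain of reasoning you give.
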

\begin{proof}
This follows from Proposition~\ref{proposition: BCshan2233}
and Lemma~\ref{lem:BCcomplete}.
\end{proof}

While Proposition~\ref{proposition: BCshan2233} shows that the proof strategy of~\cite{Chaves13} does not directly generalize to all non-classical distributions in the $(2,2,3,3)$ case, it does not rule out the possibility that there may exist other mixings with classical distributions that could transform $p_{\iso, \epsilon}^{(2,2,3,3)}$ for $\epsilon \in (1/2,4/7]$ into a distribution that violates one of the BC inequalities. To investigate this, we can consider the polytope formed by mixing $p_{\iso, \epsilon}^{(2,2,3,3)}$ with classical distributions for some $\epsilon \leq 4/7$, i.e., the polytope $\Conv(\{p_{\iso, \epsilon}^{(2,2,3,3)}\}\bigcup\{p^{\Loc,k}\}_k)$, where $\{p^{\Loc,k}\}_k$ denotes the set of all (81) local deterministic vertices of the $(2,2,3,3)$ Bell-local polytope, $\{p_{\mathrm{iso},\epsilon}^{(2,2,3,3)}\}$ is a set with a single element and $\Conv()$ denotes the convex hull. We considered several values of $\epsilon \leq 4/7$ and numerically optimised the entropic expression $I_{\BC}^4$ over the non-classical region of this polytope\footnote{To restrict to the non-classical region, it is sufficient to mix with a subset of these 81 locals---see Appendix~\ref{appendix: evidence} for more detail.} for each\footnote{Note however that it is enough that these results hold for one value of $\epsilon \in (1/2,4/7]$ in order to conclude that entropic inequalities are not sufficient for detecting non-classicality in this scenario.} and were unable to find violations. The optimization involves a non-linear objective function with linear constraints.  Hence, it is possible that the numerical approach missed the global optimum. Nevertheless, this is evidence for the following conjecture and is presented in more detail in Appendix~\ref{appendix: evidence}.  Proposition~\ref{proposition: BCshan2233} along with the figures and evidence in Appendices~\ref{appendix: figures} and~\ref{appendix: evidence} also suggest this conjecture.

\begin{conjecture}
\label{conjecture: 2233shan}
Let $\epsilon\leq4/7$. For all mixtures of the distribution $p_{\iso,\epsilon}^{(2,2,3,3)}$ with classical distributions in the $(2,2,3,3)$ Bell scenario, the resulting distribution is entropically classical, i.e., all distributions in $\Conv(\{p_{\iso, \epsilon}^{(2,2,3,3)}\}\bigcup\{p^{\Loc,k}\}_k\})$ are entropically classical.
\end{conjecture}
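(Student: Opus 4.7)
The plan is to reduce the non-linear optimisation of the BC expressions over the convex hull $\Conv(\{p_{\iso,\epsilon}^{(2,2,3,3)}\}\cup\{p^{\Loc,k}\}_k)$ to a small symmetry-adapted region, and then to establish the required bound analytically on that region. By Lemma~\ref{lem:BCcomplete}, entropic classicality is equivalent to all four BC inequalities~\eqref{eq: BCineqs} being satisfied, so it suffices to show $I_{\BC}^i \leq 0$ for $i\in[4]$ on every such mixture. Moreover, because $p_{\iso,\epsilon}^{(2,2,3,3)}$ carries the full symmetry group of $p_{\NL}$, the four BC inequalities are permuted among themselves under relabellings preserving the isotropic structure, so without loss of generality I would restrict attention to $I_{\BC}^4$.

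First, I would identify the stabiliser group $G$ of $p_{\iso,\epsilon}^{(2,2,3,3)}$ acting by combinations of input and output relabellings together with exchange of parties, noting that $G$ includes at least the simultaneous cyclic shifts $(X,Y)\mapsto(X\oplus k,Y\oplus k)$ that fix $p_{\NL}$. Then I would partition the 81 local deterministic vertices $\{p^{\Loc,k}\}$ into $G$-orbits and attempt a symmetrisation: given any mixture $q=\lambda p_{\iso,\epsilon}^{(2,2,3,3)}+\sum_k\mu_k p^{\Loc,k}$, compare $I_{\BC}^4(q)$ with $I_{\BC}^4(\bar q)$, where $\bar q=\frac{1}{|G|}\sum_{g\in G}g\cdot q$. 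If $I_{\BC}^4(\bar q)\geq I_{\BC}^4(q)$, then the maximisation reduces to orbit-symmetric mixtures, parameterised by one weight per $G$-orbit.

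Assuming this reduction succeeds, $I_{\BC}^4$ becomes an explicit function of a small number of orbit weights together with $\lambda$ and $\epsilon$. For each reduced mixture, I would proceed as in the derivation leading to~\eqref{eq: BCexpand2}: expand $I_{\BC}^4$ around the classical boundary $\lambda\to 0$, where all entropic inequalities are saturated, and show that for $\epsilon\leq 4/7$ the first-order coefficient in $\lambda$ is non-positive while the higher-order terms cannot reverse this. Combined with the boundary fact that $I_{\BC}^4$ is non-positive on the purely classical face (a consequence of Lemma~\ref{lem:BCcomplete}), this would yield the claim. The transition at $\epsilon=4/7$ would then be consistent with Proposition~\ref{proposition: BCshan2233} and Corollary~\ref{coroll:chaves}, which guarantee a BC violation exactly when $\epsilon>4/7$ via the CHSH coarse-graining route.

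The main obstacle is the symmetrisation step. Because $I_{\BC}^4$ is a mixed-sign linear combination of Shannon entropies, concavity of $H$ does not directly imply $I_{\BC}^4(\bar q)\geq I_{\BC}^4(q)$: the positive marginal terms $H(X_0)+H(Y_0)+H(X_1Y_1)$ benefit from mixing while the negative joint terms $-H(X_0Y_0)-H(X_0Y_1)-H(X_1Y_0)$ do not, so one must control the interplay carefully, perhaps by exploiting additional structure such as $G$-invariance of the marginals of each orbit-representative or by a tighter orbit-by-orbit analysis. The fact that the authors can only provide numerical evidence (see Appendix~\ref{appendix: evidence}) strongly suggests that no clean concavity-based argument is available, and that a complete proof will likely require either a new entropic-inequality-specific reduction or a certified global optimisation over the reduced orbit simplex.
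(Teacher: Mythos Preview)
The statement is a \emph{conjecture}: the paper does not prove it. What the paper offers (Appendix~\ref{appendix: evidence}) is numerical evidence: they observe that the non-classical region of $\Conv(\{p_{\iso,\epsilon}^{(2,2,3,3)}\}\cup\{p^{\Loc,k}\}_k)$ is the convex hull of $p_{\iso,\epsilon}^{(2,2,3,3)}$ with the $30$ local deterministic points saturating $I_{2233}=2$ (Table~\ref{tab:vertices}), and then numerically maximise each $I_{\BC}^i$ over that $30$-dimensional simplex for several $\epsilon\leq4/7$ and several $q$, finding no violation. They explicitly flag that the objective is neither convex nor concave, so the optimiser is not certified. There is therefore no ``paper's proof'' to compare against; you are proposing a route the authors did not find.

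Your strategy has two genuine gaps beyond the one you already name. First, the reduction to $I_{\BC}^4$ by symmetry is not justified: the stabiliser of $p_{\iso,\epsilon}^{(2,2,3,3)}$ (cyclic output shifts, party exchange composed with a compensating relabelling, etc.) preserves the distinguished role of the $(A,B)=(1,1)$ block in $p_{\NL}$ and hence \emph{fixes} $I_{\BC}^4$ rather than permuting it with $I_{\BC}^{1,2,3}$. In the paper's Proposition~\ref{proposition: BCshan2233} the other three inequalities are handled by a separate monotonicity argument for the specific mixing with $p_{\Cl}^{(2,2,3,3)}$; for arbitrary classical mixings you would need an analogous but more general argument, not symmetry. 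Second, the small-$\lambda$ expansion is too weak even in the one-parameter case the paper does handle: in the proof of Proposition~\ref{proposition: BCshan2233} the authors do not merely control the leading term at $v=0$ but show $\partial_v f(\epsilon,v)\leq 0$ for \emph{all} $v\in[0,1]$. A first-order expansion plus ``higher-order terms cannot reverse this'' would need to be replaced by a global derivative or sign argument over the full orbit simplex, which is exactly where the difficulty lies.

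Your diagnosis in the final paragraph is accurate: the mixed-sign structure of $I_{\BC}^4$ blocks any direct concavity argument, and neither the paper nor your outline supplies a substitute. A proof, if one exists, will need an idea not present here.
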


The interesting cases of Conjecture~\ref{conjecture: 2233shan} are for non-classical distributions (i.e., for $\epsilon>1/2$), and the most relevant of these are those that can be achieved in quantum theory. The next remark addresses this case.

\begin{remark}
  There exist non-classical quantum distributions that lie in the polytope $\Conv(\{p_{\iso, \epsilon=4/7}^{(2,2,3,3)}\}\bigcup\{p^{\Loc,k}\}_k\})$ and in this case, our results suggest that the non-classicality of the corresponding distributions cannot be detected through entropic inequalities as we now explain. Let $p_{\QM}$ be the quantum distribution from~\cite[Equation~(14) with $d=3$]{CGLMP02} with Bob's inputs relabelled. This violates $I_{\BC}^4\leq 0$ through mixing with $p_{\Cl}^{(2,2,3,3)}$. Consider then mixing $p_{\QM}$ with uniform noise $p_{\noise}^{(2,2,3,3)}$ to obtain $p_{\mix}(u):=up_{\QM}+(1-u)p_{\noise}^{(2,2,3,3)}$ ($u\in [0,1]$). We found that for some values of $u$ (e.g., $u=7/10$), $p_{\mix}(u)$ is non-classical. 
  Further, $p_{\mix}(u)$ is quantum achievable since it can be obtained from the density operator $u\ket{\psi'}\bra{\psi'}+(1-u)\frac{\mathbb{I}}{9}$ (where $\ket{\psi'}$ is the two qutrit state producing $p_{\QM}$) and the same quantum measurements that produce $p_{\QM}$ from $\ket{\psi'}$.
\end{remark}

\subsubsection{Using Tsallis entropies}
\label{sssec: 2233tsal}

Given the results (Proposition~\ref{proposition: BCshan2233} and Conjecture~\ref{conjecture: 2233shan}) of the previous section for Shannon entropic inequalities, a natural question is whether other entropic measures can provide an advantage over the Shannon entropy in detecting non-classicality.  Here, we look at Tsallis entropies and find that similar results hold in this case as well, suggesting that Tsallis entropies also do not allow us to completely solve the problem.

The properties of monotonicity, strong-subadditivity and the chain rule are sufficient to derive the BC inequalities, which hence also hold for Tsallis entropy when $q\geq 1$. (Other generalized entropies such as R\'enyi or min/max entropies do not satisfy one or more of these properties in general and it is not clear whether the analogues of~\eqref{eq: BCineqs} hold for these.) In other words, for all $q\geq 1$ we have
\begin{widetext}
\begin{equation}
\label{eq: BCineqsTs}
    \begin{split}
I_{\BC,q}^1=S_q(X_0Y_0)+S_q(X_1)+S_q(Y_1)-S_q(X_0Y_1)-S_q(X_1Y_0)-S_q(X_1Y_1)\leq 0\\
        I_{\BC,q}^2=S_q(X_0Y_1)+S_q(X_1)+S_q(Y_0)-S_q(X_0Y_0)-S_q(X_1Y_0)-S_q(X_1Y_1)\leq 0\\
        I_{\BC,q}^3=S_q(X_1Y_0)+S_q(X_0)+S_q(Y_1)-S_q(X_0Y_0)-S_q(X_0Y_1)-S_q(X_1Y_1)\leq 0\\
        I_{\BC,q}^4=S_q(X_1Y_1)+S_q(X_0)+S_q(Y_0)-S_q(X_0Y_0)-S_q(X_0Y_1)-S_q(X_1Y_0)\leq 0\\
    \end{split}
\end{equation}
\end{widetext}
and we refer to these as the Tsallis entropic BC inequalities. Entropic classicality in Tsallis entropy space can be defined analogously to Definition~\ref{definition: entclass}, in terms of Tsallis entropy vectors over the set of variables $\mathcal{S}$ (Equation~\eqref{eq: coexisting}). We say that a distribution is \emph{$q$-entropically classical} if its entropy vector written in terms of the Tsallis entropy of order $q$ is achievable using a classical distribution.  In the case of the Shannon entropy, we used the fact (Lemma~\ref{lem:BCcomplete}) that the BC Inequalities~\eqref{eq: BCineqs} are known to be necessary and sufficient for entropic classicality for 2-input Bell scenarios \cite{Fritz13}. However, it is not clear if the result of \cite{Fritz13} generalises to Tsallis entropies for $q>1$. Thus our results in the Tsallis case are weaker than those for Shannon, being stated only for the BC inequalities.  We leave the generalization to arbitrary Tsallis entropic inequalities as an open problem.

\begin{restatable}{proposition}{PropBCtsal}
\label{proposition: BCtsal2233}
For $\epsilon \leq 4/7$,  $p_{\E,\epsilon,v}^{(2,2,3,3)}=vp_{\iso,
  \epsilon}^{(2,2,3,3)}+(1-v)p_{\Cl}^{(2,2,3,3)}$ does not violate
any of the Tsallis BC inequalities~\eqref{eq: BCineqsTs} for any $v \in
[0,1]$ and $q>1$. However, for $\epsilon > 4/7$ and every $q>1$, there always exists a $v\in [0,1]$ such that the entropic inequality $I_{\BC,q}^4\leq 0$ is violated by $p_{\E,\epsilon,v}^{(2,2,3,3)}$.  
\end{restatable}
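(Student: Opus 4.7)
The plan is to mirror the structure of the Shannon proof (Proposition~\ref{proposition: BCshan2233}), exploiting the analytic difference that for $q>1$ the function $p^q$ has vanishing slope at $p=0$, so the Taylor expansion of $I_{\BC,q}^4$ in $v$ around $v=0$ contains no $v\ln v$ term and is linear to leading order. I would first set up the entropies using the identity $S_q(X) = (1 - \sum_x p_x^q)/(q-1)$. Since both $p_{\iso,\epsilon}^{(2,2,3,3)}$ and $p_{\Cl}^{(2,2,3,3)}$ have uniform marginals, $S_q(X_a) = S_q(Y_b) = \ln_q 3$ for every $a,b$. By the symmetry of $p_{\NL}$ combined with $p_{\Cl}^{(2,2,3,3)}$, the three ``matched'' joints for $(a,b) \in \{(0,0),(0,1),(1,0)\}$ share a common histogram (three entries of $(3-2u)/9$ and six entries of $u/9$, with $u = v(1-\epsilon)$) and hence a common entropy $S_{\mathrm m}(v,\epsilon,q)$; the joint at $(1,1)$ instead has three entries of $(3 - v(2+\epsilon))/9$, three of $v(1+2\epsilon)/9$ and three of $v(1-\epsilon)/9$, giving $S_{11}(v,\epsilon,q)$. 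Therefore $I_{\BC,q}^4 = S_{11} + 2\ln_q 3 - 3 S_{\mathrm m}$.

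For the second part ($\epsilon > 4/7$), I would Taylor-expand at $v=0$. The value $I_{\BC,q}^4(0)=0$ is immediate because $p_{\E,\epsilon,0}^{(2,2,3,3)} = p_{\Cl}^{(2,2,3,3)}$ has all four joint histograms identical. For $q>1$ the small entries contribute nothing to $\partial_v \sum_i p_i^q$ at $v=0$ (since $p^{q-1}\to 0$), so only the diagonal entries matter, and a short calculation yields
\begin{equation*}
\left.\frac{d I_{\BC,q}^4}{dv}\right|_{v=0} = \frac{q \cdot 3^{1-q}}{3(q-1)}\,(7\epsilon - 4),
\end{equation*}
which is strictly positive for $\epsilon > 4/7$ and every $q>1$. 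By continuity this forces $I_{\BC,q}^4(v) > 0$ for all sufficiently small $v > 0$.

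For the first part ($\epsilon \leq 4/7$), the task is to show $I_{\BC,q}^4(v) \leq 0$ for every $v \in [0,1]$ and $q>1$. Setting $G(v,\epsilon,q) := (q-1) I_{\BC,q}^4$, a substitution yields
\begin{equation*}
G = 3(3-2u)^q + 5u^q - (3-u-s)^q - s^q - 2\cdot 3^q,
\end{equation*}
with $u = v(1-\epsilon)$ and $s = v(1+2\epsilon)$, so $G(0)=0$ and $\partial_v G|_{v=0}\leq 0$ by the previous step. The natural strategy is to prove that $G$ is concave in $v$ on $[0,1]$ uniformly in $q>1$ and $\epsilon \leq 4/7$; combined with $G(0)=0$ and non-positive initial slope this forces $G \leq 0$ throughout. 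Equivalently, one can check non-positivity at the endpoint $v = 1$ and rule out interior local maxima by sign analysis of $dG/dv$.

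The main obstacle is precisely this uniform sign control. A direct second-derivative computation shows that $G$ is not uniformly concave: for $1 < q < 2$ the terms $u^{q-2}$ and $s^{q-2}$ diverge at $v = 0$ with opposite signs, so concavity has to be extracted through a delicate cancellation that uses both the bound $\epsilon \leq 4/7$ and the linked ratio $s/u = (1+2\epsilon)/(1-\epsilon)$. Handling this will likely require splitting by the range of $q$ (for instance $q \geq 2$ versus $1 < q < 2$), applying the convexity of $x \mapsto x^q$ term by term in a carefully chosen rearrangement, and verifying the endpoint $v = 1$ separately. A continuity argument connecting to the Shannon case (Proposition~\ref{proposition: BCshan2233}) as $q \to 1^+$ may also assist, since the threshold $\epsilon = 4/7$ coincides in both regimes.
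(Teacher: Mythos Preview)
Your setup and the $\epsilon>4/7$ half are essentially the paper's argument: express $I_{\BC,q}^4$ on $p_{\E,\epsilon,v}^{(2,2,3,3)}$ as a positive multiple of $G(v)/(q-1)$, note $G(0)=0$, and read off $G'(0)\propto 7\epsilon-4$.

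The gap is in the $\epsilon\leq 4/7$ half. You propose concavity of $G$ in $v$, then acknowledge that it fails (the second derivative diverges with both signs near $v=0$ for $1<q<2$) and retreat to case splits in $q$ and a continuity-from-Shannon argument, none of which you carry out. The paper does something much simpler: it shows directly that $G'(v)\leq 0$ for \emph{every} $v\in[0,1]$ by an elementary comparison of the four first-derivative terms. In your notation,
\begin{equation*}
\frac{1}{q}\,G'(v)=-6(1-\epsilon)(3-2u)^{q-1}+5(1-\epsilon)\,u^{q-1}+(2+\epsilon)(3-u-s)^{q-1}-(1+2\epsilon)\,s^{q-1}.
\end{equation*}
Since $s\geq u$ (i.e.\ $1+2\epsilon\geq 1-\epsilon$) one has $3-u-s\leq 3-2u$ and $s^{q-1}\geq u^{q-1}$; replacing accordingly yields the upper bound
\begin{equation*}
\frac{1}{q}\,G'(v)\leq (7\epsilon-4)\bigl[(3-2u)^{q-1}-u^{q-1}\bigr],
\end{equation*}
which is $\leq 0$ for $\epsilon\leq 4/7$ because $u=v(1-\epsilon)\leq 1$ forces $3-2u\geq u$. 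Combined with $G(0)=0$ this gives $G(v)\leq 0$ on all of $[0,1]$. The remaining three inequalities $I_{\BC,q}^{1,2,3}\leq 0$ follow from monotonicity of $S_q$ exactly as in the Shannon proof---a step you should also include. So the missing idea is simply that the \emph{first} derivative already has a clean global sign; no concavity, endpoint checks, or case analysis in $q$ is needed.
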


We refer the reader to Appendix~\ref{appendix: proofs} for a proof of
this Proposition. To investigate the extension to other mixings, we tried the same computational procedure (see Appendix~\ref{appendix: evidence}) as in the Shannon case. We found no violation of the Tsallis entropic BC inequalities for any mixings of $p_{\iso, \epsilon}^{(2,2,3,3)}$ with classical distributions, for several values of $q>1$ and $\epsilon \in (1/2,4/7]$, leading to the following conjecture, which is similar to Conjecture~\ref{conjecture: 2233shan}.
\begin{conjecture}
\label{conjecture: 2233tsal}
Let $\epsilon\leq4/7$. For all mixtures of the distribution $p_{\iso, \epsilon}^{(2,2,3,3)}$ with classical distributions in the $(2,2,3,3)$ Bell scenario, the resulting distribution does not violate any of the Tsallis entropic BC inequalities for any $q>1$, i.e., all distributions in $\Conv(\{p_{\iso, \epsilon}^{(2,2,3,3)}\}\bigcup\{p^{\Loc,k}\}_k)$ satisfy the Tsallis entropic BC Inequalities~\eqref{eq: BCineqsTs} for all $q>1$.
\end{conjecture}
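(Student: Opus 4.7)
The plan is to adapt the strategy used for Proposition~\ref{proposition: BCshan2233} from Shannon to Tsallis entropies of order $q>1$, relying on the fact that monotonicity, strong subadditivity and the chain rule all remain valid for $S_q$ with $q\geq 1$. First I would write $p_{\E,\epsilon,v}^{(2,2,3,3)}$ out explicitly in matrix form and record two structural observations. Every single-variable marginal is uniform on $\{0,1,2\}$, giving $S_q(X_a)=S_q(Y_b)=S^{(0)}:=(1-3^{1-q})/(q-1)$; and by permutation-invariance of $S_q$ together with the structure of $p_{\NL}$ and $p_{\Cl}^{(2,2,3,3)}$, the three joints indexed by $(a,b)\in\{(0,0),(0,1),(1,0)\}$ share a common entropy $S^{(1)}$, while $S^{(2)}:=S_q(X_1Y_1)$ is typically distinct. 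The relevant probability values appearing in the joints are affine in $v$, namely $a_1=\tfrac{1}{3}-\tfrac{2v(1-\epsilon)}{9}$, $b_1=\tfrac{v(1-\epsilon)}{9}$, $a_2=\tfrac{v(1+2\epsilon)}{9}$ and $b_2=\tfrac{1}{3}-\tfrac{v(2+\epsilon)}{9}$.

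Substituting into~\eqref{eq: BCineqsTs} I find $I_{\BC,q}^k=2S^{(0)}-S^{(1)}-S^{(2)}$ for $k\in\{1,2,3\}$, which is non-positive by monotonicity since $S^{(1)},S^{(2)}\geq S^{(0)}$. Hence only $I_{\BC,q}^4=S^{(2)}+2S^{(0)}-3S^{(1)}$ can possibly be violated, and the problem reduces to the sign on $v\in[0,1]$ of the scalar function
\[ G(v,\epsilon,q):=(q-1)\,I_{\BC,q}^4=9a_1^q+15b_1^q-3a_2^q-3b_2^q-2\cdot 3^{1-q}. \]

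To handle the second claim I would evaluate $G$ and $G'$ at $v=0$. Direct substitution gives $G(0)=0$, and because $q>1$ the two branches $a_2^q$ and $b_1^q$ that vanish at $v=0$ contribute only at order $O(v^q)=o(v)$ to $G$. The slope at the origin is therefore determined solely by the contributions of $a_1^q$ and $b_2^q$, and a short calculation yields $G'(0)=\tfrac{q\cdot 3^{1-q}}{3}(7\epsilon-4)$. Whenever $\epsilon>4/7$ this is strictly positive, so $I_{\BC,q}^4>0$ for all sufficiently small $v>0$ and the second claim follows for every $q>1$.

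The main obstacle is the first claim: showing $G(v,\epsilon,q)\leq 0$ on all of $[0,1]$ whenever $\epsilon\leq 4/7$. For $\epsilon\leq 1/2$ the mixture is itself classical by Proposition~\ref{prop:nonloc}, so the inequality holds automatically; the real content is the range $\epsilon\in(1/2,4/7]$. Three anchor facts are available: $G(0)=0$; the slope formula above gives $G'(0)\leq 0$ when $\epsilon\leq 4/7$; and $G(1)\leq 0$, because at $v=1$ the four joints all collapse to the isotropic joint, so $I_{\BC,q}^4|_{v=1}=-2\,S_q(Y_1|X_1)\leq 0$ by non-negativity of the Tsallis conditional entropy (chain rule plus monotonicity) for uniform marginals. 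What remains is to exclude an interior local maximum at which $G>0$. My plan is to split $G$ into the convex piece $F_1=9a_1^q+15b_1^q$ and the (also convex) remainder $F_2=3a_2^q+3b_2^q+2\cdot 3^{1-q}$, since $x\mapsto x^q$ is convex for $q>1$, and then combine the chord upper bound $F_1(v)\leq(1-v)F_1(0)+vF_1(1)$ with a tangent lower bound on $F_2$ anchored at whichever endpoint produces the tightest estimate. I expect the main subtlety to be controlling the middle of the interval uniformly in $q$, where the chord bound on $F_1$ is loosest; if a single tangent anchor is insufficient, the fallback is a piecewise tangent argument splitting the interval at the location where $F_2$ changes steepness, together with a continuity argument closing the remaining compact parameter region $(q,\epsilon)\in(1,\infty)\times(1/2,4/7]$ by using the Shannon case of Proposition~\ref{proposition: BCshan2233} as the $q\to 1^+$ limit.
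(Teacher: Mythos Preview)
Your proposal addresses the wrong statement. The target is Conjecture~\ref{conjecture: 2233tsal}, which concerns \emph{all} mixtures of $p_{\iso,\epsilon}^{(2,2,3,3)}$ with arbitrary classical distributions, i.e., every point in $\Conv(\{p_{\iso,\epsilon}^{(2,2,3,3)}\}\cup\{p^{\Loc,k}\}_k)$ with the full set of $81$ local deterministic vertices. What you analyse throughout is the one-parameter family $p_{\E,\epsilon,v}^{(2,2,3,3)}=v\,p_{\iso,\epsilon}^{(2,2,3,3)}+(1-v)\,p_{\Cl}^{(2,2,3,3)}$, mixing with a single specific local distribution. That is exactly the content of Proposition~\ref{proposition: BCtsal2233}, not the conjecture. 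The ``second claim'' you treat (violation for $\epsilon>4/7$) is not even part of the conjecture's statement; it belongs to that proposition. The paper does not prove Conjecture~\ref{conjecture: 2233tsal}: it is supported only by numerical optimisation over the $30$ relevant local deterministic vertices for sample values of $q$ and $\epsilon$, and is explicitly left open. The structural features you exploit --- uniform marginals and the coincidence $S_q(X_0Y_0)=S_q(X_0Y_1)=S_q(X_1Y_0)$ --- are special to mixing with $p_{\Cl}^{(2,2,3,3)}$ and fail for a generic classical admixture, so your reduction to the single scalar function $G(v,\epsilon,q)$ does not survive in the general setting.

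Even restricted to Proposition~\ref{proposition: BCtsal2233}, your argument for the first claim is unnecessarily indirect and incomplete. The paper computes $\partial G/\partial v$ and shows it is non-positive on all of $[0,1]$ for $\epsilon\leq 4/7$ via two elementary pairwise comparisons, namely $6(1-\epsilon)\bigl(3-2(1-\epsilon)v\bigr)^{q-1}\geq(2+\epsilon)\bigl(3-(2+\epsilon)v\bigr)^{q-1}$ and $(1+2\epsilon)\bigl((1+2\epsilon)v\bigr)^{q-1}\geq 5(1-\epsilon)\bigl((1-\epsilon)v\bigr)^{q-1}$, each of which follows because both the prefactor and the base are ordered the right way when $\epsilon\leq 4/7$. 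Combined with $G(0)=0$ this gives $G\leq 0$ immediately. Your chord/tangent decomposition only establishes $G'(0)\leq 0$ and $G(1)\leq 0$ and then leaves the interior behaviour to a heuristic convexity argument that you yourself flag as incomplete; the paper's monotonicity-in-$v$ argument closes this cleanly without any case analysis in $q$.
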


Figure~\ref{fig: 2233shantsal1}, shows the values of $\epsilon$ and $v$ for which $I_{\BC,q}^4$ (for $q=1, 2 , 8$) evaluated with $p_{\E,\epsilon,v}^{(2,2,3,3)}$ is positive, which is also suggestive of this conjecture.

\begin{remark}
  Any impossibility result for the $(2,2,3,3)$ scenario also holds in the $(2,2,d,d)$ case for $d>3$ because the former is always embedded in the latter i.e., every distribution in the $(2,2,3,3)$ scenario has a corresponding distribution in all the $(2,2,d,d)$ scenarios with $d>3$ which can be obtained by assigning a zero probability to the additional outcomes. Further, the entropic Inequalities~\eqref{eq: BCineqs} remain the same for all these scenarios as they do not depend on the cardinality of the random variables involved. Thus the existence non-classical distributions for the $d=3$ case whose non-classicality cannot be detected by entropic inequalities implies the same result for all $d>3$.
\end{remark}

\subsection{Beyond classical mixings}
\label{ssec: CGrelab}

So far, we only considered mixing with classical distributions to obtain entropic violations and gave evidence that this does not work for some non-classical distributions in the $(2,2,3,3)$ scenario. This motivates us to study whether using arbitrary LOSR+E operations allows us to detect this non-classicality through entropic violations. We show in this section that if Conjectures~\ref{conjecture: 2233shan} and~\ref{conjecture: 2233tsal} hold then they also hold for all LOSR+E operations. First consider the following example.

The maximum possible violation of the BC inequalities in the $(2,2,2,2)$ case is $I_{\BC}^4=\ln 2$~\cite{Chaves13}. This is derived by considering only Shannon inequalities within the coexisting sets, and the bound that the maximum entropy of a binary variable is $\ln 2$.  An analogous proof holds in the $(2,2,3,3)$ case, except that the bound is then $\ln 3$. In the former case we have $p_{\E,\epsilon=1,v=1/2}=\frac{1}{2}p_{\PR}+\frac{1}{2}p_{\Cl}$, which maximally violates $I_{\BC}^4\leq0$, while in the latter case, one such distribution is formed by $(p_{\NL}+p_{\NL}^*+p_{\Cl}^{(2,2,3,3)})/3$, where $p_{\NL}^*$ is another extremal non-local distribution:
\begin{equation}
    \label{eq: NL*}
    p_{\NL}^*=
    \begin{array}{ |c|c|} 
 \hline
 \frac{1}{3} \quad 0 \quad 0 & \frac{1}{3} \quad 0 \quad 0\\ 
 0 \quad \frac{1}{3} \quad 0 & 0 \quad \frac{1}{3} \quad 0\\
 0 \quad 0 \quad \vphantom{\frac{1}{f}}\frac{1}{3} & 0 \quad 0 \quad \vphantom{\frac{1}{f}}\frac{1}{3}\\
 \hline
 \frac{1}{3} \quad 0 \quad 0 & 0 \quad 0 \quad \frac{1}{3}\\ 
 0 \quad \frac{1}{3} \quad 0 & \frac{1}{3} \quad 0 \quad 0\\
 0 \quad 0 \quad \vphantom{\frac{1}{f}}\frac{1}{3} & 0 \quad \vphantom{\frac{1}{f}}\frac{1}{3} \quad 0\\
 \hline
\end{array}\,.
\end{equation}

Since the equal mixture $(p_{\NL}+p_{\Cl}^{(2,2,3,3)})/2$ violates
$I_{\BC}^4\leq0$ non-maximally, one may be motivated to use the
non-local distribution $\tilde{p}_{\NL}=(p_{\NL}+p_{\NL}^*)/2$ in place
of $p_{\NL}$ in the
definition of $p_{\iso,\epsilon}^{(2,2,3,3)}$, i.e., to take
\begin{equation*}
    \tilde{p}_{\iso,\epsilon}^{(2,2,3,3)}=\epsilon \tilde{p}_{\NL}+(1-\epsilon)p_{\Cl}^{(2,2,3,3)}.
\end{equation*}
One could then consider whether for $\epsilon \in (1/2,4/7]$, $\tilde{p}_{\E,\epsilon,v}^{(2,2,3,3)}=v\tilde{p}_{\iso,\epsilon}^{(2,2,3,3)}+(1-v)p_{\Cl}^{(2,2,3,3)}$ violates $I_{\BC}^4\leq0$. Interestingly, while $\tilde{p}_{\E,\epsilon,v}^{(2,2,3,3)}$ violates $I_{\BC}^4\leq0$ for a larger range of $v$ values whenever $\epsilon> 4/7$, it does not give any violation (for any value of $v$) when $\epsilon \leq 4/7$, and Propositions~\ref{prop:nonloc} and \ref{proposition: BCshan2233} also hold if $\tilde{p}_{\iso,\epsilon}^{(2,2,3,3)}$ replaces $p_{\iso,\epsilon}^{(2,2,3,3)}$ (see Figure~\ref{fig: 2233shantsal2} for an illustration). The corresponding results also hold for the Tsallis case with $q>1$, i.e., Proposition~\ref{proposition: BCtsal2233} also holds with $\tilde{p}_{\iso,\epsilon}^{(2,2,3,3)}$ replacing $p_{\iso,\epsilon}^{(2,2,3,3)}$ (see also Figure~\ref{fig: 2233shantsal}). These suggest that mixing with relabellings in addition to mixing with classical distributions may also not help to violate entropic inequalities when $\epsilon \leq 4/7$.\bigskip

In the remainder of this section we consider the full set of LOSR+E operations. We first note that all input coarse-grainings of $p_{\iso,\epsilon}^{(2,2,3,3)}$ result in local distributions (there are no Bell inequalities if one party has only one input). Similarly, considering output coarse-grainings, whenever three outcomes are mapped to one the resulting distribution is always classical because there are no Bell inequalities if one party always makes a fixed outcome for one of their inputs. We henceforth only consider coarse-grainings that take two outcomes to one. We can choose two of the three outcomes to combine into one for each party and each local input. For the four input choices $\{A=0,A=1,B=0,B=1\}$, there are 81, 108, 54 and 12 distinct coarse-grainings of this type when the outcomes of either 4, 3, 2 or 1 input choices are coarse-grained. Thus there are a total of $255$ coarse-grainings that remain.

If we apply all such coarse-grainings to $p_{\iso, \epsilon}^{(2,2,3,3)}$, this generates $255$ possible distributions that we denote $\{p^{\CG,i}_\epsilon\}_i$, $i\in[255]$. There are also $432$ distinct local relabellings of $p_{\iso,\epsilon}^{(2,2,3,3)}$, which we denote by $\{p^{\RL,j}_\epsilon\}_j$, $j\in[432]$ (this set includes $p_{\iso,\epsilon}^{(2,2,3,3)}=p^{\RL,1}_\epsilon$). Due to symmetries of $p_{\iso,\epsilon}^{(2,2,3,3)}$, it turns out that exchanging parties can be achieved through local relabellings for these distributions, so we do not need to separately consider the exchange in our results pertaining to $p_{\iso,\epsilon}^{(2,2,3,3)}$.  The set of all distributions that can be achieved through a convex mixture of $p_{\iso,\epsilon}^{(2,2,3,3)}$ with its coarse-grainings, relabellings and classical distributions is a convex polytope $\Pi_{\epsilon}$ for each $\epsilon$ and is the convex hull of these $255+432+81=768$ points, i.e.,
$$\Pi_{\epsilon}:=\Conv\left(\{p^{\CG,i}_\epsilon\}_i\bigcup\{p^{\RL,j}_\epsilon\}_j\bigcup\{p^{\Loc,k}\}_k\}\right)\,.$$
We present the results for the remaining coarse-grainings and relabellings separately below. Firstly, we show that the coarse-grainings $\{p^{\CG,i}_\epsilon\}_i$ of $p_{\iso, \epsilon}^{(2,2,3,3)}$ are classical if and only if $\epsilon\leq 4/7$.

\begin{restatable}{proposition}{PropCGs}
\label{proposition: CG}
The distribution $p^{\CG,i}_\epsilon$ is classical for all $i$ if and
only if $\epsilon \leq 4/7$.
\end{restatable}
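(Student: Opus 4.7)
The plan is to prove the two directions separately, leveraging Proposition~\ref{prop:nonloc} together with the way CHSH-type inequalities in the $(2,2,3,3)$ scenario arise from output coarse-grainings.

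For the ``only if'' direction, assume $\epsilon > 4/7$. By Proposition~\ref{prop:nonloc}, $p_{\iso,\epsilon}^{(2,2,3,3)}$ violates some CHSH-type inequality. As noted in Section~\ref{ssec: 2233rev}, every such inequality is obtained by first applying a specific output coarse-graining $\phi$ (combining two of the three outcomes into one, for each party and each input) and then evaluating a standard $(2,2,2,2)$ CHSH inequality on the result. Consequently, $\phi(p_{\iso,\epsilon}^{(2,2,3,3)})$ equals $p^{\CG,i^*}_\epsilon$ for some $i^*$, and it is a $(2,2,2,2)$ distribution violating a CHSH inequality, hence non-classical.

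For the ``if'' direction, assume $\epsilon \leq 4/7$, so that by Proposition~\ref{prop:nonloc} the distribution $p_{\iso,\epsilon}^{(2,2,3,3)}$ satisfies all CHSH-type inequalities. The strategy is first to reduce to the boundary case $\epsilon = 4/7$ by convexity, writing
\begin{equation*}
p_{\iso,\epsilon}^{(2,2,3,3)} = \tfrac{7\epsilon}{4}\, p_{\iso,4/7}^{(2,2,3,3)} + \bigl(1 - \tfrac{7\epsilon}{4}\bigr)\, p_{\noise}^{(2,2,3,3)}.
\end{equation*}
Since coarse-graining is a linear map on distributions, coarse-grainings of $p_{\noise}^{(2,2,3,3)}$ remain uniform (and hence local in any Bell scenario), and the local set is convex, it suffices to show locality of every one of the 255 coarse-grainings of $p_{\iso,4/7}^{(2,2,3,3)}$. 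Conceptually, this holds because any Bell scenario obtained from $(2,2,3,3)$ by collapsing at least one input-output pair to two outcomes admits only (lifted) CHSH-type inequalities as non-trivial local-polytope facets: the $I_{2233}$-type inequalities require the full three-outcome structure for every input of both parties and cannot survive such a coarse-graining. Every CHSH-type facet of the reduced polytope pulls back through the coarse-graining to a CHSH-type inequality of $(2,2,3,3)$, which is satisfied by $p_{\iso,4/7}^{(2,2,3,3)}$.

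The main obstacle is justifying the structural claim that the local polytopes of the mixed scenarios arising from partial coarse-graining (such as $(2,2,2,3)$ and its relatives with input-dependent output cardinalities) are characterized only by lifted CHSH inequalities. This can be handled either by direct facet enumeration of these finite polytopes (e.g., using \textsc{Porta}) or bypassed entirely by directly verifying locality of each of the 255 coarse-grainings of $p_{\iso,4/7}^{(2,2,3,3)}$ via a membership linear program; combined with the convexity reduction above, this closes the ``if'' direction.
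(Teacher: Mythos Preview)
Your proposal is essentially correct and, for the ``if'' direction, follows the same strategy as the paper: verify locality of all 255 coarse-grainings at the boundary $\epsilon=4/7$ by linear programming, and extend to $\epsilon<4/7$ by monotonicity. Your convexity reduction (writing $p_{\iso,\epsilon}^{(2,2,3,3)}$ as a mixture of $p_{\iso,4/7}^{(2,2,3,3)}$ and noise) is a clean way to phrase this, though note one slip: coarse-grainings of $p_{\noise}^{(2,2,3,3)}$ do \emph{not} remain uniform (merging two of three equiprobable outcomes gives $(2/3,1/3)$), but they do remain product distributions and hence local, which is all you need. Your speculative structural route via ``mixed'' scenarios having only CHSH-type facets is plausible but, as you correctly note, would require separate verification; the paper does not pursue this and neither need you.

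For the ``only if'' direction you take a different and somewhat cleaner path than the paper. The paper exhibits an explicit full coarse-graining (merging outcomes $0$ and $1$ for every input), evaluates the $I_{2233}$ expression on the resulting effectively two-outcome distribution, and checks that it exceeds~2 precisely when $\epsilon>4/7$ (supplemented by an LP computation of the local weight). You instead invoke Proposition~\ref{prop:nonloc} together with the fact that each CHSH-type inequality in $(2,2,3,3)$ is, by construction, the pull-back of a $(2,2,2,2)$ CHSH inequality through an output coarse-graining; hence a CHSH-type violation immediately yields a non-classical coarse-graining among the $\{p^{\CG,i}_\epsilon\}_i$. This is a valid and more conceptual argument, avoiding the explicit computation; the paper's approach has the advantage of being self-contained and identifying a concrete witness.
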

This is intuitive because $p_{\iso,\epsilon}^{(2,2,3,3)}$ satisfies all the CHSH-type inequalities if and only if $\epsilon\leq4/7$. Since coarse-grainings cannot generate non-classicality and correspond to reducing the number of outcomes, and since $I_{2233}$ requires three outcomes, after coarse-graining the only relevant thing is whether there is a CHSH-violation.  A full proof is given in Appendix~\ref{appendix: proofs}.

Proposition~\ref{proposition: CG} implies that $\Pi_{\epsilon}=\Conv(\{p^{\RL,j}_\epsilon\}_j\bigcup\{p^{\Loc,k}\}_k)$ $\forall \epsilon \leq 4/7$, and that it is not necessary to consider coarse-grainings for such values of $\epsilon$. Our next results are that if Conjectures~\ref{conjecture: 2233shan} and~\ref{conjecture: 2233tsal} hold for $p_{\iso, \epsilon}^{(2,2,3,3)}$ for $\epsilon \leq 4/7$, then they continue to hold even when we consider arbitrary convex combinations with classical distributions and local relabellings of $p_{\iso, \epsilon}^{(2,2,3,3)}$.

\begin{restatable}{proposition}{ProprelabsA}
\label{proposition: relab1}
Let $\epsilon\leq4/7$. If Conjecture~\ref{conjecture: 2233shan} holds, then every distribution in $\Pi_{\epsilon}$ is Shannon entropically classical. 
\end{restatable}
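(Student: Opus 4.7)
The plan is to verify that every $q \in \Pi_\epsilon$ satisfies all four BC inequalities~\eqref{eq: BCineqs}; by Lemma~\ref{lem:BCcomplete} this is equivalent to Shannon entropic classicality. Since $\epsilon \leq 4/7$, Proposition~\ref{proposition: CG} tells us that every coarse-grained distribution $p^{\CG,i}_\epsilon$ is already classical, so we may simplify $\Pi_\epsilon = \Conv(\{p^{\RL,j}_\epsilon\}_j \cup \{p^{\Loc,k}\}_k)$ and only worry about the relabelled and classical parts.

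Next I would invoke a symmetry observation: any local relabelling $\sigma$ permutes the coexisting sets $\mathcal{S}$ among themselves, so it acts on an entropy vector by a coordinate permutation and correspondingly permutes the four BC inequalities. Consequently a distribution is entropically classical iff any of its local relabellings is, and Conjecture~\ref{conjecture: 2233shan} extends by symmetry to every relabelled reference point: for every $j$, every $v \in [0,1]$ and every classical distribution $c$, the mixture $v\, p^{\RL,j}_\epsilon + (1-v)\, c$ is entropically classical.

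I would then decompose a generic $q \in \Pi_\epsilon$ as $q = \sum_j \lambda_j p^{\RL,j}_\epsilon + \sum_k \mu_k p^{\Loc,k}$ and use $p^{\RL,j}_\epsilon = \epsilon\, \sigma_j(p_\NL) + (1-\epsilon)\, p_{\noise}^{(2,2,3,3)}$ together with the relabelling-invariance of the uniform distribution to rewrite
\[
q = \lambda\epsilon\, \hat{p}_\NL + (1 - \lambda\epsilon)\, c',
\]
where $\lambda = \sum_j \lambda_j$, $\hat{p}_\NL = \tfrac{1}{\lambda}\sum_j \lambda_j\, \sigma_j(p_\NL)$ is a convex combination of $I_{2233}$-vertices, and $c'$ is the classical normalisation of $(1-\epsilon)\lambda\, p_\noise^{(2,2,3,3)} + \sum_k \mu_k p^{\Loc,k}$. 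In the easy special case where only one $\sigma_j$ appears, $\hat{p}_\NL$ is a single relabelled $p_\NL$ and the result follows immediately from the symmetrised conjecture.

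The main obstacle is the generic case where $\hat{p}_\NL$ is a genuine mixture of several distinct relabellings: the conjecture does not apply directly and the BC expression is non-linear in the distribution. My proposal is to exploit two structural facts. First, all $\sigma_j(p_\NL)$ share uniform single-variable marginals on each input, so $H_q(X_a)$ and $H_q(Y_b)$ are determined entirely by $(\lambda,c')$ and are independent of the individual weights $\lambda_j/\lambda$; this fixes three of the six terms in each $I_{\BC}^n(q)$. Second, by concavity of the Shannon entropy, each joint entropy satisfies $H_q(X_aY_b) \geq \sum_j (\lambda_j/\lambda)\, H_{q_j}(X_aY_b)$ with $q_j := \lambda\epsilon\, \sigma_j(p_\NL) + (1-\lambda\epsilon) c'$, and each $q_j$ is entropically classical by the symmetrised conjecture. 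The delicate point, and the main technical difficulty, is that concavity pushes the three negative joint-entropy terms of $I_{\BC}^n(q)$ in the favourable direction but the single positive joint-entropy term in the wrong direction, so the balancing must be done quantitatively. A natural reduction is to first symmetrise $q$ over the stabiliser subgroup of $p_{\iso,\epsilon}$ inside the relabelling group (which keeps $q$ inside $\Pi_\epsilon$ and does not decrease any entropy), thereby collapsing the family of distinct $\sigma_j(p_\NL)$ terms to a much smaller orbit-structure that can be handled by direct comparison with the single-relabelling bound from the conjecture.
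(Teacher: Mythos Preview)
Your setup is fine: the reduction to $\Pi_\epsilon=\Conv(\{p^{\RL,j}_\epsilon\}_j\cup\{p^{\Loc,k}\}_k)$ via Proposition~\ref{proposition: CG}, and the symmetry argument that Conjecture~\ref{conjecture: 2233shan} transfers to each polytope $\cP_j:=\Conv(\{p^{\RL,j}_\epsilon\}\cup\{p^{\Loc,k}\}_k)$, are exactly what the paper does. Where you diverge is in the treatment of a point $q$ that is a genuine mixture of several distinct relabellings. You try to attack this directly via concavity of the joint entropies, and you correctly identify that the positive joint term $H(X_1Y_1)$ goes the wrong way. Your proposed fix---averaging $q$ over the stabiliser of $p_{\iso,\epsilon}^{(2,2,3,3)}$---is not spelled out, and I do not see how it resolves the sign problem: the stabiliser is a subgroup of the relabelling group, averaging over it keeps you inside $\Pi_\epsilon$ and (by concavity) can only raise all joint entropies, so the troublesome positive term is still pushed in the unfavourable direction. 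As stated, this step is a genuine gap.

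The paper sidesteps the whole difficulty by a geometric argument rather than an entropic one. The key additional fact (Proposition~\ref{proposition: R1}, checked by linear programming) is that for $\epsilon\leq4/7$ the midpoint $\tfrac12 p^{\RL,i}_\epsilon+\tfrac12 p^{\RL,j}_\epsilon$ is already \emph{local} whenever $i\neq j$. This means the segment $[p^{\RL,i}_\epsilon,p^{\RL,j}_\epsilon]$ lies entirely in $\cP_i\cup\cP_j$, and a convexity criterion for unions of polytopes (Theorem~\ref{theorem: bemporad}, Bemporad et al.) then yields that $\bigcup_j\cP_j$ is convex, hence equal to its convex hull $\Pi_\epsilon$. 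In other words, every $q\in\Pi_\epsilon$ actually lies in some single $\cP_j$, so the ``main obstacle'' you identify never arises: the single-relabelling case of the symmetrised conjecture already covers everything, and Lemma~\ref{lem:BCcomplete} finishes as you say. You were missing precisely this locality-of-midpoints fact; with it, no concavity bookkeeping is needed at all.
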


\begin{restatable}{proposition}{ProprelabsB}
\label{proposition: relab2}
Let $\epsilon\leq4/7$. If Conjecture~\ref{conjecture: 2233tsal} holds, then every distribution in $\Pi_{\epsilon}$ satisfies the Tsallis entropic BC Inequalities~\eqref{eq: BCineqsTs} $\forall q>1$. 
\end{restatable}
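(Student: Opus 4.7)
The plan is to combine the structural reduction of Proposition~\ref{proposition: CG} with a symmetry argument that lifts Conjecture~\ref{conjecture: 2233tsal} to each relabelled isotropic, and then to extend from single-relabelling mixtures to arbitrary convex combinations in $\Pi_\epsilon$.

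First I would invoke Proposition~\ref{proposition: CG}: since $\epsilon\leq 4/7$, every coarse-grained distribution $p^{\CG,i}_\epsilon$ is classical, hence can be rewritten as a convex combination of the local deterministic distributions $\{p^{\Loc,k}\}_k$. This eliminates the coarse-grainings from the definition of $\Pi_\epsilon$, giving
\begin{equation*}
\Pi_\epsilon \;=\; \Conv\bigl(\{p^{\RL,j}_\epsilon\}_j \cup \{p^{\Loc,k}\}_k\bigr),
\end{equation*}
so an arbitrary $p\in\Pi_\epsilon$ has the form $p=\sum_j\mu_j p^{\RL,j}_\epsilon+\sum_k\nu_k p^{\Loc,k}$.

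Second, I would transfer Conjecture~\ref{conjecture: 2233tsal} to each $p^{\RL,j}_\epsilon$ by symmetry. Writing $p^{\RL,j}_\epsilon=\sigma_j(p_{\iso,\epsilon}^{(2,2,3,3)})$ for some local relabelling $\sigma_j$, I note that output relabellings leave all Tsallis entropies invariant while input relabellings and party exchange permute $\{I^{m}_{\BC,q}\}_{m=1}^{4}$ among themselves, so the set of four Tsallis BC inequalities is stable under $\sigma_j$. Combined with the fact that $\sigma_j$ maps the classical polytope onto itself, Conjecture~\ref{conjecture: 2233tsal} then implies that for every $j$, every $v\in[0,1]$, every $q>1$ and every classical $p^{\Cl}$, the mixture $v\,p^{\RL,j}_\epsilon+(1-v)\,p^{\Cl}$ satisfies all four Tsallis BC inequalities in~\eqref{eq: BCineqsTs}.

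The hard part will be the extension from such single-relabelling mixtures to an arbitrary $p\in\Pi_\epsilon$. The main obstacle is that the Tsallis BC inequalities are nonlinear in the underlying distribution: although $S_q$ is concave in $p$ for $q\geq 1$, the positive and negative entropy terms in~\eqref{eq: BCineqsTs} give concavity bounds in opposite directions, so one cannot simply average componentwise across the decomposition of $p$. My plan would be to use Proposition~\ref{proposition: probspace} as a guide: since any non-classical $p\in\Pi_\epsilon\subseteq\Pi_{\chsh}^{(2,2,3,3)}$ violates exactly one $I_{2233}$-type inequality, say $I_{2233}^{j^{\star}}$, this identifies the unique relabelled isotropic direction responsible for the nonclassicality. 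I would then try to exhibit $p$ in the form $p=\lambda\,p^{\RL,j^{\star}}_{\tilde\epsilon}+(1-\lambda)\,p^{\Cl^{\star}}$ with $\lambda\in[0,1]$, some $\tilde\epsilon\in(0,\epsilon]$ and $p^{\Cl^{\star}}$ classical, by absorbing the remaining contributions $\{\mu_j p^{\RL,j}_\epsilon\}_{j\neq j^{\star}}$ together with suitable fractions of the white-noise pieces of each $p^{\RL,j}_\epsilon$ into a classical remainder; verifying that this remainder lies in $\mathcal{L}^{(2,2,3,3)}$ is then a purely linear check against the CHSH- and $I_{2233}$-facets catalogued in Section~\ref{ssec: 2233rev}. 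Once such a decomposition is in hand, the conclusion follows immediately by applying the symmetry-lifted version of Conjecture~\ref{conjecture: 2233tsal} to $p^{\RL,j^{\star}}_{\tilde\epsilon}$ with mixing weight $\lambda$, since $\tilde\epsilon\leq 4/7$.
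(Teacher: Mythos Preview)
Your first two steps---eliminating coarse-grainings via Proposition~\ref{proposition: CG} and lifting the conjecture to each relabelled isotropic by the symmetry of the four Tsallis BC inequalities under local relabellings---are correct and match the paper exactly.

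The gap is in your third step. What you need is the geometric fact that every $p\in\Pi_\epsilon$ lies in $\Conv(\{p^{\RL,j^\star}_\epsilon\}\cup\{p^{\Loc,k}\}_k)$ for \emph{some} single index $j^\star$; equivalently, that the union $\bigcup_j\Conv(\{p^{\RL,j}_\epsilon\}\cup\{p^{\Loc,k}\}_k)$ already equals $\Pi_\epsilon$. You assert this by saying the classical remainder can be verified by ``a purely linear check'', but that only tells you how to \emph{test} a candidate remainder, not that a successful candidate exists. Your proposed mechanism---peeling off white-noise pieces from the $p^{\RL,j}_\epsilon$ with $j\neq j^\star$---does not by itself force the leftover to be local: the nonclassical part of each $p^{\RL,j}_\epsilon$ is $\epsilon\,\sigma_j(p_{\NL})$, and a convex combination of several such pieces need not become local just because you have stripped away the noise. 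Using Proposition~\ref{proposition: probspace} to pick $j^\star$ is a reasonable heuristic, but it does not supply the missing locality argument.

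The paper closes this gap with a different and rather clean device. It first proves (Proposition~\ref{proposition: R1} and its Corollary) that for $\epsilon\leq 4/7$ the midpoint $\tfrac12 p^{\RL,i}_\epsilon+\tfrac12 p^{\RL,j}_\epsilon$ is \emph{local} whenever $i\neq j$; this is established by a finite linear-programming check at $\epsilon=4/7$ together with monotonicity in $\epsilon$. Then it invokes a polytope-union convexity criterion (Theorem~\ref{theorem: bemporad}): since every segment between vertices of two polytopes $\cP_i$ and $\cP_j$ stays in $\cP_i\cup\cP_j$, the union $\bigcup_j\cP_j$ is convex and hence coincides with $\Pi_\epsilon$. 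That is exactly the decomposition you were reaching for, but proved rather than asserted.
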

These are proven in Appendix~\ref{appendix: proofs} and give the
following corollary.

\begin{corollary}\label{corollary: final}
  Let $\epsilon\leq4/7$. If Conjectures~\ref{conjecture: 2233shan} and~\ref{conjecture: 2233tsal} hold, then for any operation $\mathcal{O}$ in LOSR+E, $\mathcal{O}(p_{\iso, \epsilon}^{(2,2,3,3)})$ does not violate a Shannon or Tsallis ($q>1$) entropic BC inequality.
 \end{corollary}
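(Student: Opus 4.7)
The plan is to derive this directly from Propositions~\ref{proposition: relab1} and~\ref{proposition: relab2} by establishing the containment $\mathcal{O}(p_{\iso, \epsilon}^{(2,2,3,3)}) \in \Pi_{\epsilon}$ for every LOSR+E operation $\mathcal{O}$ and every $\epsilon \leq 4/7$. Once that containment is in hand the conclusion follows immediately: Proposition~\ref{proposition: relab1} (under Conjecture~\ref{conjecture: 2233shan}) rules out violation of any Shannon BC inequality, and Proposition~\ref{proposition: relab2} (under Conjecture~\ref{conjecture: 2233tsal}) rules out violation of any Tsallis BC inequality for $q>1$.

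To obtain the containment I would use the characterization of LOSR+E recalled in Section~\ref{sec:pp}: every such operation is a convex combination of deterministic LOSR maps, each optionally preceded by an exchange of the two parties, and every deterministic LOSR map decomposes into local input relabellings together with local output operations that may include coarse-grainings. Applied to $p_{\iso, \epsilon}^{(2,2,3,3)}$, a deterministic LOSR map that uses no non-trivial coarse-graining produces one of the relabellings $p^{\RL, j}_\epsilon$; otherwise it produces a coarse-graining of such a relabelling. Since local relabellings form a group under composition, such a coarse-graining of a relabelling can be rewritten as a relabelling of some coarse-graining $p^{\CG, i}_\epsilon$ of $p_{\iso, \epsilon}^{(2,2,3,3)}$ itself, and by Proposition~\ref{proposition: CG} each $p^{\CG, i}_\epsilon$ is classical when $\epsilon \leq 4/7$; relabellings preserve classicality, so these contributions lie in $\Conv(\{p^{\Loc, k}\}_k)$. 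Finally, the symmetry observation recorded in Section~\ref{ssec: CGrelab}---that exchanging the two parties on $p_{\iso, \epsilon}^{(2,2,3,3)}$ coincides with some local relabelling---lets me absorb the exchange step into the relabelling, so it does not enlarge the set of reachable distributions.

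The main obstacle is not any hard estimate but the bookkeeping required to push an arbitrary LOSR+E operation through $p_{\iso, \epsilon}^{(2,2,3,3)}$: one must check that input and output relabellings compose as a group, that coarse-grainings commute with relabellings up to a possibly different coarse-graining, and that the party-exchange can be absorbed. Once these structural facts are collected, the output is a convex combination of $\{p^{\RL,j}_\epsilon\}_j$ and $\{p^{\Loc,k}\}_k$, which is exactly $\Pi_{\epsilon}$ in the regime $\epsilon \leq 4/7$, and the two preceding propositions close the argument.
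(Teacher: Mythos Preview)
Your proposal is correct and follows exactly the route the paper intends: the corollary is stated as an immediate consequence of Propositions~\ref{proposition: relab1} and~\ref{proposition: relab2}, and the only missing link is the containment $\mathcal{O}(p_{\iso,\epsilon}^{(2,2,3,3)})\in\Pi_\epsilon$, which you supply via the bookkeeping about relabellings, coarse-grainings, and the party-exchange symmetry already recorded in Section~\ref{ssec: CGrelab}. The paper leaves these details implicit (it simply asserts that the two propositions ``give'' the corollary), so your write-up is a faithful and slightly more explicit rendering of the same argument.
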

  
\section{Discussion}
\label{sec: disc&conc}
We have provided evidence that there are distributions in the $(2,2,3,3)$ scenario for which arbitrary LOSR+E operations do not enable detection of non-classicality with any Shannon entropic inequalities or Tsallis entropic BC inequalities. This is in contrast to the $(2,2,2,2)$ scenario~\cite{Chaves13}, where for any non-classical distribution, there is always a simple LOSR operation that results in a distribution violating one of the Shannon BC inequalities. In order that BC inequalities do not detect non-classicality we need that the distributions are non-classical while at the same time satisfying all CHSH-type Bell inequalities. Having found all the vertices that characterize this region, we identify distributions in this region that violate BC entropic inequalities. Thus, the set of all non-classical distributions in the $(2,2,3,3)$ scenario that cannot be certified through entropic inequalities under LOSR+E post-processings is not characterized by the CHSH-type inequalities.

Although we considered LOSR+E operations, a natural next question is to what extent the results can be generalized to more general NCNG operations. In particular, there could be a non-linear NCNG map that allows the entropic BC inequalities to detect a wider range of non-classical distributions. It would be interesting to see whether for any non-classical distribution of the form $p_{\iso,\epsilon}^{(2,2,3,3)}$ with $1/2<\epsilon\leq 4/7$ (conjectured to be entropically classical with respect to LOSR+E), one of these more general operations would allow its non-classicality to be detected entropically. We leave this as an open question. 

In~\cite{Wajs15} it was shown that (without post-processing) Tsallis entropic inequalities can detect non-classicality undetectable by Shannon entropic inequalities in the $(2,2,2,2)$ and $(2,2,3,3)$ Bell scenarios. In the presence of LOSR+E operations, we did not find any advantage of Tsallis entropies over the Shannon entropy in the $(2,2,3,3)$ Bell scenario. In fact, for some non-classical distributions such as that of Equation~\eqref{eq: countereg}, the Shannon entropic inequalities appear to give the largest violations among Tsallis entropies with $q\geq1$ (which corresponds to those for which the BC inequalities can be derived in the classical case). On the other hand, for the family of distributions $p_{\iso,\epsilon}^{(2,2,3,3)}$, our results suggest that the range of $\epsilon$ for which post-processing via mixing with classical distributions enables non-classicality detection is the same for the Shannon as well as Tsallis entropic BC inequalities for $q>1$. However, when entropic detection of non-classicality is possible, using Tsallis entropy can make it easier to do this detection in the sense that there is a wider range of mixings that achieve this (see Figure~\ref{fig: 2233shantsal}).\footnote{As a specific example, consider the distribution $p_{\E,\epsilon=0.7,v=0.4}^{(2,2,3,3)}$. Figure~\ref{fig: 2233shantsal1} indicates that this distribution violates the Tsallis entropic inequality $I_{\BC,q=2}^4 \leq 0$ but does not violate the Shannon entropic inequality $I_{\BC}^4 \leq 0$. However, we can always further mix $p_{\E,\epsilon=0.7,v=0.4}^{(2,2,3,3)}$ with the classical distribution $p_{\Cl}^{(2,2,3,3)}$ to obtain $0.05p_{\E,\epsilon=0.7,v=0.4}^{(2,2,3,3)}+0.95p_{\Cl}^{(2,2,3,3)}=0.02p_{\iso, \epsilon}^{(2,2,3,3)}+0.98p_{\Cl}^{(2,2,3,3)}$ which violates the Shannon entropic inequality $I_{\BC}^4\leq0$. This is also in agreement with the results of \cite{Wajs15}, since when mixing is not considered, we also find examples where it is advantageous to use Tsallis entropy in the $(2,2,3,3)$ scenario.}

In conclusion, while the entropic approach for detecting non-classicality is useful in a number of scenarios, it is known to have disadvantages in others.  In particular, in the absence of post-selection we are not aware of any cases where entropic inequalities can be violated~\cite{Weilenmann16, Weilenmann18, Vilasini2019}. Here, we find that the entropic approach also suffers drawbacks in the presence of post-selection as it may fail to detect non-classicality under a natural class of post-processing operations, both in the case of Shannon and Tsallis entropies. 
However, 
this method remains of use since in many cases non-classicality can be detected using it.

\begin{acknowledgements}
We thank Mirjam Weilenmann for useful discussions. VV acknowledges financial support from the Department of Mathematics, University of York. RC is supported by EPSRC's Quantum Communications Hub (grant numbers EP/M013472/1 and EP/T001011/1) and by an EPSRC First Grant (grant number EP/P016588/1).
\end{acknowledgements}


\onecolumngrid
\appendix

\section*{\textsc{Appendix}}
\section{Characterisation of \texorpdfstring{$\Pi_{\chsh}^{(2,2,3,3)}$}{the CHSH-classical region}}
\label{appendix: vertices}
Table~\ref{tab:vertices} enumerates the 47 extremal points of the polytope $\Pi_{\chsh}^{(2,2,3,3)}$ that saturate or violate the inequality ${I_{2233}\leq2}$~\eqref{eq: CGLMP} while satisfying all the CHSH-type inequalities in the $(2,2,3,3)$ Bell scenario. The first 17 of these are non-classical while the remaining 30 are local deterministic vertices. Due to Proposition~\ref{proposition: probspace} and the symmetries of the scenario, the remaining vertices of $\Pi_{\chsh}^{(2,2,3,3)}$ can be generated by taking the orbit of these vertices under local relabellings and exchange of parties. In Table~\ref{tab:vertices}, each extremal point is given by a single 36 dimensional vector which corresponds to writing the point in the notation explained in Section~\ref{ssec: prob_ent} (a $6\times 6$ matrix) and ``flattening'' it by writing one row after another in order.
{\setlength{\tabcolsep}{1em}
\begin{table}[]
    \centering
    \begin{tabular}{|c|c|}
    \hline
    Number & Vertex\\
    \hline
       1  &  $\frac{1}{6}$(1, 1, 0, 1, 0, 1, 0, 1, 1, 1, 1, 0, 1, 0, 1, 0, 1, 1, 1, 0, 1, 0, 2, 0, 1, 1, 0, 0, 0, 2, 0, 1, 1, 2, 0, 0) \\
       2  &  $\frac{1}{6}$(1, 1, 0, 1, 0, 1, 0, 1, 1, 1, 1, 0,  1, 0, 1, 0, 1, 1, 2, 0, 0, 0, 1, 1, 0, 2, 0, 1, 0, 1, 0, 0, 2, 1, 1, 0)\\
       3  &  $\frac{1}{6}$(1, 1, 0, 2, 0, 0, 0, 1, 1, 0, 2, 0, 1, 0, 1, 0, 0, 2, 1, 0, 1, 0, 1, 1, 1, 1, 0, 1, 0, 1, 0, 1, 1, 1, 1, 0)\\
       4 &  $\frac{1}{6}$(2, 0, 0, 1, 0, 1, 0, 2, 0, 1, 1, 0, 0, 0, 2, 0, 1, 1, 1, 0, 1, 0, 1, 1, 1, 1, 0, 1, 0, 1, 0, 1, 1, 1, 1, 0)\\
       5  &  $\frac{1}{5}$(1, 0, 0, 1, 0, 0, 0, 1, 1, 1, 1, 0, 1, 0, 1, 0, 0, 2, 1, 0, 1, 0, 1, 1, 1, 1, 0, 1, 0, 1, 0, 0, 1, 1, 0, 0)\\
       6  &  $\frac{1}{5}$(1, 0, 0, 1, 0, 0, 0, 1, 1, 1, 1, 0, 1, 0, 1, 0, 1, 1, 1, 0, 0, 0, 1, 0, 1, 1, 0, 1, 0, 1, 0, 0, 2, 1, 1, 0)\\
       7  &  $\frac{1}{5}$(1, 0, 0, 1, 0, 0, 0, 1, 1, 1, 1, 0, 1, 0, 1, 0, 1, 1, 1, 0, 1, 0, 2, 0, 1, 1, 0, 1, 0, 1, 0, 0, 1, 1, 0, 0)\\
       8  &  $\frac{1}{5}$(1, 0, 0, 1, 0, 0, 0, 2, 0, 1, 1, 0, 1, 0, 1, 0, 1, 1, 1, 0, 0, 0, 1, 0, 1, 1, 0, 1, 0, 1, 0, 1, 1, 1, 1, 0)\\
       9  &  $\frac{1}{5}$(1, 1, 0, 1, 0, 1, 0, 1, 0, 0, 1, 0, 0, 0, 2, 0, 1, 1, 1, 0, 1, 0, 1, 1, 0, 1, 0, 0, 0, 1, 0, 1, 1, 1, 1, 0)\\
       10  &  $\frac{1}{5}$(1, 1, 0, 1, 0, 1, 0, 1, 0, 0, 1, 0, 1, 0, 1, 0, 1, 1, 1, 0, 0, 0, 1, 0, 1, 1, 0, 0, 0, 2, 0, 1, 1, 1, 1, 0)\\
       11 &  $\frac{1}{5}$(1, 1, 0, 1, 0, 1, 0, 1, 0, 0, 1, 0, 1, 0, 1, 0, 1, 1, 2, 0, 0, 0, 1, 1, 0, 1, 0, 0, 0, 1, 0, 1, 1, 1, 1, 0)\\
       12 &  $\frac{1}{5}$(1, 1, 0, 1, 0, 1, 0, 1, 1, 0, 2, 0, 0, 0, 1, 0, 0, 1, 1, 0, 1, 0, 1, 1, 0, 1, 0, 0, 0, 1, 0, 1, 1, 1, 1, 0)\\
       13 &  $\frac{1}{5}$(1, 1, 0, 1, 0, 1, 0, 1, 1, 1, 1, 0, 0, 0, 1, 0, 0, 1, 1, 0, 1, 0, 1, 1, 0, 1, 0, 0, 0,1, 0, 1, 1, 2, 0, 0)\\
       14 &  $\frac{1}{5}$(1, 1, 0, 1, 0, 1, 0, 1, 1, 1, 1, 0, 0, 0, 1, 0, 0, 1, 1, 0, 1, 0, 1, 1, 0, 2, 0, 1, 0, 1, 0, 0, 1, 1, 0, 0)\\
       15 &  $\frac{1}{5}$(1, 1, 0, 2, 0, 0, 0, 1, 0, 0, 1, 0, 1, 0, 1, 0, 1, 1, 1, 0, 0, 0, 1, 0, 1, 1, 0, 1, 0, 1, 0, 1, 1, 1, 1, 0)\\
       16 &  $\frac{1}{5}$(2, 0, 0, 1, 0, 1, 0, 1, 1, 1, 1, 0, 0, 0, 1, 0, 0, 1, 1, 0, 1, 0, 1, 1, 1, 1, 0, 1, 0, 1, 0, 0, 1, 1, 0, 0)\\
       17 &  $\frac{1}{9}$(2, 1, 0, 2, 0, 1, 0, 2, 1, 1, 2, 0, 1, 0, 2, 0, 1, 2, 2, 0, 1, 0, 2, 1, 1, 2, 0, 1, 0, 2, 0, 1, 2, 2, 1, 0)\\
       18 &  (0, 0, 0, 0, 0, 0, 0, 0, 0, 0, 0, 0, 0, 0, 1, 0, 0, 1, 0, 0, 0, 0, 0, 0, 0, 0, 0, 0, 0, 0, 0, 0, 1, 0, 0, 1)\\
       19 &  (0, 0, 0, 0, 0, 0, 0, 0, 0, 0, 0, 0, 0, 0, 1, 0, 0, 1, 0, 0, 0, 0, 0, 0, 0, 0, 1, 0, 0, 1, 0, 0, 0, 0, 0, 0)\\
       20 &  (0, 0, 0, 0, 0, 0, 0, 0, 0, 0, 0, 0, 0, 0, 1, 0, 0, 1, 0, 0, 1, 0, 0, 1, 0, 0, 0, 0, 0, 0, 0, 0, 0, 0, 0, 0)\\
       21 &  (0, 0, 0, 0, 0, 0, 0, 0, 0, 0, 0, 0, 0, 0, 1, 0, 1, 0, 0, 0, 0, 0, 0, 0, 0, 0, 0, 0, 0, 0, 0, 0, 1, 0, 1, 0)\\
       22 &  (0, 0, 0, 0, 0, 0, 0, 0, 0, 0, 0, 0, 0, 0, 1, 0, 1, 0, 0, 0, 1, 0, 1, 0, 0, 0, 0, 0, 0, 0, 0, 0, 0, 0, 0, 0)\\
       23 &  (0, 0, 0, 0, 0, 0, 0, 0, 0, 0, 0, 0, 0, 0, 1, 1, 0, 0, 0, 0, 0, 0, 0, 0, 0, 0, 0, 0, 0, 0, 0, 0, 1, 1, 0, 0)\\
       24 &  (0, 0, 0, 0, 0, 0, 0, 0, 0, 0, 0, 0, 0, 1, 0, 0, 0, 1, 0, 0, 0, 0, 0, 0, 0, 1, 0, 0, 0, 1, 0, 0, 0, 0, 0, 0)\\
       25 &  (0, 0, 0, 0, 0, 0, 0, 0, 0, 0, 0, 0, 1, 0, 0, 0, 0, 1, 0, 0, 0, 0, 0, 0, 1, 0, 0, 0, 0, 1, 0, 0, 0, 0, 0, 0)\\
       26 &  (0, 0, 0, 0, 0, 0, 0, 0, 0, 0, 0, 0, 1, 0, 0, 0, 0, 1, 1, 0, 0, 0, 0, 1, 0, 0, 0, 0, 0, 0, 0, 0, 0, 0, 0, 0)\\
       27 &  (0, 0, 0, 0, 0, 0, 0, 0, 0, 0, 0, 0, 1, 0, 0, 0, 1, 0, 1, 0, 0, 0, 1, 0, 0, 0, 0, 0, 0, 0, 0, 0, 0, 0, 0, 0)\\
       28 &  (0, 0, 0, 0, 0, 0, 0, 0, 1, 0, 1, 0, 0, 0, 0, 0, 0, 0, 0, 0, 0, 0, 0, 0, 0, 0, 0, 0, 0, 0, 0, 0, 1, 0, 1, 0)\\
       29 &  (0, 0, 0, 0, 0, 0, 0, 0, 1, 0, 1, 0, 0, 0, 0, 0, 0, 0, 0, 0, 1, 0, 1, 0, 0, 0, 0, 0, 0, 0, 0, 0, 0, 0, 0, 0)\\
       30 &  (0, 0, 0, 0, 0, 0, 0, 0, 1, 1, 0, 0, 0, 0, 0, 0, 0, 0, 0, 0, 0, 0, 0, 0, 0, 0, 0, 0, 0, 0, 0, 0, 1, 1, 0, 0)\\
       31 &  (0, 0, 0, 0, 0, 0, 0, 1, 0, 0, 0, 1, 0, 0, 0, 0, 0, 0, 0, 0, 0, 0, 0, 0, 0, 1, 0, 0, 0, 1, 0, 0, 0, 0, 0, 0)\\
       32 &  (0, 0, 0, 0, 0, 0, 0, 1, 0, 0, 1, 0, 0, 0, 0, 0, 0, 0, 0, 0, 0, 0, 0, 0, 0, 0, 0, 0, 0, 0, 0, 1, 0, 0, 1, 0)\\
       33 &  (0, 0, 0, 0, 0, 0, 0, 1, 0, 0, 1, 0, 0, 0, 0, 0, 0, 0, 0, 0, 0, 0, 0, 0, 0, 1, 0, 0, 1, 0, 0, 0, 0, 0, 0, 0)\\
       34 &  (0, 0, 0, 0, 0, 0, 0, 1, 0, 0, 1, 0, 0, 0, 0, 0, 0, 0, 0, 1, 0, 0, 1, 0, 0, 0, 0, 0, 0, 0, 0, 0, 0, 0, 0, 0)\\
       35 &  (0, 0, 0, 0, 0, 0, 0, 1, 0, 1, 0, 0, 0, 0, 0, 0, 0, 0, 0, 0, 0, 0, 0, 0, 0, 0, 0, 0, 0, 0, 0, 1, 0, 1, 0, 0)\\
       36 &  (0, 0, 0, 0, 0, 0, 0, 1, 0, 1, 0, 0, 0, 0, 0, 0, 0, 0, 0, 0, 0, 0, 0, 0, 0, 1, 0, 1, 0, 0, 0, 0, 0, 0, 0, 0)\\
       37 &  (0, 0, 0, 0, 0, 0, 1, 0, 0, 0, 1, 0, 0, 0, 0, 0, 0, 0, 1, 0, 0, 0, 1, 0, 0, 0, 0, 0, 0, 0, 0, 0, 0, 0, 0, 0)\\
       38 &  (0, 0, 1, 1, 0, 0, 0, 0, 0, 0, 0, 0, 0, 0, 0, 0, 0, 0, 0, 0, 0, 0, 0, 0, 0, 0, 0, 0, 0, 0, 0, 0, 1, 1, 0, 0)\\
       39 &  (0, 1, 0, 0, 0, 1, 0, 0, 0, 0, 0, 0, 0, 0, 0, 0, 0, 0, 0, 0, 0, 0, 0, 0, 0, 1, 0, 0, 0, 1, 0, 0, 0, 0, 0, 0)\\
       40 &  (0, 1, 0, 1, 0, 0, 0, 0, 0, 0, 0, 0, 0, 0, 0, 0, 0, 0, 0, 0, 0, 0, 0, 0, 0, 0, 0, 0, 0, 0, 0, 1, 0, 1, 0, 0)\\
       41 &  (0, 1, 0, 1, 0, 0, 0, 0, 0, 0, 0, 0, 0, 0, 0, 0, 0, 0, 0, 0, 0, 0, 0, 0, 0, 1, 0, 1, 0, 0, 0, 0, 0, 0, 0, 0)\\
       42 &  (1, 0, 0, 0, 0, 1, 0, 0, 0, 0, 0, 0, 0, 0, 0, 0, 0, 0, 0, 0, 0, 0, 0, 0, 1, 0, 0, 0, 0, 1, 0, 0, 0, 0, 0, 0)\\
       43 &  (1, 0, 0, 0, 0, 1, 0, 0, 0, 0, 0, 0, 0, 0, 0, 0, 0, 0, 1, 0, 0, 0, 0, 1, 0, 0, 0, 0, 0, 0, 0, 0, 0, 0, 0, 0)\\
       44 &  (1, 0, 0, 0, 1, 0, 0, 0, 0, 0, 0, 0, 0, 0, 0, 0, 0, 0, 1, 0, 0, 0, 1, 0, 0, 0, 0, 0, 0, 0, 0, 0, 0, 0, 0, 0)\\
       45 &  (1, 0, 0, 1, 0, 0, 0, 0, 0, 0, 0, 0, 0, 0, 0, 0, 0, 0, 0, 0, 0, 0, 0, 0, 0, 0, 0, 0, 0, 0, 1, 0, 0, 1, 0, 0)\\
       46 &  (1, 0, 0, 1, 0, 0, 0, 0, 0, 0, 0, 0, 0, 0, 0, 0, 0, 0, 0, 0, 0, 0, 0, 0, 1, 0, 0, 1, 0, 0, 0, 0, 0, 0, 0, 0)\\
       47 &  (1, 0, 0, 1, 0, 0, 0, 0, 0, 0, 0, 0, 0, 0, 0, 0, 0, 0, 1, 0, 0, 1, 0, 0, 0, 0, 0, 0, 0, 0, 0, 0, 0, 0, 0, 0)\\
       \hline
    \end{tabular}
    \caption{The vertices of $\Pi_{\chsh}^{(2,2,3,3)}$ that saturate or violate the $I_{2233}$ Inequality~\eqref{eq: CGLMP}. All the vertices of the polytope can be obtained from the vertices listed here through local relabellings or exchange of parties.}
    \label{tab:vertices}
\end{table}}

\section{Plots}
\label{appendix: figures}
\begin{figure}[H]
\centering
    \includegraphics[scale=.3]{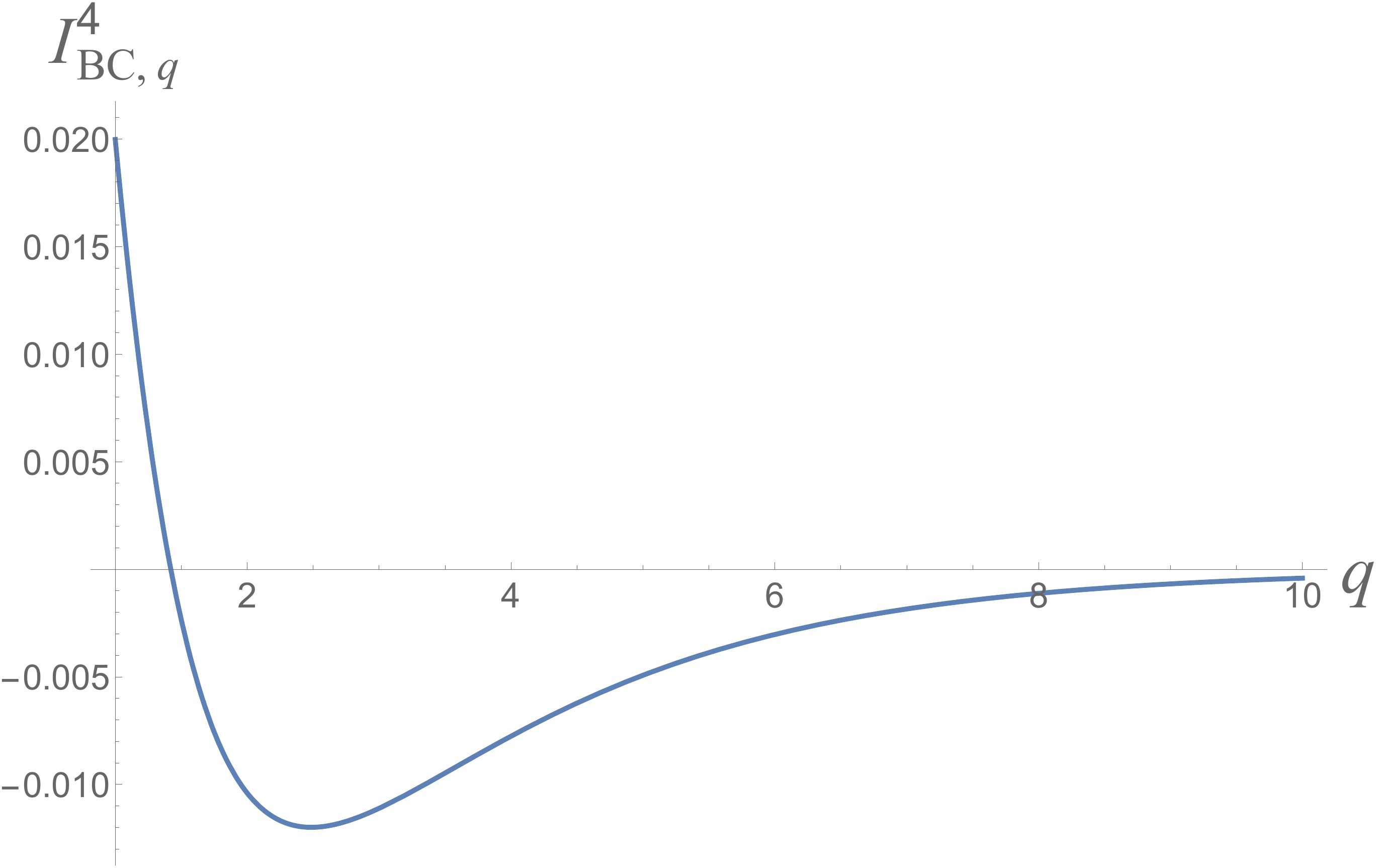}

    \caption{Plot of the $I_{\BC,q}^4$ value as a function of the Tsallis parameter $q$ for the distribution $p_e$ (Equation~\eqref{eq: countereg}). 
      As seen from the plot, the distribution violates the BC inequality $I_{\BC,q}^4\leq 0$ for $q$ values between $1$ (Shannon case) and just under $1.5$ and the violation is maximum in the Shannon case, indicating that it is preferable to use Shannon rather than Tsallis entropies with $q>1$ in this case.}
    \label{fig: countereg_dist}
  \end{figure}
  
\begin{figure}[H]
\centering
\subfloat[]
{
    \includegraphics[scale=.5]{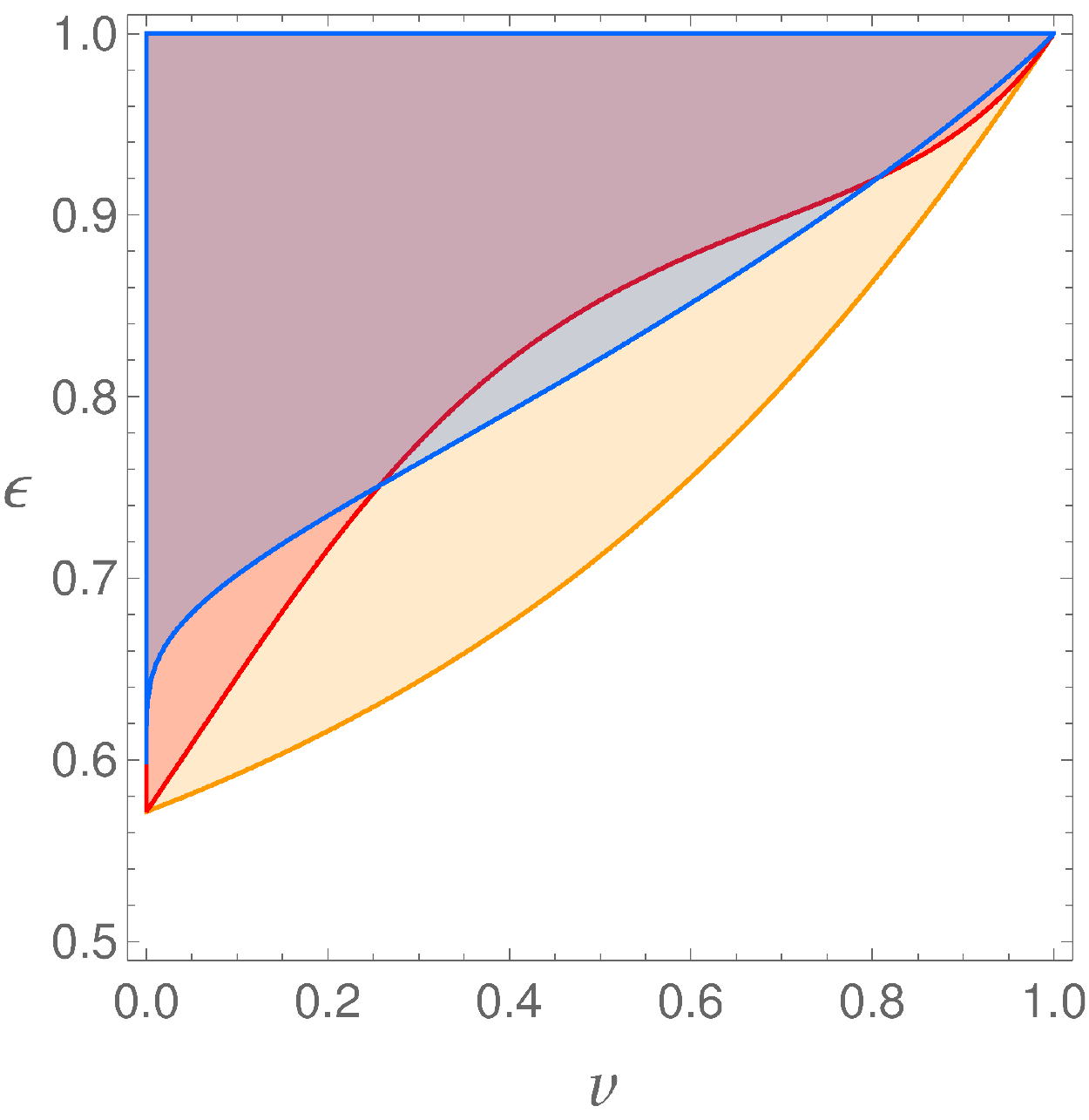}
    \label{fig: 2233shantsal1}
}\qquad
\subfloat[]
{
    \includegraphics[scale=.5]{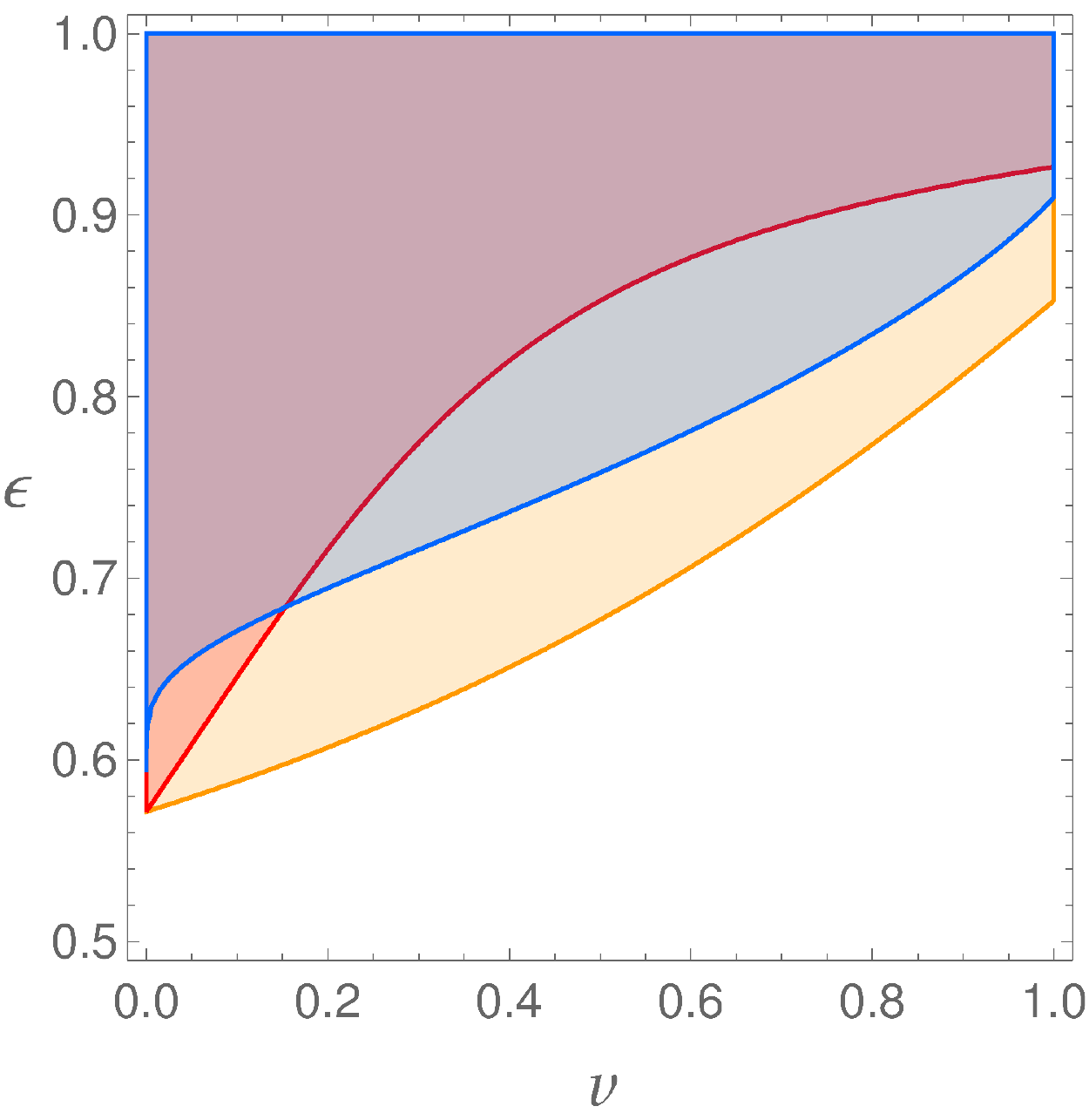}
    \label{fig: 2233shantsal2}
}
 \caption{The regions in the $v-\epsilon$ plane where the Shannon entropic inequality $I_{\BC,1}^4:=I_{\BC}^4 \leq 0$ (blue), the Tsallis entropic inequality, $I_{\BC,q}^4 \leq 0$ for $q=2$ (orange) and $q=8$ (red) are violated by the distributions (a) $p_{\E,\epsilon,v}^{(2,2,3,3)}=vp_{\iso,\epsilon}^{(2,2,3,3)}+(1-v)p_{\Cl}^{(2,2,3,3)}$ (b) $\tilde{p}_{\E,\epsilon,v}^{(2,2,3,3)}=v\tilde{p}_{\iso,\epsilon}^{(2,2,3,3)}+(1-v)p_{\Cl}^{(2,2,3,3)}$. Here $p_{\iso,\epsilon}^{(2,2,3,3)}=\epsilon p_{\NL}+(1-\epsilon)p_{\noise}^{(2,2,3,3)}$ and $\tilde{p}_{\iso,\epsilon}^{(2,2,3,3)}=\epsilon (1/2p_{\NL}+1/2p_{\NL}^*)+(1-\epsilon)p_{\noise}^{(2,2,3,3)}$. For both (a) and (b) $I_{\BC,q}^4\leq 0$ is not violated when $\epsilon \leq 4/7 \approx 0.5714$ but for $\epsilon > 4/7$, there is a violation of this inequality for a larger range of $v$ values in the latter case, and also for a larger range in the $q=2$ case as compared to the other two cases.}
    \label{fig: 2233shantsal}
  \end{figure}

\section{Evidence for Conjectures~\ref{conjecture: 2233shan} and \ref{conjecture: 2233tsal}}
\label{appendix: evidence}
In order to check for violations of the Shannon and Tsallis entropic inequalities $I_{\BC}^4\leq 0$ and $I_{\BC,q}^4\leq 0$ that could be obtained by mixing $p_{\iso, \epsilon}^{(2,2,3,3)}$ (Equation~\eqref{eq: NL3}) with arbitrary classical distributions, we maximized the left hand sides of these inequalities over the polytope $\Conv(\{p_{\iso, \epsilon}^{(2,2,3,3)}\}\bigcup\{p^{\Loc,k}\}_k)$ for some $\epsilon$ values in $(1/2,4/7]$ such as $\epsilon=4/7,5/9$ numerically using {\sc Mathematica}. Note that this polytope contains the local polytope where by definition, entropic inequalities cannot be violated. Thus we can simplify the optimization and increase its reliability by only optimizing over the non-classical part of the polytope $\Conv(\{p_{\iso, \epsilon}^{(2,2,3,3)}\}\bigcup\{p^{\Loc,k}\}_k)$. We find this region as follows. For $1/2<\epsilon\leq 4/7$, we know from Proposition~\ref{prop:nonloc} that $p_{\iso, \epsilon}^{(2,2,3,3)}$ is non-classical but does not violate eny of the CHSH inequalities, while it violates $I_{2233}\leq2$~\eqref{eq: CGLMP}. By Proposition~\ref{proposition: probspace}, this is the only Bell inequality that $p_{\iso, \epsilon}^{(2,2,3,3)}$ violates for this range of $\epsilon$. Thus, the non-classical part of the polytope $\Conv(\{p_{\iso,\epsilon}^{(2,2,3,3)}\}\bigcup\{p^{\Loc,k}\}_k)$ is the convex hull of $p_{\iso, \epsilon}^{(2,2,3,3)}$ and all the local deterministic points that satisfy $I_{2233}=2$. These are the 30 local deterministic points of Table~\ref{tab:vertices}. Hence we only need to optimise over convex combinations of $p_{\iso, \epsilon}^{(2,2,3,3)}$ with these 30 points and not all 81 local deterministic points, which reduces the size of the optimization (number of variables) and increases the chances of it being effective in detecting entropic violations if there are any.

Performing the optimization as outlined above, we found the maximum value to always be non-positive for both Shannon case and the Tsallis case with $q=1.1,2,3,10,50$. We obtained similar results when taking other values of $\epsilon\leq 4/7$ in the distribution $p_{\iso, \epsilon}^{(2,2,3,3)}$ and also when considering the inequalities $I_{\BC,q}^i$ for $i\in \{1,2,3\}$. This suggests that no point in the polytope $\Conv(\{p_{\iso, \epsilon}^{(2,2,3,3)}\}\bigcup\{p^{\Loc,k}\}_k)$ violates any of the (Shannon or Tsallis entropic) BC inequalities for $\epsilon\leq 4/7$. For $\epsilon>4/7$, some mixing of $p_{\iso, \epsilon}^{(2,2,3,3)}$ with $p_{\Cl}^{(2,2,3,3)}$ gives a distribution that violates $I_{\BC,q}^4\leq0$ $\forall q\geq 1$ (cf.\ Proposition~\ref{proposition: BCshan2233}). Note that in the Shannon ($q=1$) case, the range of values of the mixing parameter $v$ for which a violation can be found becomes arbitrarily small as $\epsilon$ approaches $4/7$ from above (see Figure~\ref{fig: 2233shantsal}). This limits the effectiveness of numerical tests for $q$ close to $1$.  For instance, in the Shannon case our program was not able to detect violations of $I^4_{\BC}\leq 0$ for $\epsilon<4.2/7$ (even though our analytic argument shows that these are present), while it was for $\epsilon\geq 4.2/7$. Similarly, for the $q=2$ Tsallis case, violations of $I^4_{\BC,2}\leq 0$ could be found for $\epsilon\geq 4.00001/7$, but not below. The reason for this difference is in line with what one might expect by comparing the plots in Figure~\ref{fig: 2233shantsal}, where for $\epsilon>4/7$ the range of values of $v$ for which a violation is possible is larger in the $q=2$ case. This highlights an advantage of using Tsallis entropy and gives us further confidence that for $1/2\leq\epsilon<4/7$ there is no violation.

However, because of the form of our objective function, the optimisation methods available do not guarantee to find the global maximum. Thus, our findings only constitute evidence for the conjectures and are not conclusive. In general, finding global optima for non-linear, non-convex/concave functions is an open question. A potential avenue for proving these conjectures is using DC (difference of convex) programming~\cite{Horst1999} since our objective function being a linear combination of entropies is a difference of convex functions.

\section{Proofs} \label{appendix: proofs}
For the proofs we need the concept of the local weight of a
non-signalling distribution~\cite{Zukowski99, Cope2019}
\begin{definition}
The \emph{local weight} of a no-signalling distribution $p_{XY|AB}$ is
the largest $\alpha\in[0,1]$ such that we can write
$$p_{XY|AB}=\alpha q^{\Loc}_{XY|AB}+(1-\alpha)q^{\NL}_{XY|AB}\,,$$
where $q^{\Loc}_{XY|AB}$ is an arbitrary local distribution and
$q^{\NL}_{XY|AB}$ is an arbitrary non-signalling distribution.  We
denote the local weight by $l(p_{XY|AB})$.
\end{definition}
The local weight of a distribution can be found by linear programming.

\Propnonloc*

\begin{proof}
The distribution $p_{\iso, \epsilon}^{(2,2,3,3)}=\epsilon p_{\NL}+(1-\epsilon)p_{\noise}^{(2,2,3,3)}$ can be written as follows.
\begin{equation}
\label{eq: pisoAB}
    p_{\iso, \epsilon}^{(2,2,3,3)}=
     \begin{tabular}{ |c|c|} 
 \hline
 $A$ \quad $B$ \quad $B$ & $A$ \quad $B$ \quad $B$\\ 
 $B$ \quad $A$ \quad $B$ & $B$ \quad $A$ \quad $B$\\
 $B$ \quad $B$ \quad $A$ & $B$ \quad $B$ \quad $A$\\
 \hline
$A$ \quad $B$ \quad $B$ & $B$ \quad $A$ \quad $B$\\ 
 $B$ \quad $A$ \quad $B$ & $B$ \quad $B$ \quad $A$\\
 $B$ \quad $B$ \quad $A$ & $A$ \quad $B$ \quad $B$\\
 \hline
\end{tabular}
\end{equation}
where $A=(2\epsilon+1)/9$ and $B=(1-\epsilon)/9$. We used the {\sc LPAssumptions} linear program solver~\cite{LPAssumptions} to find the local weight of $p_{\iso, \epsilon}^{(2,2,3,3)}$, as a function of $\epsilon$ to be
\begin{equation*}
    l(p_{\iso, \epsilon}^{(2,2,3,3)})=\begin{cases}
    1  &0\leq \epsilon \leq \frac{1}{2}\\
    2(1-\epsilon)  &\frac{1}{2}<\epsilon \leq 1
    \end{cases}
\end{equation*}
which establishes the first part of the claim.

The second part can be confirmed by computing the value of each CHSH-type quantity for the distribution $p_{\iso,\epsilon}^{(2,2,3,3)}$ and determining that each has a saturating $\epsilon$ of at most $4/7$.
\end{proof}

\PropBCshan*
\begin{proof}

Consider the function $f:(0,1)\times(0,1)\to\mathbb{R}$ given by
$$f(\epsilon,v):=3(3-2(1-\epsilon)v)\ln[3-2(1-\epsilon)v]+5(1-\epsilon)v\ln[(1-\epsilon)v]-(1+2\epsilon)v\ln[(1+2\epsilon)v]-(3-(2+\epsilon)v)\ln[3-(2+\epsilon)v]-3\ln
9\,,$$
where we implicitly extend the domain to $[0,1]\times[0,1]$ by taking
the relevant limit. The Shannon entropic expression $I_{\BC}^4(\epsilon,v)$ evaluated for the distribution $p_{\E,\epsilon,v}^{(2,2,3,3)}=vp_{\iso, \epsilon}^{(2,2,3,3)}+(1-v)p_{\Cl}^{(2,2,3,3)}$ (seen as a function of $\epsilon$ and $v$) is then given as $$I_{\BC}^4=\frac{1}{3\ln[2]}f(\epsilon,v)$$
Thus all the following arguments for $f(\epsilon,v)$ also hold for $I_{\BC}^4$. 
\par
  We first use that for $c>0$ and $a\in\mathbb{R}$ for sufficiently
  small $v$ we have
  $$\ln[c+av]=\ln[c]+\frac{av}{c}+O(v^2)\,.$$
Using this we can expand $f(\epsilon,v)$ for small $v$ as
\begin{equation}\label{eq:sm}
  f(\epsilon,v)=(4-7\epsilon)v\ln[v]-(4-7\epsilon)v(1+\ln[3])+v(5(1-\epsilon)\ln[1-\epsilon]-(1+2\epsilon)\ln[1+2\epsilon])+O(v^2)\,.
\end{equation}
Thus, since $\lim_{v\to0}v\ln[v]=0$ we have $\lim_{v\to0}f[\epsilon,v]=0$.

We also have
\begin{align}
  \frac{\partial}{\partial v}f(\epsilon,v)&=(4-7\epsilon)\ln\left[v\right]+5(1-\epsilon)\ln[1-\epsilon]-(1+2\epsilon)\ln[1+2\epsilon]-6(1-\epsilon)\ln[3-2(1-\epsilon)v]+(2+\epsilon)\ln[3-(2+\epsilon)v]\,.\label{eq:deriv}
\end{align}
Note that $5(1-\epsilon)\ln[1-\epsilon]\leq 0$,
$-(1+2\epsilon)\ln[1+2\epsilon]\leq 0$ $\forall \epsilon \in
[0,1]$. Further, since $3-2(1-\epsilon)v\geq 3-(2+\epsilon)v$ and both
terms are positive, $6(1-\epsilon)\geq(2+\epsilon)$ $\forall \epsilon
<4/7$, and using the fact that $\ln[]$ is an increasing function, we have $-6(1-\epsilon)\ln[3-2(1-\epsilon)v]+(2+\epsilon)\ln[3-(2+\epsilon)v]\leq 0$ $\forall \epsilon \in [0,4/7]$, $v\in [0,1]$. This in turn implies that
\begin{align}
  \frac{\partial}{\partial v}f(\epsilon,v)\leq(4-7\epsilon)\ln\left[v\right]\qquad \forall\ 0\leq \epsilon \leq 4/7, 0\leq v \leq 1 \nonumber
\end{align}
Hence we can conclude that for $\epsilon\leq4/7$,
$ \frac{\partial}{\partial v}f(\epsilon,v)<0$ for all $v\in [0,1]$. Thus, $f(\epsilon,v)$ is zero at $v=0$ and, for $\epsilon\leq4/7$, decreases
with $v$, implying that $f(\epsilon,v)\leq 0$ $\forall \epsilon\in
[0,4/7]$, $v\in [0,1]$. Note that $p_{\E,\epsilon,v}^{(2,2,3,3)}=vp_{\iso,
  \epsilon}^{(2,2,3,3)}+(1-v)p_C^{(2,2,3,3)}$ does not violate any of
the analogous inequalities  $I_{\BC}^i\leq 0$ 
for any $i\in \{1,2,3\}$, $\epsilon,v  \in [0,1]$.
This is because for this distribution, we always have
$H(X_0Y_0)=H(X_0Y_1)=H(X_1Y_0)$, $H(X_0)=H(X_1)=H(Y_0)=H(Y_1)$. Thus
all three inequalities $I_{\BC}^1\leq 0$, $I_{\BC}^2\leq 0$ and $I_{\BC}^3\leq 0$ reduce to
$H(X_1)+H(Y_1)-H(X_1Y_0)-H(X_1Y_1)\leq 0$, which is always satisfied since $H(X_1)\leq H(X_1Y_0)$ and $H(Y_1)\leq H(X_1Y_1)$ by the monotonicity of Shannon entropy.

Further, using the expression for the derivative of $f(\epsilon,v)$
with respect to $v$ in Equation~\eqref{eq:deriv}, we find that for
$\epsilon>4/7$,
$\lim_{v\to0} \frac{\partial}{\partial v}f(\epsilon,v)=\infty$. Thus,
since $f(\epsilon,v)=0$ for $v=0$, sufficiently close to $v=0$ there
exists a $v$ such that $f(\epsilon,v)>0$. This proves the claim.
\end{proof}

\PropBCtsal*
\begin{proof}
The Tsallis entropic expression $I_{\BC,q}^4(\epsilon,v)$ evaluated for the distribution $p_{\E,\epsilon,v}^{(2,2,3,3)}=vp_{\iso, \epsilon}^{(2,2,3,3)}+(1-v)p_{\Cl}^{(2,2,3,3)}$ (seen as a function of $q$, $\epsilon$ and $v$) is given as
$$I_{\BC,q}^4=\frac{1}{q-1}\Bigg[9\Bigg(\frac{3-2(1-\epsilon)v}{9}\Bigg)^q+15\Bigg(\frac{(1-\epsilon)v}{9}\Bigg)^q-\frac{6}{3^q}-3\Bigg(\frac{3-(2+\epsilon)v}{9}\Bigg)^q-3\Bigg(\frac{(1+2\epsilon)v}{9}\Bigg)^q\Bigg]=:\frac{g(q,\epsilon,v)}{q-1}$$
For $q>1$, the following arguments for $g(q,\epsilon,v)$ also hold for
$I_{\BC,q}^4$. Note that
\begin{equation}
\label{eq: derivTsal}
  \frac{\partial}{\partial v}g(q,\epsilon,v)=\frac{q}{3^{2q-1}}\Big[-6(1-\epsilon)\Big(3-2(1-\epsilon)v\Big)^{q-1}+5(1-\epsilon)\Big((1-\epsilon)v\Big)^{q-1}+(2+\epsilon)\Big(3-(2+\epsilon)v\Big)^{q-1}-(1+2\epsilon)\Big((1+2\epsilon)v\Big)^{q-1}\Big] \,.
\end{equation}

Then, since
\begin{align*}
  6(1-\epsilon)\Big(3-2(1-\epsilon)v\Big)^{q-1}\geq (2+\epsilon)\Big(3-(2+\epsilon)v\Big)^{q-1}\qquad \text{and}\qquad  
  (1+2\epsilon)\Big((1+2\epsilon)v\Big)^{q-1}\geq 5(1-\epsilon)\Big((1-\epsilon)v\Big)^{q-1}
\end{align*}
hold for all $\epsilon\leq4/7$, $v\in[0,1]$ and $q>1$, we have 
$$\frac{\partial}{\partial v}g(q,\epsilon,v)\leq 0 \qquad \forall \epsilon\leq 4/7,\ v\in [0,1],\ q> 1.$$
Since $g(q,\epsilon,v=0)=0$, this implies that
$g(q,\epsilon,v)\leq 0$ $\forall \epsilon\leq 4/7, v\in [0,1], q>
1$. Hence, for $\epsilon\leq 4/7$ we cannot violate $I_{\BC,q}^4\leq0$
for any $v\in[0,1]$, $q>1$.

Analogously to the Shannon case, $p_{\E,\epsilon,v}^{(2,2,3,3)}=vp_{\iso,
  \epsilon}^{(2,2,3,3)}+(1-v)p_{\Cl}^{(2,2,3,3)}$ also does not
violate any of the inequalities $I_{\BC,q}^1\leq0$, $I_{\BC,q}^2\leq0$
or $I_{\BC,q}^3\leq0$ for any $\epsilon,v\in[0,1]$ and $q>1$ by the
same argument as in Proposition~\ref{proposition: BCshan2233}.

Further, Equation~\eqref{eq: derivTsal} implies $\lim_{v\to0}\frac{\partial}{\partial v}g(q,\epsilon,v)=\frac{q}{3^q}(7\epsilon-4)$ which is always positive for $\epsilon>4/7$. Since $g(q,\epsilon,v)=0$ for $v=0$, this allows us to conclude that for $\epsilon>4/7$, there
  exists a $v$ sufficiently close to $v=0$ such that $g(q,\epsilon,v)>0$. This establishes the claim.
\end{proof}

\PropCGs*
\begin{proof}
For the ``if'' part of the proof, we calculated all the 255
coarse-grainings of $p_{\iso, \epsilon=4/7}^{(2,2,3,3)}$ and used a
linear programming algorithm to find that all of these are local
(their local weight equals $1$). Since decreasing $\epsilon$ in $p_{\iso,
  \epsilon}^{(2,2,3,3)}$ cannot increase the violation of any (probability space) Bell inequality and neither can coarse-grainings, it follows that if $\epsilon\leq 4/7$, all coarse-grainings of $p_{\iso, \epsilon}^{(2,2,3,3)}$ are classical.

For the ``only if'' part we need to show that if all coarse-grainings of $p_{\iso, \epsilon}^{(2,2,3,3)}$ are classical then $\epsilon\leq 4/7$ or equivalently, if $\epsilon>4/7$, there exists at least one coarse-graining of $p_{\iso, \epsilon}^{(2,2,3,3)}$ that is non-classical. Consider the coarse-graining that involves combining the second output with the first for all 4 input choices. For $p_{\iso, \epsilon}^{(2,2,3,3)}$ as in Equation~\eqref{eq: pisoAB}, this coarse-graining gives
\begin{equation}
\label{eq: pisoCG}
\setlength{\tabcolsep}{0.7em}
    p_{\CG, \epsilon}^{(2,2,3,3)}=
     \begin{tabular}{ |ccc|ccc|} 
 \hline
 $2(A+B)$ & 0 & $2B$ & $2(A+B)$ & 0 & $2B$\\ 
 0 & 0 & 0 & 0 & 0 & 0\\
 $2B$ & 0 & $A$ & $2B$ & 0 & $A$\\
 \hline
 $2(A+B)$ & 0 & $2B$ & $3B+A$ & 0 & $A+B$\\ 
 0 & 0 & 0 & 0 & 0 & 0\\
 $2B$ & 0 & $A$ & $A+B$ & 0 & $B$\\
 \hline
\end{tabular}\,,
\end{equation}
where $A=(2\epsilon+1)/9$ and $B=(1-\epsilon)/9$.  The $I_{2233}$ value or left hand side of Equation~\eqref{eq: CGLMP} for this distribution is $9A-3B$. For this to be classical, we require that $9A-3B\leq 2$ which gives $\epsilon\leq 4/7$. Again using the {\sc LPAssumptions} linear program solver~\cite{LPAssumptions} we found the local weight of this distribution as a function of $\epsilon$, which gives the following.
\begin{equation*}
    l(p_{\CG, \epsilon}^{(2,2,3,3)})=\begin{cases}
    1,  &0\leq \epsilon \leq \frac{4}{7}.\\
    \frac{1}{9}(17-14\epsilon),  &\frac{4}{7}<\epsilon \leq 1.
    \end{cases}
\end{equation*}
In other words, if $\epsilon>4/7$, then the coarse-graining $p_{\CG,\epsilon}^{(2,2,3,3)}$ of $p_{\iso,\epsilon}^{(2,2,3,3)}$ violates the $I_{2233}$ Inequality~(\ref{eq: CGLMP}) and is hence non-classical. This concludes the proof.
\end{proof}

We prove the following two propositions together as they only differ in one step.
\ProprelabsA*
\ProprelabsB*
\begin{proof}
  If Conjectures~\ref{conjecture: 2233shan} and~\ref{conjecture: 2233tsal} hold, then for any $\epsilon\leq4/7$, $I_{\BC,q}^i\leq 0$ holds $\forall q\geq 1$, $\forall i \in \{1,2,3,4\}$ and for all distributions in the polytope $\Conv(\{p_{\iso, \epsilon}^{(2,2,3,3)}\}\bigcup\{p^{\Loc,k}\}_k\})$.\footnote{Note that $q=1$ covers the Shannon case.} We want to show that this implies the same for the larger polytope that comprises the convex hull of not just $p_{\iso, \epsilon}^{(2,2,3,3)}$ and local deterministic distributions $\{p^{\Loc,k}\}_k$, but also all the local relabellings of $p_{\iso, \epsilon}^{(2,2,3,3)}$, i.e., the polytope $\Pi_{\epsilon}=\Conv(\{p^{\RL,j}_\epsilon\}_j\bigcup\{p^{\Loc,k}\}_k)$. While we considered $p_{\iso, \epsilon}^{(2,2,3,3)}$ in Propositions~\ref{proposition: BCshan2233} and \ref{proposition: BCtsal2233} and Conjectures~\ref{conjecture: 2233shan} and~\ref{conjecture: 2233tsal}, due to symmetries, these also apply to every relabelling of $p_{\iso, \epsilon}^{(2,2,3,3)}$, i.e., $I_{\BC,q}^i\leq 0$ ($\forall i\in \{1,2,3,4\}$) throughout every polytope in the set $\{\Conv(\{p^{\RL,j}_\epsilon\}\bigcup\{p^{\Loc,k}\}_k)\}_j$ (where $j$ runs over all the local relabellings of $p_{\iso, \epsilon}^{(2,2,3,3)}$)\footnote{A note on notation: $\{p^{\RL,j}_\epsilon\}$ is a set comprising a single element, while $\{p^{\RL,j}_\epsilon\}_j$ is a set whose elements are the distributions for every $j$.}. This is because for every input-output relabelling of $p_{\iso, \epsilon}^{(2,2,3,3)}$, we can correspondingly relabel the inequality expression $I_{\BC,q}^4$ (for $q\geq 1$) and the same arguments as Propositions~\ref{proposition: BCshan2233} and \ref{proposition: BCtsal2233} again hold, and similarly Conjectures~\ref{conjecture: 2233shan} and~\ref{conjecture: 2233tsal} also follow for this case.\footnote{Note that output relabellings don't change the entropic expression but input relabellings (4 in number) can give either one of $I_{\BC,q}^i\leq 0$ for $i\in \{1,2,3,4\}$. Thus for an output relabelling of $p_{\iso, \epsilon}^{(2,2,3,3)}$, one can continue using $I_{\BC,q}^4$ in Propositions~\ref{proposition: BCshan2233} and \ref{proposition: BCtsal2233} and the following Conjectures while for input relabellings, one simply needs to relabel the inequalities accordingly and run the same arguments.} From this argument, it follows that if Conjectures~\ref{conjecture: 2233shan} and~\ref{conjecture: 2233tsal} hold, then $I_{\BC,q}^i\leq 0$ $\forall q \geq 1$, $\forall i \in \{1,2,3,4\}$ everywhere in the union of the polytopes, i.e., everywhere in $\bigcup_j\Conv(\{p^{\RL,j}_\epsilon\}\bigcup\{p^{\Loc,k}\}_k)$.

  To conclude the proof it remains to show that $\bigcup_j\Conv(\{p^{\RL,j}_\epsilon\}\bigcup\{p^{\Loc,k}\}_k)=\Pi_\epsilon$ $\forall \epsilon \leq 4/7$. This is established below. Then, Proposition~\ref{proposition: relab2} automatically follows while Proposition~\ref{proposition: relab1} follows from this and Lemma~\ref{lem:BCcomplete}.
\end{proof}

\begin{proposition}
\label{proposition: R1}
$p^j_{\mix,\epsilon}:= \frac{1}{2}p_{\iso, \epsilon}^{(2,2,3,3)}+\frac{1}{2}p^{\RL,j}_\epsilon$ is local $\forall j\neq 1$ if and only if $\epsilon \leq 4/7$.
\end{proposition}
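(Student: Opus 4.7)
The plan is to reduce the problem to a finite check about the values of the $I_{2233}$-type inequalities on the pair $\{p_{\NL}, p_{\NL}^{(j)}\}$, where $p_{\NL}^{(j)}$ denotes the relabelling of $p_{\NL}$ underlying $p^{\RL,j}_\epsilon$. Since $p_{\noise}^{(2,2,3,3)}$ is invariant under local relabellings,
\[
p^j_{\mix,\epsilon}=\tfrac{\epsilon}{2}\bigl(p_{\NL}+p_{\NL}^{(j)}\bigr)+(1-\epsilon)\,p_{\noise}^{(2,2,3,3)}.
\]
Both $p_{\iso,\epsilon}^{(2,2,3,3)}$ and $p^{\RL,j}_\epsilon$ satisfy every CHSH-type inequality when $\epsilon\leq 4/7$ (by Proposition~\ref{prop:nonloc} together with the symmetry of $\Pi_{\chsh}^{(2,2,3,3)}$ under local relabellings), so their convex combination $p^j_{\mix,\epsilon}$ does too. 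By Proposition~\ref{proposition: probspace}, $p^j_{\mix,\epsilon}$ is therefore classical iff it violates no $I_{2233}$-type inequality.

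For the ``if'' direction, linearity and $I_{2233}^i(p_{\noise}^{(2,2,3,3)})=0$ give
\[
I_{2233}^i(p^j_{\mix,\epsilon})=\tfrac{\epsilon}{2}\bigl[I_{2233}^i(p_{\NL})+I_{2233}^i(p_{\NL}^{(j)})\bigr],
\]
so the claim reduces to
\[
I_{2233}^i(p_{\NL})+I_{2233}^i(p_{\NL}^{(j)})\leq 7 \qquad \forall i,\ \forall j\neq 1.
\]
The summand $I_{2233}^i(p_{\NL})$ attains its maximum value $4$ only at $i=1$, and $I_{2233}^i(p_{\NL}^{(j)})$ only at $i=j'$, where $I_{2233}^{j'}$ is the unique inequality that $p_{\NL}^{(j)}$ maximally violates; since $j\neq 1$ forces $j'\neq 1$, the two maxima are never attained simultaneously. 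Transitivity of the group generated by local relabellings and party exchange on the 432 $I_{2233}$-vertices then reduces the desired bound to checking $\max_{j\neq 1}I_{2233}^1(p_{\NL}^{(j)})\leq 3$. I would carry out this finite verification either by direct enumeration over orbit representatives of the stabiliser of $p_{\NL}$, or by running an {\sc LPAssumptions} computation of the local weight $\ell(p^j_{\mix,\epsilon})$ as a piecewise linear function of $\epsilon$, in the style used for Propositions~\ref{prop:nonloc} and~\ref{proposition: CG}.

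For the ``only if'' direction I would exhibit a witness saturating the bound. Take $j^\ast$ so that $p_{\NL}^{(j^\ast)}=p_{\NL}^\ast$ as in Equation~\eqref{eq: NL*}; this is a local relabelling of $p_{\NL}$ because, as remarked just before Proposition~\ref{proposition: CG}, party exchange acts on $p_{\iso,\epsilon}^{(2,2,3,3)}$-type distributions as a local relabelling. A direct evaluation using~\eqref{eq: CGLMP} yields $I_{2233}^1(p_{\NL}^\ast)=3$, so
\[
I_{2233}^1(p^{j^\ast}_{\mix,\epsilon})=\tfrac{\epsilon}{2}(4+3)=\tfrac{7\epsilon}{2},
\]
which exceeds $2$ precisely when $\epsilon>4/7$; hence $p^{j^\ast}_{\mix,\epsilon}$ is non-local in that regime, proving the contrapositive.

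The main obstacle is the finite bound $\max_{j\neq 1}I_{2233}^1(p_{\NL}^{(j)})=3$: the witness $p_{\NL}^\ast$ shows the value $3$ is attained, so what remains is to certify that no other relabelling exceeds it. A purely analytic treatment would require enumerating the discrete spectrum of $I_{2233}^1$-values on the 432 relabelled copies of $p_{\NL}$, which is cumbersome even after exploiting the stabiliser of $p_{\NL}$; in practice, delegating this step to an LP solver for the local weight, as the paper does in its other propositions, is the cleanest way to close the argument.
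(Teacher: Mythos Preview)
Your argument is correct in outline and takes a genuinely different route from the paper's. The paper handles the ``if'' direction by a direct linear-programming computation of the local weight of $p^j_{\mix,\epsilon}$ at $\epsilon=4/7$ for every $j\neq1$, then appeals to monotonicity in $\epsilon$; for ``only if'' it exhibits a specific relabelling (swapping outputs $1$ and $2$ when $A=1$) together with an ad hoc Bell inequality $\Tr(M^T P)\geq 1$ that is \emph{not} of $I_{2233}$ type. You instead stay entirely inside the $I_{2233}$ framework: you use Proposition~\ref{prop:nonloc} and the facet description of $\mathcal{L}^{(2,2,3,3)}$ to reduce locality to the single scalar condition $I_{2233}^i(p_{\NL})+I_{2233}^i(p_{\NL}^{(j)})\leq 7$, and for ``only if'' you use $p_{\NL}^*$ as witness via $I_{2233}^1(p_{\NL}^*)=3$, which makes the origin of the threshold $4/7=2/(7/2)$ transparent. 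Your symmetry reduction to $\max_{j\neq1}I_{2233}^1(p_{\NL}^{(j)})\leq3$ is valid: relabelling $I_{2233}^i\to I_{2233}^1$ turns the pair $(p_{\NL},p_{\NL}^{(j)})$ into an arbitrary distinct pair of $I_{2233}$-vertices evaluated on $I_{2233}^1$, and since only $p_{\NL}$ attains the value $4$ there, the bound $4+3=7$ follows once the second-largest value is shown to be $3$.

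Two remarks. First, both approaches ultimately defer the ``if'' direction to a finite computer check; yours is arguably cleaner (one scalar bound on 431 points rather than 431 local-weight LPs), but it is not more elementary in the sense of avoiding computation. Second, your justification that $p_{\NL}^*$ is a local relabelling of $p_{\NL}$ via the party-exchange remark is correct but indirect: the map $X\mapsto -X\pmod 3$, $Y\mapsto -Y\pmod 3$ (for all inputs) sends $p_{\NL}$ to $p_{\NL}^*$ directly, without invoking party exchange. The paper's witness (a different relabelling and a non-$I_{2233}$ facet) and yours both certify the same threshold, so the ``only if'' direction is settled either way; your version has the advantage of tying the witness to the same $I_{2233}$ bookkeeping used in the ``if'' direction.
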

\begin{proof}
For the ``if'' part of the proof, we used a linear program to confirm that $p^j_{\mix,\epsilon=4/7}$ is local\footnote{$j=1$ is excluded since $p^1_{\mix,\epsilon}=p_{\iso, \epsilon}^{(2,2,3,3)}$ which is non-classical for $\epsilon > 1/2$.} $\forall j\neq 1$. Since reducing $\epsilon$ in $p^j_{\mix,\epsilon}$ cannot decrease the local weight, this also holds for $\epsilon<4/7$.

The ``only if'' part of the proof is equivalent to showing that $\forall \epsilon > 4/7$, $\exists j$ such that $p^j_{\mix,\epsilon}$ is non-classical. Consider the particular local relabelling that corresponds to Alice swapping the outputs ``1'' and ``2'' only when her input is $A=1$. Let the distribution obtained by applying this relabelling to $p_{\iso, \epsilon}^{(2,2,3,3)}$ be $p^{\RL}_\epsilon$. Then $p_{\mix,\epsilon}=\frac{1}{2}p_{\iso, \epsilon}^{(2,2,3,3)}+\frac{1}{2}p^{\RL}_\epsilon$. More explicitly, 
\begin{equation}
    p^{\RL}_\epsilon=
     \begin{array}{|c|c|} 
 \hline
 A \quad B \quad B & A \quad B \quad B\\ 
 B \quad A \quad B & B \quad A \quad B\\
 B \quad B \quad A & B \quad B \quad A\\
 \hline
A \quad B \quad B & B \quad A \quad B\\ 
 B \quad B \quad A & A \quad B \quad B\\
  B \quad A \quad B & B \quad B \quad A\\
 \hline
\end{array}\qquad \text{and} \qquad p_{\mix,\epsilon}=
     \begin{array}{ |c|c|} 
 \hline
 A \quad B \quad B & A \quad B \quad B\\ 
 B \quad A \quad B & B \quad A \quad B\\
 B \quad B \quad A & B \quad B \quad A\\
 \hline
A \quad B \quad B & B \quad A \quad B\\ 
 B \quad * \quad * & * \quad B \quad *\\
 B \quad * \quad * & * \quad B \quad *\\
 \hline
\end{array}\quad,
\end{equation}
where $*=\frac{A+B}{2}$, $A=(2\epsilon+1)/9$ and $B=(1-\epsilon)/9$. Now consider the Bell inequality $\Tr(M^TP)\geq 1$ where $P$ is an arbitrary distribution in the $(2,2,3,3)$ scenario and $M$ is the following matrix.
\begin{equation}
    M:=
     \begin{array}{ |c|c|} 
 \hline
 0 \quad 1 \quad 1 & 0 \quad 1 \quad 1\\ 
 1 \quad 0 \quad 1 & 1 \quad 0 \quad 1\\
 1 \quad 1 \quad 1 & 0 \quad 0 \quad 0\\
 \hline
0 \quad 1 \quad 0 & 1 \quad 0 \quad 0\\ 
 1 \quad 0 \quad 0 & 0 \quad 1 \quad 0\\
 1 \quad 0 \quad 0 & 0 \quad 1 \quad 0\\
 \hline
\end{array}
\end{equation}
Then, the condition for $p_{\mix,\epsilon}$ to be non-classical with respect to this Bell inequality i.e., $\Tr(M^Tp_{\mix,\epsilon})<1$ gives us $\epsilon>4/7$. Since $p^{\RL}_\epsilon$ is a local relabelling of $p_{\iso, \epsilon}^{(2,2,3,3)}$, there exists a $j$ such that $p^{\RL}_\epsilon=p^{\RL,j}_\epsilon$ and hence $p_{\mix,\epsilon}=p^j_{\mix,\epsilon}$. Thus we have shown that whenever $\epsilon>4/7$, $\exists j$ such that $p^j_{\mix,\epsilon}$ is non-classical and hence $p^j_{\mix,\epsilon}$ is local for all $j$ implies that $\epsilon \leq 4/7$ which concludes the proof.
\end{proof}

By symmetry, there is an analogue of Proposition~\ref{proposition: R1} with $p_{\iso, \epsilon}^{(2,2,3,3)}$ replaced by $p^{\RL,i}_\epsilon$ for any $i$, so we have the following corollary.
\begin{corollary}
\label{corollary: R1}
$\tilde{p}^{i,j}_{\mix,\epsilon}:= \frac{1}{2}p^{\RL,i}_\epsilon+\frac{1}{2}p^{\RL,j}_\epsilon$ is local $\forall j\neq i$ if and only if $\epsilon \leq 4/7$.
\end{corollary}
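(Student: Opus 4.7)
The plan is to reduce Corollary~\ref{corollary: R1} to Proposition~\ref{proposition: R1} by exploiting the group structure of local relabellings. Since $p^{\RL,1}_\epsilon = p_{\iso,\epsilon}^{(2,2,3,3)}$ by convention, Proposition~\ref{proposition: R1} is precisely the $i=1$ special case of the corollary, and the task is to transport that statement to arbitrary $i$ via the action of the relabelling group. Concretely, I would view the finite group $G$ of local relabellings as acting linearly on no-signalling distributions and mapping local distributions to local distributions (hence also non-local to non-local). By definition, $\{p^{\RL,j}_\epsilon\}_j$ is the orbit of $p_{\iso,\epsilon}^{(2,2,3,3)}$ under $G$, so for each index $i$ there exists $R_i \in G$ with $R_i \cdot p_{\iso,\epsilon}^{(2,2,3,3)} = p^{\RL,i}_\epsilon$. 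Since $R_i$ maps this orbit into itself, it acts as a permutation $\sigma_i$ on the index set, i.e.\ $R_i \cdot p^{\RL,j'}_\epsilon = p^{\RL,\sigma_i(j')}_\epsilon$ with $\sigma_i(1) = i$.

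For any fixed $i$ and any $j \neq i$, I would set $j' := \sigma_i^{-1}(j)$, so that $j' \neq 1$. Linearity of the action then gives
\begin{equation*}
R_i \cdot p^{j'}_{\mix,\epsilon} = \tfrac{1}{2} R_i \cdot p_{\iso,\epsilon}^{(2,2,3,3)} + \tfrac{1}{2} R_i \cdot p^{\RL,j'}_\epsilon = \tfrac{1}{2} p^{\RL,i}_\epsilon + \tfrac{1}{2} p^{\RL,j}_\epsilon = \tilde{p}^{i,j}_{\mix,\epsilon}.
\end{equation*}
Because $R_i$ preserves the local polytope, $\tilde{p}^{i,j}_{\mix,\epsilon}$ is local if and only if $p^{j'}_{\mix,\epsilon}$ is local. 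Moreover, the assignment $j \mapsto \sigma_i^{-1}(j)$ is a bijection between $\{j : j \neq i\}$ and $\{j' : j' \neq 1\}$, so Proposition~\ref{proposition: R1} immediately yields that $\tilde{p}^{i,j}_{\mix,\epsilon}$ is local for all $j \neq i$ precisely when $\epsilon \leq 4/7$.

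I do not anticipate any serious obstacle here: Proposition~\ref{proposition: R1} does all of the real work (including the linear-programming verification at $\epsilon = 4/7$ and the explicit Bell-inequality construction for $\epsilon > 4/7$), and what remains is routine group-theoretic bookkeeping to confirm that elements of $G$ act as permutations on the orbit $\{p^{\RL,j}_\epsilon\}_j$. The only point that deserves a line of care is checking that $\sigma_i(1) = i$ and hence that $j \neq i$ corresponds bijectively to $j' \neq 1$; this is immediate from the definition of $R_i$ and the fact that $G$ acts by bijections on the orbit. No new numerical computation is required.
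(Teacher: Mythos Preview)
Your proposal is correct and is precisely the symmetry argument the paper invokes: the paper simply states that ``by symmetry, there is an analogue of Proposition~\ref{proposition: R1} with $p_{\iso, \epsilon}^{(2,2,3,3)}$ replaced by $p^{\RL,i}_\epsilon$ for any $i$'', and you have merely unpacked that one-line remark into the explicit group-action bookkeeping. No differences in approach.
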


\begin{theorem}[Bemporad et.\ al 2001 \cite{Bemporad2001}]
\label{theorem: bemporad}
Let $\mathcal{P}$ and $\mathcal{Q}$ be polytopes with vertex sets $V$ and $W$ respectively i.e., $\mathcal{P}=\Conv(V)$ and $\mathcal{Q}=\Conv(W)$. Then $\mathcal{P}\bigcup\mathcal{Q}$ is convex if and only if the line-segment $[v,w]$ is contained in $\mathcal{P}\bigcup\mathcal{Q}$ $\forall v\in V$ and $w \in W$.
\end{theorem}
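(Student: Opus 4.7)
The forward implication is immediate: if $\mathcal{P}\cup\mathcal{Q}$ is convex, then since both $V\subseteq\mathcal{P}$ and $W\subseteq\mathcal{Q}$ are contained in $\mathcal{P}\cup\mathcal{Q}$, the segment $[v,w]$ lies in $\mathcal{P}\cup\mathcal{Q}$ for every $v\in V$ and $w\in W$.

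For the reverse direction, the plan is to argue by contradiction. Assume every vertex segment $[v,w]$ lies in $\mathcal{P}\cup\mathcal{Q}$, yet $\mathcal{P}\cup\mathcal{Q}$ is not convex, so there exist $x,y\in\mathcal{P}\cup\mathcal{Q}$ with $[x,y]\not\subseteq\mathcal{P}\cup\mathcal{Q}$. Since $\mathcal{P}$ and $\mathcal{Q}$ are individually convex, we may take $x\in\mathcal{P}$ and $y\in\mathcal{Q}$. Parametrise $f(t)=(1-t)x+ty$ and set $A=\{t\in[0,1]:f(t)\in\mathcal{P}\}$, $B=\{t\in[0,1]:f(t)\in\mathcal{Q}\}$. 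Both are closed intervals (preimages of closed convex sets under the affine map $f$) with $0\in A$ and $1\in B$, and by assumption $A\cup B\neq[0,1]$, so $A=[0,a]$ and $B=[b,1]$ with $a<b$. Define $p^\star:=f(a)\in\mathcal{P}$ and $q^\star:=f(b)\in\mathcal{Q}$; these are the ``last $\mathcal{P}$-point'' and ``first $\mathcal{Q}$-point'' along $[x,y]$.

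The next step will be to exploit the face structure at $p^\star$ and $q^\star$. Let $F_P$ be the minimal face of $\mathcal{P}$ containing $p^\star$, with vertex set $V_P\subseteq V$, so $p^\star=\sum_{v\in V_P}\alpha_v v$ with all $\alpha_v>0$ (minimality places $p^\star$ in the relative interior of $F_P$). Define $F_Q\subseteq\mathcal{Q}$, $V_Q\subseteq W$, $\{\beta_w\}$ analogously for $q^\star$. I would then invoke the standard fact that every face of a polytope is exposed: there exist affine functionals $h_P$ and $h_Q$ with $\mathcal{P}\subseteq\{h_P\leq 0\}$ and $F_P=\mathcal{P}\cap\{h_P=0\}$, and similarly for $h_Q$ on $\mathcal{Q}$. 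Because moving from $p^\star$ toward $q^\star$ leaves $\mathcal{P}$ immediately at $t=a$, one has $h_P(q^\star)>0$; symmetrically $h_Q(p^\star)>0$. Expanding via the convex decompositions, $\sum_{w\in V_Q}\beta_w h_P(w)=h_P(q^\star)>0$ and $\sum_{v\in V_P}\alpha_v h_Q(v)=h_Q(p^\star)>0$, so there exist a vertex $w^\star\in V_Q$ with $h_P(w^\star)>0$ and a vertex $v^\star\in V_P$ with $h_Q(v^\star)>0$.

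The contradiction then falls out of examining the segment $[v^\star,w^\star]$, which lies in $\mathcal{P}\cup\mathcal{Q}$ by hypothesis. Writing $r(s)=(1-s)v^\star+sw^\star$ and using $h_P(v^\star)=0=h_Q(w^\star)$, one computes $h_P(r(s))=s\,h_P(w^\star)>0$ for $s>0$, forcing $r(s)\notin\mathcal{P}$, and $h_Q(r(s))=(1-s)h_Q(v^\star)>0$ for $s<1$, forcing $r(s)\notin\mathcal{Q}$; hence $r(s)\notin\mathcal{P}\cup\mathcal{Q}$ for any $s\in(0,1)$, contradicting the hypothesis. I expect the main technical hurdle to be rigorously producing the affine functional $h_P$ exposing the possibly low-dimensional face $F_P$; this is standard polytope theory but should be recorded carefully, e.g.\ by expressing $F_P$ as the intersection of the facets of $\mathcal{P}$ containing it and taking a strictly positive combination of the corresponding facet-defining functionals.
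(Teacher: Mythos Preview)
The paper does not prove this statement; it is quoted from \cite{Bemporad2001} and used as a black box in the subsequent lemma, so there is no in-paper argument to compare your attempt against.

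Your overall strategy is sound, but there is a real gap at the step ``because moving from $p^\star$ toward $q^\star$ leaves $\mathcal{P}$ immediately at $t=a$, one has $h_P(q^\star)>0$.'' As written, $h_P$ is an arbitrary affine functional exposing the minimal face $F_P$, and such a functional need not be positive at $q^\star$. Concretely, take $\mathcal{P}=[0,1]^2$ and $p^\star=(0,0)$, so that $F_P=\{p^\star\}$, and let $q^\star$ lie in the direction $(-1,2)$ from $p^\star$. The segment leaves $\mathcal{P}$ immediately (the constraint $x\geq 0$ is violated), yet the equal-weight combination $h_P(x,y)=-x-y$ of the two incident facet functionals, exactly the construction you propose in your last sentence, gives $h_P(-1,2)=-1<0$. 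An exposing functional for $F_P$ does not automatically detect \emph{which} facet the segment exits through; different positive combinations of the facet functionals can give either sign at $q^\star$.

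The repair is small and leaves the rest of your argument intact: do not insist that $h_P$ expose the minimal face. Instead take $h_P$ to be the defining functional of a single \emph{facet} of $\mathcal{P}$ containing $p^\star$ that is violated for $t>a$; such a facet exists because any inequality of $\mathcal{P}$ that is violated at $t=a^+$ must already be active at $t=a$. Since the minimal face $F_P$ is contained in every facet through $p^\star$, you still get $h_P(v)=0$ for all $v\in V_P$, while $h_P(q^\star)>0$ now holds by construction. The symmetric choice of $h_Q$ and your final contradiction on the segment $[v^\star,w^\star]$ then go through unchanged.
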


Let $\cP_j=\Conv(\{p^{\RL,j}_\epsilon\}\bigcup\{p^{\Loc,k}\}_k\})$ and $\mathbb{P}$ be the set of polytopes $\mathbb{P}:=\{\cP_j\}_j$. We use the above theorem to prove the final result that establishes Propositions~\ref{proposition: relab1} and~\ref{proposition: relab2}.
\begin{lemma}
Let $V_j$ be the vertex set of the polytope $\cP_j \in \mathbb{P}$ and $V:=\bigcup_j V_j$. Then, $\bigcup_{\cP_j\in \mathbb{P}}\cP_j=\Conv(\bigcup_i V_j)=\Conv(V)=\Pi_\epsilon$ $\forall \epsilon \leq 4/7$.
\end{lemma}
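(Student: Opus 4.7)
The plan is to prove the lemma by combining the inclusion $\bigcup_j \cP_j \subseteq \Pi_\epsilon$ (which is immediate since each $\cP_j$ has vertices that are also vertices of $\Pi_\epsilon$) with its converse. The converse will follow if we can establish that the union $U:=\bigcup_j \cP_j$ is itself a convex set: a convex set containing all points of $V$ must contain $\Conv(V)=\Pi_\epsilon$. So the real work is to prove convexity of $U$.

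To prove $U$ is convex, I would argue pairwise. Pick any two points $p,q\in U$, with $p\in\cP_i$ and $q\in\cP_j$ for some indices $i,j$. If I can show that the two-polytope union $\cP_i\cup\cP_j$ is convex, then the segment $[p,q]$ lies in $\cP_i\cup\cP_j\subseteq U$, which gives convexity of $U$. Fix $i,j$ and apply Theorem~\ref{theorem: bemporad}: I must check that for every vertex $v$ of $\cP_i$ and every vertex $w$ of $\cP_j$, the segment $[v,w]$ lies in $\cP_i\cup\cP_j$. The vertex sets are $\{p^{\RL,i}_\epsilon\}\cup\{p^{\Loc,k}\}_k$ and $\{p^{\RL,j}_\epsilon\}\cup\{p^{\Loc,k}\}_k$, so four cases arise. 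Three of them are trivial: a segment between two local deterministic points lies in the local polytope (and hence in both $\cP_i$ and $\cP_j$), and a segment between $p^{\RL,i}_\epsilon$ (resp.\ $p^{\RL,j}_\epsilon$) and any local deterministic point lies in $\cP_i$ (resp.\ $\cP_j$) by definition.

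The one nontrivial case, and the main obstacle, is the segment between $p^{\RL,i}_\epsilon$ and $p^{\RL,j}_\epsilon$ with $i\neq j$. This is precisely where $\epsilon\leq 4/7$ is needed, and where Corollary~\ref{corollary: R1} enters. By that corollary the midpoint $m:=\tfrac12 p^{\RL,i}_\epsilon+\tfrac12 p^{\RL,j}_\epsilon$ is classical, so it can be written as $m=\sum_k \alpha_k p^{\Loc,k}$ for nonnegative $\alpha_k$ summing to $1$. For any $t\in[0,1/2]$ I would rewrite
\[
(1-t)p^{\RL,j}_\epsilon + t\, p^{\RL,i}_\epsilon = 2t\, m + (1-2t)\,p^{\RL,j}_\epsilon = \sum_k (2t\alpha_k)\,p^{\Loc,k} + (1-2t)\,p^{\RL,j}_\epsilon,
\]
which is a convex combination of the vertices of $\cP_j$ and hence lies in $\cP_j$. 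An identical manipulation for $t\in[1/2,1]$ expresses the point as a convex combination of $p^{\RL,i}_\epsilon$ and the local deterministic distributions, placing it in $\cP_i$. Thus the whole segment lies in $\cP_i\cup\cP_j$, which by Theorem~\ref{theorem: bemporad} is convex, completing the argument.

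Putting the pieces together: pairwise convexity of every $\cP_i\cup\cP_j$ gives convexity of $U$; convexity of $U$ together with $V\subseteq U$ yields $\Pi_\epsilon=\Conv(V)\subseteq U$; and the trivial inclusion gives $U\subseteq \Pi_\epsilon$. Hence $U=\Conv(V)=\Pi_\epsilon$ for all $\epsilon\leq 4/7$, which is the desired identity. The only place the bound $\epsilon\leq 4/7$ is used is to invoke Corollary~\ref{corollary: R1}; above that threshold the midpoint $m$ can be non-classical and the splitting argument fails, which is consistent with the fact that relabellings of $p_{\iso,\epsilon}^{(2,2,3,3)}$ can then combine to generate new non-classical points outside $\bigcup_j \cP_j$.
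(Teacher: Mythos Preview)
Your proof is correct and follows essentially the same approach as the paper: both invoke Corollary~\ref{corollary: R1} to place the midpoint of $[p^{\RL,i}_\epsilon, p^{\RL,j}_\epsilon]$ in the local polytope and then apply Theorem~\ref{theorem: bemporad} to obtain pairwise convexity of $\cP_i\cup\cP_j$. Your passage from pairwise convexity to convexity of the full union (any two points lie in some $\cP_i\cup\cP_j$, hence their segment does) is slightly more direct than the paper's inductive argument adding one $\cP_k$ at a time, and your explicit convex-combination rewriting makes the midpoint-to-segment step more transparent, but the substance is the same.
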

\begin{proof}
  By Corollary~\ref{corollary: R1}, for $i\neq j$ we have that $\frac{1}{2}\left(p^{\RL,i}_\epsilon+p^{\RL,j}_\epsilon\right) \in \mathcal{L}^{(2,2,3,3)}=\cP_i\bigcap\cP_j$ $\forall \epsilon \leq 4/7$. This implies that $\alpha p^{\RL,i}_\epsilon+(1-\alpha)p^{\RL,j}_\epsilon\in \cP_i\bigcup\cP_j$ $\forall \epsilon \leq 4/7$, $\alpha\in [0,1]$, i.e., the line segment $[p^{\RL,i}_\epsilon,p^{\RL,j}_\epsilon]$ is completely contained in the union of the corresponding polytopes $\cP_i\bigcup\cP_j$. Note that all other line segments $[v_i,v_j]$ with $v_i \in V_i$ and $v_j \in V_j$ are contained in $\cP_i\bigcup\cP_j$ by construction since at least one of $v_i$ or $v_j$ would be a local-deterministic vertex. Therefore, by Theorem~\ref{theorem: bemporad}, $\cP_i\bigcup\cP_j$ is convex $\forall i,j$ and $\epsilon \leq 4/7$. We can then apply Proposition~\ref{proposition: R1} and Theorem~\ref{theorem: bemporad} to the convex polytopes $\cP_i\bigcup\cP_j$ and $\cP_k$ and show that $\cP_i\bigcup\cP_j\bigcup\cP_k$ is convex $\forall i,j,k$ and $\epsilon \leq 4/7$. Proceeding in this way, we conclude that $\bigcup_{\cP_i\in \mathbb{P}}\cP_i$ is convex $\forall \epsilon \leq 4/7$, and hence $\bigcup_{\cP_j\in \mathbb{P}} \cP_j=\Conv(\bigcup_i V_j)=\Conv(V)=\Pi_\epsilon$.
\end{proof}

\end{document}